\documentclass[conference]{IEEEtran}

\usepackage{textcomp}
\usepackage{xcolor}

\usepackage{amssymb,amsfonts,bm}
\usepackage{graphicx}
\usepackage{color}
\usepackage{upgreek,enumitem}
\usepackage{algorithm}
\usepackage{algpseudocode}

\usepackage[utf8]{inputenc} 
\usepackage[T1]{fontenc}
\usepackage{url}
\usepackage{ifthen}
\usepackage{cite}
\usepackage[cmex10]{amsmath} % Use the [cmex10] option to ensure complicance
\newcommand{\be}{\begin{equation}}
\newcommand{\ee}{\end{equation}}
\newcommand{\ba}{\begin{align}}
\newcommand{\ea}{\end{align}}

\newtheorem{theorem}{Theorem}
\newtheorem{definition}{Definition}[section]
\newtheorem{remark}{Remark}[section]
\newtheorem{corollary}{Corollary}
\newtheorem{lemma}{Lemma}[section]
\newtheorem{claim}{Claim}[section]

\newenvironment{proof}[1][Proof]{\noindent\textit{#1:} }{\hfill$\blacksquare$\par}

\begin{document}

\title{Secret key-distribution over networks with\\ node-based  adversarial errors}

\author{
\IEEEauthorblockN{Reza Sayyari}
\IEEEauthorblockA{
% Department of Electrical Engineering\\
University at Buffalo\\
% Buffalo, NY, USA\\
\texttt{rezasayy@buffalo.edu}
}
\and
\IEEEauthorblockN{Michael Langberg}
\IEEEauthorblockA{
% Department of Electrical Engineering\\
University at Buffalo\\
% Buffalo, NY, USA\\
\texttt{mikel@buffalo.edu}
}
}

\maketitle

\begin{abstract}
We study the multiple key-cast problem, in the context of network coding, under active node-based adversaries.
In multiple key-cast, a source generates independent secret keys to be securely and reliably delivered to designated terminal subsets. 
In our model of study, the network adversary can observe \(\ell_o\) nodes, inject {\em additive} or {\em overwrite} errors into \(\ell_e\) nodes, and simultaneously observe and corrupt  \(\ell_{oe}\) nodes, while having full knowledge of the network topology and coding operations.
{Adversarial models of similar nature, however, where corruption and eavesdropping is done on {\em edges} instead of nodes, have} seen previous studies in the context of secure multicast network-coding.
The work at hand builds on and extends these studies to address the challenges in node-based adversaries in the context of (multiple) key distribution.

For single-source networks in which every node is $d$-vertex connected from the source, we show that perfectly secure multiple key-cast under both additive and overwrite error models is asymptotically achievable at the key-capacity of \(d - \ell_o - \ell_e - 2\ell_{oe}\). 
We then extend our analysis to networks where only terminal nodes satisfy this connectivity requirement, while intermediate nodes may be only partially connected. For these topologies, we develop coding schemes that achieve secure and reliable multiple key-cast capacities determined by the source vertex-connectivity and additional structural properties of the network.
Finally, we show that our results generalize to multi-source settings, ensuring perfect secrecy even if the adversary observes all but one source node, and establish that our constructions apply directly to secure multicast network coding and to network secret-sharing scenarios.
As part of our studies, we improve the security guarantee of a central scheme in [Zhang et al., IEEE Trans. on Comm., 2023] addressing parallel-edge networks, from weak-security to perfect-security.
\end{abstract}

\begin{IEEEkeywords}
Multiple Key-Cast, Network Coding, Adversarial Errors, Network Secret Sharing, Perfect Secrecy.
\end{IEEEkeywords}

\section{Introduction}
In modern distributed networks, shared randomness serves as a critical resource for secure communication and computation. From coordinating federated learning algorithms to securing the Internet of Things (IoT), operations such as encryption, distributed inference, randomized coding, and privacy-preserving computations rely heavily on nodes sharing uniform, secret keys \cite{langberg2022network, 10904071, 10482871, byrd2020differentially, bonawitz2017practical}. As these networks expand in scale and geographic dispersion, they frequently operate in hostile or unverified environments. In such settings, adversaries are rarely limited to passive eavesdropping. Instead, compromised nodes can actively disrupt communications by injecting malicious data, jamming links, and overwriting legitimate network traffic. Ensuring the secure and reliable distribution of keys under such node-based adversary attacks is a critical challenge in modern communication systems.

To address this vulnerability, this work studies the task of secure and reliable key dissemination over networks, specifically within the context of network coding \cite{ahlswede2000network, li2003linear}. We consider communication settings in which source nodes generate independent randomness that is disseminated throughout the network via designed coding operations. The objective is for disjoint subsets of terminal nodes to successfully decode common (secret) keys while overcoming an active, node-based adversary. 
When a single common key is distributed securely and reliably to all terminals, we refer to the problem as (single) key-cast; when distinct keys are distributed to multiple disjoint terminal sets, we use the term multiple key-cast.

The task of multiple key-cast is studied in a prior work of the authors \cite{eavsPaper} over a {\em passive} threat model in which the adversary only observes a subset of nodes. The adversary in the work at hand is {\em active} in the sense that it can corrupt the information traversing the network.
Namely, our threat model assumes an adversary that can observe (up to) $\ell_o$ nodes, inject error into (up to) $\ell_e$ nodes, and simultaneously observe and inject errors into (up to) $\ell_{oe}$ nodes. 
{The adversary strength is guided by the network topology in the sense that it} can decide its injected errors to a given vertex $v$ based on full knowledge of the network topology, the coding operations throughout the network, {and the observed nodes with topological order less than or equal to $v$.\footnote{In Section~\ref{SysM}, we present our detailed model of communication over acyclic networks.}}
{Our model of study fits traditional network coding protocols in which communication is done according to topological order.}
Defending against active adversaries requires robust coding strategies that not only guarantee secrecy but also correct or filter out malicious errors.

In our model of study, motivated by practical settings, the active adversary's capabilities vary across different parts of the network. On some compromised nodes it might be restricted to purely passive observation, on others it might blindly inject errors without seeing the incoming data, and on a third set of nodes it might simultaneously observe and corrupt the transmissions. {As seen in \cite{zhang2023optimal,bakshi2025optimal,9611518}}, and to be discussed in detail shortly, these variations in adversarial capabilities significantly impact the communication rate and connectivity requirements of the network.
Furthermore, the nature of the adversary's error injection capabilities at unobserved nodes plays a critical role in determining system vulnerability. In scenarios where the adversary has limited power, such as in typical wireless communication environments, it may only be capable of injecting additive errors. Under this model, the adversary adds a malicious error to the legitimate node's transmission {without observing the transmission itself}, corrupting the message without entirely eliminating the underlying signal. We refer to such errors as {\em additive} errors.
On the other hand, a more powerful adversary may be capable of {\em overwrite} errors, where it completely replaces the legitimate transmission with its own malicious data. This elevated level of control may impose more severe penalties on achievable rates and demand stricter network connectivity conditions. Therefore, in this work, we distinguish between additive and overwrite error models and provide comprehensive analyses for both. Note that for nodes where the adversary can simultaneously observe and inject errors (we assume $\ell_{oe}$ nodes), this distinction collapses; by observing the stored data, the adversary can precisely tailor an additive error to effectively overwrite the transmission.

Our main contributions in this work are as follows. First, we enhance the secure coding scheme originally proposed in \cite{zhang2023optimal} for parallel edge networks, upgrading its theoretical guarantee from weak (or strong) security to perfect security. In weakly or strongly secure schemes, the mutual information between the eavesdropper’s observations and the secret message only approaches zero asymptotically as the blocklength goes to infinity. In contrast, our modified approach guarantees perfect security, meaning this mutual information is exactly zero. Utilizing our enhanced scheme as a foundational tool, we then establish a number of results addressing network key-capacities and connectivity requirements for multiple key-cast against active, node-based adversaries. While our analysis focuses on the single-source setting, our methodology can be extended to the multi-source setting. For multiple sources, one can apply our scheme to independently disseminate keys using the randomness generated at each source, and subsequently combine these individual keys to construct the final shared key, similar to the strategy used in \cite[Corollaries 1 and 2]{eavsPaper}. This construction ensures that the final key remains perfectly secure even if the adversary observes all but one of the source nodes. Note that as single-source key-cast is equivalent to secure multicast network coding \cite{10904071}, the achievability results and network conditions established in this work hold equally for the secure and reliable multicast problem. Finally, we show that the proposed method directly extends to network secret sharing under the same adversarial model.

The remainder of this paper is organized as follows. Section~\ref{RelW} reviews a collection of related works. In Section~\ref{SysM}, we define the system model and introduce the terminology used throughout the paper. In Section~\ref{dCon}, we study networks in which all nodes are $d$-vertex connected from the source and present perfectly secure multiple key-cast schemes that achieve the network key-capacity.
In Section~\ref{Pcon}, we relax this connectivity requirement to accommodate intermediate nodes that may only be partially connected, presenting coding solutions whose established key-rate depends on the network vertex-connectivity from the source and certain additional network properties. 
{The proofs of the theorems and lemmas are provided in the Appendix.}
Finally, Section~\ref{Con} concludes the paper.

\section{Related Works}\label{RelW}
In this section, we review the literature most relevant to our multiple key-cast model.
% and highlight the theoretical gaps our work addresses. 
We organize our discussion into three primary areas. First, we examine secure network coding and multiple key-cast under passive adversarial models. Second, we review the framework of Byzantine network coding to explore reliable communication against active malicious attacks. 
{Finally, we discuss prior work on different notions of causality in the context of network coding and their relation to our adversarial model.}

\subsection{Secure Network Coding and (Multiple) Key-Cast with a Passive Adversary}
The task of key-cast is closely related to secure network coding under passive eavesdropping models \cite{cai2002secure,cai2010secure,feldman2004capacity}, where the objective is to reliably communicate specific source messages while guaranteeing zero information leakage to any adversary observing up to a bounded number of network nodes or edges. 
Treating a secret message as a secret key, it is observed in \cite{10904071} that single-source key-cast is equivalent to secure multicast network coding; however, the multi-source multiple key-cast setting fundamentally differs. 
In secure network coding, terminals must reconstruct predefined source messages, and the source nodes themselves are assumed safe from observation. Multiple key-cast, however, does not require the reconstruction of source information, and only requires terminals in the same subset to decode a common secret key that may contain a mixture of the randomness generated at source nodes. Because specific message reconstruction is not required, the key-cast model accommodates scenarios where source nodes themselves might be observed by the eavesdropper. Furthermore, there exist network instances in which secure network coding is impossible, while key-cast remains achievable (e.g., see \cite[Fig. 1]{eavsPaper}). While secure network coding against edge-based adversaries is well-understood (e.g., the secrecy capacity \cite{cai2002secure,cai2010secure} and achieving optimal rates via random linear network coding \cite{feldman2004capacity}), the more challenging node-based setting remains largely open. Existing node-based models are often restricted to unicast routing \cite{che2013routing}, rely on pre-shared keys \cite{zhang2013lightweight}, or assume only single, non-colluding eavesdroppers \cite{lima2007random, wang2016optimal}. For a more detailed survey on prior works in this context, we refer the reader to \cite{eavsPaper}.

To address the problem of multiple key-cast over networks, recent works draw connections to secure regenerating codes \cite{dimakis2010network, shah2011information} and distributed secret sharing \cite{shah2015distributed}. Originally designed for efficient data repair in distributed storage and secret sharing, these frameworks allow authorized nodes to reconstruct a common secret from distributed shares without exposing information to passive adversaries. Building on this perspective, \cite{10904071} adapts distributed secret sharing to construct a multiple key-cast protocol. Their scheme successfully disseminates keys over networks with a single source, keeping the keys perfectly secure from a passive eavesdropper capable of observing any single non-terminal node (i.e., in our notation, $\ell_o = 1$).
\cite{langberg2024characterizing} characterizes the network connectivity conditions required for positive-rate key-cast under a similar single-node threat model. Extending beyond the $\ell_o=1$ regime, prior work \cite{eavsPaper} of the authors generalizes the multiple key-cast protocol to accommodate eavesdroppers that control multiple colluding network nodes for general values of $\ell_o$, given some restrictions on the network connectivity. 
While all prior studies on the problem of key-cast address the setting of passive adversaries, 
the work at hand addresses networks threatened by active disruptions, providing coding solutions against adversaries that can simultaneously observe and inject malicious data across multiple nodes.

\subsection{Byzantine Network Coding and the Limited-View Adversary}
The problem of reliable and secure communication in the presence of active adversaries has been extensively studied in the network coding literature. Existing models can be classified according to whether the adversary controls network edges or nodes. In the edge-based setting, the adversary corrupts a subset of network edges, whereas in the node-based setting it compromises a subset of nodes, thereby gaining access to all incoming information and control over all outgoing edges. Therefore, node-based adversaries are significantly stronger threat models. In the following, we briefly review the literature on both edge-based and node-based adversaries, highlighting the challenges that motivate our study.

\subsubsection*{Edge-Based Adversary}
%, but does not address error correction. 
\cite{yeung2006network, cai2006network} establish the fundamental limits of reliable communication over networks with adversarial errors and show that an omniscient adversary capable of corrupting $\ell$ edges imposes a rate penalty of $2 \ell$, yielding a network equivalent of the classical Singleton bound. 
Here, an omniscient adversary is one that has access to all information traversing the network.
\cite{koetter2008coding} considers an omniscient adversary capable of observing all network transmissions and injecting worst-case errors and erasures, and introduces an abstract subspace-coding framework that encodes information as vector spaces, requires no knowledge of the network topology, and asymptotically achieves the optimal transmission rate of $C - 2\ell$, where $C$ denotes the network capacity.
Unlike an omniscient adversary, a limited-view adversary observes only a portion of the network traffic. When an adversary is limited-view, they cannot perfectly align injected errors with the overall transmitted codewords. Under this assumption, hashing and detection mechanisms can be leveraged to explicitly identify corrupted links, allowing the receiver to treat malicious injections as erasures and thereby reducing the capacity penalty. 
\cite{ho2004byzantine} studies Byzantine adversaries and proposes a probabilistic error-detection scheme, based on polynomial hashing, that succeeds in the restricted setting that receiver nodes have access to at least one uncorrupted packet that is unknown to the adversary.
\cite{jaggi2005correction} studies Byzantine network error correction and develops a coding scheme that enables reliable communication despite adversarial corruption of network edges. For parallel-edge networks, the scheme achieves rates approaching the network capacity when the adversary controls only a sufficiently small fraction of the network resources.
\cite{jaggi2007resilient} subsequently develop schemes for local Byzantine network error correction, achieving the optimal rates of $C-2\ell_{oe}$ and $C-\ell_e$ for omniscient and the limited-knowledge adversaries of \cite{jaggi2005correction}, respectively. 
Here, we use our notation $\ell_{oe}$ and $\ell_o$ mentioned previously.
Complementing the works mentioned above, a rich line of work has studied edge-based adversaries that can both eavesdrop and inject errors, e.g.,  \cite{yao2014network, hayashi2017secrecy, guo2018some, cai2019secure, hayashi2020reduction, hayashi2021asymptotically}. These studies characterize capacity limits, including under adaptive and time-varying corruption patterns, and develop secure network coding schemes that ensure both reliability against malicious injections and secrecy against eavesdropping.
Most relevant to the setting studied in this work, \cite{zhang2023optimal} uses a general threat model in which the adversary may eavesdrop on a specific subset of edges while injecting errors into another, with the possibility that these two subsets overlap, and characterize the optimal capacity and secrecy-capacity against such adversaries over parallel multipath networks, utilizing a combination of Maximum Distance Separable (MDS) codes and pairwise hashing.
Their construction, however, guarantees only asymptotic secrecy. \cite{9611518} extends this framework, using subspace-codes, to arbitrary network topologies with, as before, asymptotic secrecy guarantees.
The work at hand addresses node-based adversaries and requires schemes that provide perfect secrecy.
As part of our analysis, we modify the hashing mechanism of \cite{zhang2023optimal} to obtain a perfectly secure variant used as a building block for our multiple key-cast scheme.

\subsubsection*{Node-Based Adversary}
Despite the operational significance of node-based threat models in modern communication networks,
there are a limited number of works addressing active node-based adversaries.
\cite{kosut2014polytope} studies omniscient adversaries that compromise a subset of intermediate nodes at unknown locations. They introduce polytope codes, a class of non-linear codes that enable internal nodes to detect adversarial behavior through local consistency checks, and show that the cut-set bound is achievable for planar networks. 
Furthermore, \cite{tian2016arbitrarily} studies an omniscient, node-based threat model where the adversary knows all network traffic and can alter the data transmitted from compromised nodes.
To achieve network capacity, their construction combines random linear network coding with a verification mechanism that uses pre-shared randomness in each node to detect and discard corrupted packets. However, the scheme assumes the existence of these secure shared secrets and does not address how such randomness can be established in the presence of adversarial nodes.
In comparison to the above, this work addresses secure multiple key-cast against active node-based active adversaries by introducing a perfectly secure scheme that does not rely on pre-shared randomness and that can operate under general vertex-connectivity conditions.

\subsection{Causal adversarial models in Network Coding}
The literature features {a number of related}  formulations of causal adversaries, differing  in how the attacker's observations and interventions are temporally constrained. {\cite{nutman2008adversarial, kosut2014generalized, bakshi2025optimal} study a causal-omniscient edge-based adversarial model in which corrupted edge transmissions at a given time $t$ may rely on the adversary's view up-to time $t$ (\cite{nutman2008adversarial} also add a delay parameter $\Delta$).
%\cite{bakshi2025optimal} considers a parallel-edge network in which a limited-view adversary determines its jamming strategy based on all previously observed transmissions over the eavesdropped edges. 
\cite{hayashi2021asymptotically} take a slightly different approach, closest to ours, and assume an edge-based active adversary constrained by a topological ordering of the network links, deciding on the corrupted edge value based on the information transmitted on edges of lesser (or equal) topological order. 
All works above study a {\em multiple-shot} setting that allows to {\em hide} random bits governing the hash functions used at communication round $t$ from the adversary by transmitting these random bits at a later round.
As outlined in detail shortly in Section~\ref{SysM}, in this work, we follow the model of \cite{hayashi2021asymptotically} in a {\em one-shot} setting in which communication is considered to be done according to topological order and corrupted information leaving a given node $v$ may depend on {\em all}  information passing through nodes of lesser (or equal) topological order.
The one-shot model does not allow to hide randomness from the adversary in a way that the multiple-shot model does.
More precisely, we strengthen the adversary when compared to the one-shot variant of \cite{hayashi2021asymptotically} by allowing adversarial actions at node $v$ to depend on the information available at nodes that are not descendants of $v$.}

\section{System Model}\label{SysM}

\paragraph{Notation}
Matrices are denoted by uppercase boldface letters (e.g., $\boldsymbol{A}$), vectors by lowercase boldface letters (e.g., $\boldsymbol{a}$), and scalars by italic letters (e.g., $a$). For a matrix $\boldsymbol{A} \in \mathbb{F}_q^{m \times n}$ and a vector $\boldsymbol{a} \in \mathbb{F}_q^n$, let $[\boldsymbol{A}]_{i:j}$ denote the submatrix of $\boldsymbol{A}$ formed by columns $i$ through $j$, and $[\boldsymbol{a}]_{i:j}$ denote the subvector of $\boldsymbol{a}$ consisting of components $i$ through $j$. The neighborhood of a node $v$, denoted by $\mathcal{N}(v)$, is the set of nodes that are direct outgoing neighbors of $v$ (i.e., nodes connected by an edge from $v$). The set of parent nodes of a node $v$, denoted by $\mathcal{P}(v)$, is the set of nodes with direct edges leading into $v$.

\paragraph{Network and Adversary Model}
The network instance $\mathcal{I}=\left(\mathcal{V}, \mathcal{E}, \mathcal{B}, \mathcal{S}, \mathcal{T}\right)$ is represented by a directed acyclic graph (DAG)
$\mathcal{G} = (\mathcal{V}, \mathcal{E})$, 
where $\mathcal{V} = \{v_1,\dots,v_{|\mathcal{V}|}\}$ is the set of nodes,
$\mathcal{E}$ is the set of directed edges, 
$\mathcal{S} \subseteq \mathcal{V}$ is the set of source nodes, where each source $s \in {\mathcal S}$ holds an unlimited number of independent random symbols ${X}_s$ drawn from $\mathbb{F}_q$,
and $\mathcal{T} \subseteq \mathcal{V}$ is the set of terminal nodes.
The set $\mathcal{T}$ is partitioned as
$\mathcal{T}=\{\mathcal{T}_i\}_{i=1}^m$ into $m$ pairwise disjoint subsets $\mathcal{T}_1,\mathcal{T}_2,\dots,\mathcal{T}_m$,
where for each $i=1,\dots, m$, terminals in $\mathcal{T}_i$ decode the same key (distinct from the keys decoded in other subsets $\mathcal{T}_j$, $j \ne i$). All operations in the network are performed over a finite field $\mathbb{F}_q$ of size $q$. Every edge $e \in \mathcal{E}$ has unit capacity, i.e., each edge can carry a single symbol from $\mathbb{F}_q$. To model higher capacity edges, multiple (unit-capacity) edges may exist between two given nodes.
The security and reliability requirements are evaluated against a computationally unbounded active adversary. The scope of the adversary is defined by a collection $\mathcal{B}$ of admissible attack configurations. Each configuration $\beta \in \mathcal{B}$ is specified by a tuple of pairwise disjoint subsets of non-source nodes,
\(
\beta=(\beta_o,\beta_e,\beta_{oe}),
\)
where the sets $\beta_o$, $\beta_e$, and $\beta_{oe}$ denote the nodes that the adversary may observe, corrupt, and simultaneously observe and corrupt, respectively. 
For a given attack configuration $\beta\in\mathcal{B}$, let $\beta_{\mathrm{obs}}=\beta_o\cup\beta_{oe}$ denote the set of observed nodes and let $\beta_{\mathrm{jam}}=\beta_e\cup\beta_{oe}$ denote the set of nodes, in which the adversary can inject errors.

For any edge $e=(u,v) \in \mathcal{E}$, let $X_{u \to v}$ be the symbol transmitted by node $u$, and $Y_{u \to v}$ be the symbol received by node $v$. If $u \notin \beta_{\mathrm{jam}}$, the transmission is error-free ($Y_{u \to v} = X_{u \to v}$). If $u \in \beta_{\mathrm{jam}}$, the adversary can corrupt the transmission. We consider two active adversary models:
\begin{itemize}
    \item \textbf{Additive Jamming:} The adversary injects an error $Z_{u \to v} \in \mathbb{F}_q$, such that $Y_{u \to v} = X_{u \to v} + Z_{u \to v}$.
    \item \textbf{Overwrite Jamming:} The adversary replaces the transmitted symbol with $Z_{u \to v} \in \mathbb{F}_q$, such that $Y_{u \to v} = Z_{u \to v}$.
\end{itemize}
\

A defining feature of our threat model are the {topological constraints}  imposed on the adversary's actions. {The network $\mathcal{G}$} induces a natural partial order $\prec_{\mathcal{G}}$ on the vertex set $\mathcal{V}$, where $v \prec_{\mathcal{G}} u$ if and only if there exists a directed path from $v$ to $u$ ($v \ne u$).
The adversary is assumed to know the network topology, the coding scheme employed by the nodes, and can choose any compromised set $\beta \in \mathcal{B}$. 
{The adversary may} pool observations from all topologically concurrent nodes before committing to an injection.
Formally, for an error injection at a node $v \in \beta_{\mathrm{jam}}$, the adversary's valid observation set $\mathcal{O}(v)$ consists of all compromised observation nodes that are not descendants of $v$;
$ \mathcal{O}(v) \triangleq \{ u \in \beta_{\mathrm{obs}} \mid v \not\prec_{\mathcal{G}} u \}. $
The condition $v \not\prec_{\mathcal{G}} u$ ensures that $u$ 
{is not a descendant of $v$.}
When injecting an error $Z_v$ at node $v$, the adversary calculates its action solely as a function of the topology, the protocol, and the symbols observed in $\mathcal{O}(v)$.

\paragraph{Key-Codes}
For a transmission scheme, a \emph{key-code} $\mathcal{C} = (\mathcal{F}, \mathcal{G})$ consists of a collection of local encoding functions $\mathcal{F} = \{ f_e : e \in \mathcal{E} \}$ and decoding functions $\mathcal{G} = \{ g_{t} : t \in \mathcal{T}\}$. For each edge $e=(u,v) \in \mathcal{E}$, the transmitted message $X_{u \to v}$ is obtained by applying the encoding function $f_e$ to the information ${X}_u$ available at node $u$. 
More precisely, for a generic node $u \in \mathcal{V}\setminus \mathcal{S}$, the set of available inputs is given by
\(
{X}_{u} = \big( (Y_{w \to u} : (w,u) \in \mathcal{E}) \big),
\)
which captures all potentially corrupted symbols received by $u$. 
The information available at a source $s$ includes the randomness ${X}_s$ discussed previously.
Throughout the paper, we assume that source nodes have no incoming edges and terminal nodes have no outgoing edges.
Communication proceeds according to a topological ordering.

\paragraph{Secure Multiple Key-Cast Instance}
An instance ${\mathcal{I}=\left(\mathcal{V}, \mathcal{E}, \mathcal{B}, \mathcal{S}, \mathcal{T}\right)}$ allows secure multiple key-cast with error probability $\epsilon$ if there exists a key-code that, for any adversarial configuration $\beta \in \mathcal{B}$, enables a collection of keys $\{K_1, K_2, \dots, K_m\}$ to be transmitted, where each key $K_i$ is intended for terminal set $\mathcal{T}_i \subseteq \mathcal{T}$, such that the following two conditions are satisfied:
\begin{itemize}
    \item \textbf{Reliability:} For any $\beta \in \mathcal{B}$, each terminal $t \in \mathcal{T}_i$ can recover its corresponding key $K_i$ with a decoding error probability bounded by $\epsilon$. Specifically, if $\hat{K}_i = g_t(Y_t)$ is the estimated key at terminal $t$, then:
    \[
        \forall i \in \{1,\dots,m\},\ \forall t \in \mathcal{T}_i, \quad \Pr(\hat{K}_i \neq K_i) \le \epsilon.
    \]
    
    \item \textbf{Perfect Secrecy:} For each $i \in \{1,\dots,m\}$ and for any $\beta \in \mathcal{B}$, such that $\beta_{\mathrm{obs}} \cap \mathcal{T}_i =\phi$, the adversary obtains zero information about $K_i$ from all observed nodes. Equivalently,
    \(
        I\!\left(K_i ; \{X_v : v \in \beta_{\mathrm{obs}}\} \right) = 0 .
    \)
\end{itemize}

\begin{definition}[Blocklength]
The blocklength, denoted by $N$, is the number of independent network uses of a coding scheme. Under blocklength $N$, all transmissions over any edge $e \in \mathcal{E}$ are vectors in $\mathbb{F}_q^{N}$.
\end{definition}

\begin{definition}[Key-Rate]
    For $i \in \{1,\dots,m\}$, let $K_i$ denote the key decoded at the terminals of $\mathcal{T}_i \in \mathcal{T}$ and let $N$ denote the code blocklength. The network key-rate $R$ is defined as
    $$
    R = \min_{i \in \{1,\dots,m\}}{\frac{H(K_i)}{N}}.
    $$
    where $H(\cdot)$ denotes the Shannon entropy computed using base $q$ logarithms (i.e., measured in units of $\mathbb{F}_q$ symbols), such that a uniformly distributed element in $\mathbb{F}_q$ has an entropy of $1$.
\end{definition}

\begin{definition}[Key-Capacity]
Instance $\mathcal{I}=\left(\mathcal{V}, \mathcal{E}, \mathcal{B}, \mathcal{S}, \mathcal{T}\right)$ has key-capacity $C$ if for any $\epsilon > 0$ and $\gamma > 0$ there exists a finite blocklength $N$ and a corresponding key-code $\mathcal{C}=  (\mathcal{F}, \mathcal{G})$ achieving key-rate $R \ge C - \gamma$ with decoding error at most $\epsilon$.
\end{definition}

\begin{definition}[$d$-Vertex Connected]
    A node $v \in \mathcal{V}$ is said to be $d$-vertex connected from a source $s \in \mathcal{S}$ if for any integer $0 \leq b \leq d$,
    there exist at least $b$ vertex-disjoint paths from $s$ to $v$ and at least $d-b$ edges connecting $s$ to $v$.
    For any $v \notin \mathcal{N}(s)$, being $d$-vertex connected implies that the removal of any set of fewer than $d$ intermediate nodes (excluding $s$ and $v$) does not disconnect $v$ from $s$.
\end{definition}

Throughout, we associate with each node $v_i \in \mathcal{V}$ a distinct element $\alpha_i \in \mathbb{F}_q \setminus \{0\}$ and a corresponding Vandermonde vector.
\begin{definition}[Vandermonde Vector]
    The Vandermonde vector of size $d$ assigned to node $v_i$ is defined as  
    $$
    \boldsymbol{v}_i = \big(1, \alpha_i, \alpha_i^2, \ldots, \alpha_i^{d-1}\big) \in \mathbb{F}_q^d,
    $$  
    where $\{\alpha_1, \dots, \alpha_{|\mathcal{V}|}\}$ are distinct elements strictly chosen from the non-zero elements of the underlying field $\mathbb{F}_q \setminus \{0\}$.
\end{definition}

\section{Multiple Key-Cast in $d$-vertex Connected Network}\label{dCon}
In this section, we consider network instances in which every node is $d$-vertex connected from every source in $\mathcal S$.
First, building upon the framework of \cite{zhang2023optimal}, we establish that any node that is $d$-vertex connected from a source $s$ can reliably and securely receive information in the presence of an active adversary. Our construction refines the hashing of \cite{zhang2023optimal} to upgrade its asymptotic weak and strong secrecy guarantees to perfect secrecy.
We then propose a perfectly-secure key-cast scheme that achieves the key-capacity $d-\ell_o-\ell_e-2\ell_{oe}$ against active adversaries.
We extend this result to the multiple-source setting. In this setting, source nodes may be observed by the adversary but are assumed to be safe from active corruption. (If an adversary were to actively corrupt a source node, it could, for example, mask all outgoing information using independent randomness and, as such, remove the corresponding source-generated information from the suggested coding scheme.)
Finally, we observe that our proposed construction extends naturally to the network secret sharing problem \cite{shah2015distributed}, implying that the capacity results derived in this section remain valid in that setting as well.
{We note that the proof of Lemma~\ref{lem:parallel_edge_active_secure} below assumes the adversarial model of \cite{zhang2023optimal} in which the adversary can base its corruptions on the information available to any node in the set $\beta_{\mathrm{obs}}$ irrelevant of its topological order. Thus, Lemma~\ref{lem:parallel_edge_active_secure} extends the results of  \cite{zhang2023optimal} to strong-security. The theorems that follow use the adversarial model described in Section~\ref{SysM}.}

\begin{lemma}[Secure transmission to a $d$-vertex connected node]\label{lem:parallel_edge_active_secure}
Consider a network instance $\mathcal{I}=(\mathcal{V},\mathcal{E},\mathcal{B},\mathcal{S}=\{s\},\mathcal{T}=\{t\})$ with
$$
{\begin{aligned}
\mathcal{B} = \Big\{ &(\beta_o, \beta_e, \beta_{oe}) \mid \text{disjoint}\ \beta_o, \beta_e, \beta_{oe} \subset \mathcal{V} \setminus \{s\}, \\
& |\beta_o| \le \ell_o, \, |\beta_e| \le \ell_e, \, |\beta_{oe}| \le \ell_{oe}
%, \\
%& \text{$(\beta_o, \beta_e, \beta_{oe})$ mutually disjoint}
\Big\}.
\end{aligned}}
$$
The secure message capacity from the source $s$ to the terminal $t$ is given by
$$
C_{\mathrm{add}} =
\begin{cases}
d-\ell_o-\ell_e-2\ell_{oe}, & \text{if } d>\ell_o+\ell_e+2\ell_{oe},\\[2ex]
0, & \text{otherwise,}
\end{cases}$$
under the additive jamming model, and
$$
C_{\mathrm{ow}} =
\begin{cases}
d-\ell_o-\ell_e-2\ell_{oe}, & \text{if } d>\ell_o+2\ell_e+2\ell_{oe},\\[2ex]
0, & \text{otherwise,}
\end{cases}$$
under the overwrite jamming model.
\end{lemma}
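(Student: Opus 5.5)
The plan is to first reduce the $d$-vertex connected instance to a parallel-path (multipath) network and then run a perfectly secure variant of the scheme of \cite{zhang2023optimal} over it. By the definition of $d$-vertex connectivity (taking $b$ maximal), there is a collection of $d$ routes from $s$ to $t$, each either a direct edge or an internally vertex-disjoint path. Each intermediate node lies on at most one route. The nodes simply forward along their route and ignore everything else.

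Under this reduction, a node-based configuration $(\beta_o,\beta_e,\beta_{oe})$ touches at most $\ell_o$ routes passively, at most $\ell_e$ routes with blind errors, and at most $\ell_{oe}$ routes with observed errors. This is exactly the edge-based multipath threat model of \cite{zhang2023optimal}. Forwarding over a path means an additive error on a path node is an additive error on that path's end-to-end symbol, and likewise for overwrite. Since the lemma allows the adversary to use all of $\beta_{\mathrm{obs}}$ regardless of topological order, it is no weaker than their model, and no causality issue arises. The converse then follows from the upper bounds of \cite{zhang2023optimal} applied to the parallel-edge network formed by $d$ edges. That network is itself an instance of our model, obtained by subdividing each edge with one intermediate node, so the same bounds hold here.

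For achievability I will reproduce the structure of the scheme in \cite{zhang2023optimal}. The message is mixed with uniform keys and encoded with an MDS or Vandermonde-type secret-sharing code across the $d$ paths; the shared randomness hides the message from any $\ell_o+\ell_{oe}$ observed paths. Blind-corrupted paths are caught by pairwise hashes: for each ordered pair of paths $(i,j)$, path $i$ carries a random key $r_{ij}$ together with the hash value $h_{r_{ij}}(\text{payload}_j)$, a polynomial evaluation $\sum_k [\text{payload}_j]_k r_{ij}^k$. The terminal builds a consistency graph, discards paths in small cliques, and decodes the remaining MDS codeword. Blind errors are detected with probability $\ge 1-\epsilon$ when $N$ is large and the payload length is small relative to $q^N$. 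Each of the $\ell_e$ detected errors then costs only one unit, as an erasure, while the $\ell_{oe}$ omniscient errors cost two units. This gives the rate $d-\ell_o-\ell_e-2\ell_{oe}$, with the hash and key overhead amortized away.

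Under overwrite, an error node can substitute a fully self-consistent fake payload with its own hash keys. A majority-style argument then requires the honest paths to outnumber the $\ell_e+\ell_{oe}$ corrupted ones appropriately, which is where the condition $d>\ell_o+2\ell_e+2\ell_{oe}$ comes from. I will follow the clique/consistency argument of \cite{zhang2023optimal} for that case.

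The main obstacle is upgrading secrecy to \emph{perfect} secrecy. In \cite{zhang2023optimal} the hash values sent on observed paths leak a vanishing amount of information about the payloads of other paths, and hence about the message. My fix is to make every hash value one-time-padded, or to generate the hash keys and the MDS randomness so that the joint view of any $\ell_o+\ell_{oe}$ paths is exactly independent of the message. Concretely, the secret-sharing layer will encode the message together with fresh uniform symbols of total dimension at least $\ell_o+\ell_{oe}$ plus the number of hash symbols visible on observed paths. Each hash on path $i$ about payload $j$ is additionally masked by an independent pad placed only on path $j$, and the terminal removes the pad. The delicate step is checking that $I(K;\text{view})=0$ exactly, using the Vandermonde full-rank property, while blind adversaries still cannot forge hashes they cannot see; I will verify this by writing the observed view as a linear function of (message, randomness) and showing that its message component lies in the span of the randomness component.
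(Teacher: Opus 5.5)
Your architecture is the paper's: reduction to $d$ parallel routes, the code $[\boldsymbol{X}\ \boldsymbol{R}]\boldsymbol{V}$ with $d'=d-\ell_e-\ell_{oe}$ and $\ell'=\ell_o+\ell_{oe}$ random columns, pairwise polynomial hashes, clique decoding, and the converse of \cite{zhang2023optimal}. Your pad carried on the partner path is exactly the paper's $\gamma_{j,i}$, and it already gives perfect secrecy, so the extra secret-sharing randomness you plan is unnecessary.

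The gap is the additive threshold $d>\ell_o+\ell_e+2\ell_{oe}$. In your hash, the key $r_{ij}$ that authenticates payload $j$ lives only on the checking path $i$, and the hash is linear in the payload. Suppose $i\in\beta_o$ and $j\in\beta_e$.
\begin{itemize}
\item A blind additive jammer knows $r_{ij}$. It adds any $\boldsymbol{u}'_j\neq\boldsymbol{0}$ to the payload on path $j$ and adds $-\sum_k[\boldsymbol{u}'_j]_k r_{ij}^k$ to the pad it carries on path $j$.
\item The check then passes with probability one, without any knowledge of $\boldsymbol{u}_j$ or of the pad. The reverse check, about the untouched $\boldsymbol{u}_i$, passes trivially.
\item The same attack works against every path in $\beta_{oe}$.
\end{itemize}
So a path in $\beta_e$ is rejected only by honest paths, which the terminal cannot identify. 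Blind errors are therefore not converted into erasures, a fake clique may contain $\beta_o\cup\beta_e\cup\beta_{oe}$, and your clique argument only yields the overwrite condition $d>\ell_o+2\ell_e+2\ell_{oe}$.

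Concretely, take $d=3$, $\ell_o=\ell_e=1$, $\ell_{oe}=0$, where the lemma gives additive capacity $1$ and $d'=2$. Let path $1$ be observed, path $2$ jammed and path $3$ honest. The jammer can make the consistency graph consist of the edges $\{1,2\},\{1,3\}$ only, by also spoiling the hash path $2$ carries about $\boldsymbol{u}_3$. This graph is symmetric between paths $2$ and $3$, and two columns of a dimension-$2$ MDS code carry no redundancy to break the tie.

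What is missing is the paper's self-consistency check. Each path carries a hash $h_{i,i}$ of its own payload, keyed by $\alpha_{i,i},\gamma_{i,i}$, which appear nowhere else. A jammer in $\beta_e$ therefore cannot learn them, whichever other paths it observes. This removes $\beta_e$ from $\mathcal{H}$ before the clique search and confines any fake clique to $\beta_o\cup\beta_{oe}$, giving $d-|\beta_e|-|\beta_{oe}|>|\beta_o|+|\beta_{oe}|$.

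This check must also survive tampering with the key itself, which a plain hash $\sum_k[\boldsymbol{u}]_k r^k$ does not. A jammer may know the payload polynomial $P$, for example when it observes $\ell'$ paths and the message is known, which reveals $\boldsymbol{R}$ and hence all of $\boldsymbol{U}$. It then shifts the key by $r'$, replaces the payload by the one with polynomial $P(y-r')-P(-r')$, and sets the hash and pad offsets so their difference is $-P(-r')$. This passes deterministically. The paper's extra monomial $(\cdot)^{n+2}$ prevents the attack: a key shift $\alpha'_{i,i}\neq 0$ contributes the payload-independent leading term $(n+2)\,2\alpha'_{i,i}(2\alpha_{i,i})^{n+1}$. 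The check then passes with probability at most $(n+1)/q$.
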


\begin{proof}[Proof sketch]
{We provide only a proof sketch here and defer the full proof to Appendix~\ref{LemmaProof}.}
We first observe that reliable and secure communication over $d$ vertex-disjoint paths under a node-based adversary is structurally equivalent to communication over a $d$-parallel-edge network under an edge-based adversary. This equivalence connects our setting to the model studied in \cite{zhang2023optimal}. However, achieving the stated rates with perfect secrecy requires modifying the hashing scheme therein, which guarantees only weak security. Specifically, we refine the hashing construction so that the resulting hash reveals no information about the data carried on other edges, thereby upgrading the weak (and strong) secrecy guarantees of \cite{zhang2023optimal} to perfect secrecy. Roughly speaking, compared to the linear hashing used in \cite{zhang2023optimal}, our mechanism employs a polynomial hash coupled with a one-time pad. By operating over a sufficiently large finite blocklength, the fractional overhead of the hashing and padding becomes asymptotically negligible, allowing the transmission rate to approach the secure message capacity with decoding-error decaying with the underlying field size.
\end{proof}

\begin{remark}[Tight Capacity Characterization]
\label{rem:asymptotic_optimality}
The secure coding schemes established in Lemma~\ref{lem:parallel_edge_active_secure}, and Theorem~\ref{T1} that follows, obtain asymptotically optimal rates. Namely, slight modifications to the converses in \cite{zhang2023optimal} (to address the node-based adversarial setting studied in this work) establish 
{the required bounds.}
\end{remark}

\begin{algorithm}[t]
\caption{Multiple Key-Cast in $d$-vertex conn. networks}
\label{alg:keycast_active}
\begin{algorithmic}[1]
\State Source node $s$ generates $n$ symmetric random matrices $\boldsymbol{M}^{(1)}, \dots, \boldsymbol{M}^{(n)} \in\mathbb{F}_q^{d'\times d'}$, where each entry in the upper triangular part is chosen independently and uniformly from $\mathbb{F}_q$.

\State \textbf{Pre-distribution:} Using Lemma~\ref{lem:parallel_edge_active_secure}, for each node $j \in \mathcal{V}\setminus\{s\}$, source $s$ securely transmits a random value $\alpha_j \in \mathbb{F}_q$ and polynomial hashes $h_{j_p \to j}$ for all $j_p \in \mathcal{P}(j)$.

\ForAll{$j\in\mathcal{N}(s)$}
    \State $s$ sends $\left[ \boldsymbol{v}_j^T \boldsymbol{M}^{(1)}\boldsymbol{V}_{s\to j}, \dots, \boldsymbol{v}_j^T \boldsymbol{M}^{(n)}\boldsymbol{V}_{s\to j} \right]^T$ to node $j$
\EndFor

\ForAll{nodes $j\in\mathcal{V}\setminus\{s\}$ in topological order}
    \State Each non-source parent node $j_p\in\mathcal{P}(j)$ sends data vector $\boldsymbol{s}_{j_p \to j} = \left[ \boldsymbol{v}_j^T\boldsymbol{M}^{(1)}\boldsymbol{v}_{j_p}, \dots, \boldsymbol{v}_j^T\boldsymbol{M}^{(n)}\boldsymbol{v}_{j_p} \right]^T$ to  $j$
    \State  \textbf{Local error-correction:} Node $j$ verifies incoming vectors against $h_{j_p \to j}$ using $\alpha_j$, marking corrupted transmissions as erasures
    \State Node $j$ concatenates received symbols and recovers share $\boldsymbol{S}_j = \bigl[\boldsymbol{M}^{(1)}\boldsymbol{v}_j, \dots, \boldsymbol{M}^{(n)}\boldsymbol{v}_j\bigr]$ via MDS decoding
\EndFor

\ForAll{$i \in [m]$ and $t\in\mathcal{T}_i$}
    \State Terminal $t$ outputs
    \[
    \boldsymbol{K}_{\mathcal{T}_i}
    =
    \bigl[\boldsymbol{M}^{(1)} \boldsymbol{v}_{\mathcal{T}_i}, \dots, \boldsymbol{M}^{(n)} \boldsymbol{v}_{\mathcal{T}_i}\bigr]_{1:d'-(\ell_o+\ell_{oe})}
    \]
\EndFor
\end{algorithmic}
\end{algorithm}

\begin{theorem}[Single-source setting]\label{T1}
Consider an instance $\mathcal{I}=\bigl(\mathcal{V},\mathcal{E},\mathcal{B},\mathcal{S}=\{s\},\mathcal{T}\bigr)$ of the Secure Multiple Key-Cast problem with
$$
{\begin{aligned}
\mathcal{B} = \Big\{ &(\beta_o, \beta_e, \beta_{oe}) \mid \text{disjoint}\ \beta_o, \beta_e, \beta_{oe} \subset \mathcal{V} \setminus \{s\}, \\
& |\beta_o| \le \ell_o, \, |\beta_e| \le \ell_e, \, |\beta_{oe}| \le \ell_{oe} 
\Big\}.
\end{aligned}
}
$$
Suppose that every node in the network is $d$-vertex connected from the source node $s$. Then, the key-capacity under the additive jamming model is given by
$$C_{\mathrm{add}} =
\begin{cases}
d-\ell_o-\ell_e-2\ell_{oe}, & \text{if } d>\ell_o+\ell_e+2\ell_{oe},\\[3ex]
0, & \text{otherwise,}
\end{cases}$$
and the key-capacity under the overwrite jamming model is given by$$C_{\mathrm{ow}} =
\begin{cases}
d-\ell_o-\ell_e-2\ell_{oe}, & \text{if } d>\ell_o+2\ell_e+2\ell_{oe},\\[3ex]
0, & \text{otherwise.}
\end{cases}$$
\end{theorem}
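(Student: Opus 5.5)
We argue achievability through Algorithm~\ref{alg:keycast_active} and obtain the converse from the single-terminal bound of Lemma~\ref{lem:parallel_edge_active_secure}, as Remark~\ref{rem:asymptotic_optimality} indicates. For the converse, fix any $t\in\mathcal{T}_i$. A key-code for $\mathcal{I}$ restricted to the pair $(s,t)$ lets the source and $t$ share a uniform key that is secret from, and robust against, every $\beta\in\mathcal{B}$. Since $s$ can one-time-pad a message with such a key, the key-rate cannot exceed the secure message capacity of Lemma~\ref{lem:parallel_edge_active_secure}, whose converse already holds for $t$. Equivalently, key-cast reduces to secure message transmission, because $s$ may choose the key itself. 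This gives $C\le d-\ell_o-\ell_e-2\ell_{oe}$, and $C=0$ below the respective thresholds.

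For achievability, set $d'=d-\ell_e-\ell_{oe}$ under the additive model, and handle overwrites by reserving enough redundancy that the MDS decoding in Algorithm~\ref{alg:keycast_active} corrects them. The steps are as follows.
\begin{enumerate}
\item In the pre-distribution phase, use Lemma~\ref{lem:parallel_edge_active_secure} to send each node $j$ a secret evaluation point together with one-time-padded polynomial hashes of the vectors $\boldsymbol{s}_{j_p\to j}$ its parents will send. This costs only $O(|\mathcal{E}|)$ symbols, independent of $n$, so the overhead vanishes as $n\to\infty$. The transmission is perfectly secret, and it is reliable with error probability $O(n/q)$ in the polynomial-hash degree.
\item In the main phase, node $j$ receives $\boldsymbol{v}_j^T\boldsymbol{M}\boldsymbol{v}_{j_p}=\boldsymbol{v}_{j_p}^T\boldsymbol{M}\boldsymbol{v}_j$ from its parents. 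Since $j$ is $d$-vertex connected from $s$, $j$ has at least $d$ distinct parent evaluations, counting edges from $s$, which carry the columns $\boldsymbol{V}_{s\to j}$.
\item An observe-only corrupted parent cannot see the hash key, so its errors are detected and treated as erasures. An $oe$ parent can be only partially detected, so it is treated as an error.
\item Because the vectors $\boldsymbol{v}$ are Vandermonde, any $d'$ of them are independent. Node $j$ therefore recovers the share $\boldsymbol{M}\boldsymbol{v}_j$ whenever (number of erasures) $+\,2\,$(number of undetected errors) $\le d-d'$.
\item Reliability then holds for every node by induction in topological order, with a union bound over nodes and hashes.
\end{enumerate}

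For secrecy, the adversary observes $\ell_o+\ell_{oe}$ nodes. Each node's content is a deterministic function of its share $\boldsymbol{M}\boldsymbol{v}_u$ and its pre-distributed secrets, and these secrets are independent of $\boldsymbol{M}$ by the perfect secrecy of Lemma~\ref{lem:parallel_edge_active_secure}. By the standard analysis of the product-matrix secret-sharing construction, with $\boldsymbol{M}$ symmetric of size $d'\times d'$, the shares of $\ell_o+\ell_{oe}$ nodes together with the jammed inputs reveal nothing about $d'-(\ell_o+\ell_{oe})$ suitably chosen coordinates of $\boldsymbol{M}\boldsymbol{v}_{\mathcal{T}_i}$. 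We would show this by exhibiting the observed shares as a full-rank linear image of $\boldsymbol{M}$, and then showing that the joint matrix formed by the observed shares and the key coordinates has rank equal to the sum of the two ranks.

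Jamming does not affect secrecy, because each injected error is a function of observations only. This holds under the topological-order adversary model, since errors are computed from $\mathcal{O}(v)$, a subset of the observations. The resulting key-rate is $d'-\ell_o-\ell_{oe}=d-\ell_o-\ell_e-2\ell_{oe}$.

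The main obstacle is twofold. The first part is the overwrite threshold: an overwrite at an unobserved node may go undetected, so blind overwrites must be counted as errors, which is where the condition $d>\ell_o+2\ell_e+2\ell_{oe}$ comes from. The second part is verifying that the key coordinates are independent of the observed shares even when the terminal's key vector $\boldsymbol{v}_{\mathcal{T}_i}$ must be shared across a terminal set. We would handle the second part as in \cite{eavsPaper}, by choosing the key as the first $d'-(\ell_o+\ell_{oe})$ coordinates and using genericity of the Vandermonde points over a large field $q$.
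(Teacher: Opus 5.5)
\textbf{The gap is in your error/erasure accounting for the main phase.} In step 3 you treat parents in $\beta_{oe}$ as undetected errors, and under overwriting you treat all jammed parents that way. In step 4 you then require (number of erasures) $+\,2\,$(number of undetected errors) $\le d-d'$.
\begin{itemize}
\item Additive case: with $d'=d-\ell_e-\ell_{oe}$ this reads $\ell_e+2\ell_{oe}\le \ell_e+\ell_{oe}$, which fails as soon as $\ell_{oe}\ge 1$.
\item Overwrite case: you would need $2\ell_e+2\ell_{oe}\le d-d'$.
\end{itemize}
Adding enough redundancy for your decoder forces $d'=d-\ell_e-2\ell_{oe}$ (resp.\ $d'=d-2\ell_e-2\ell_{oe}$). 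That gives key-rate $d-\ell_o-\ell_e-3\ell_{oe}$ (resp.\ $d-\ell_o-2\ell_e-3\ell_{oe}$), strictly below $d-\ell_o-\ell_e-2\ell_{oe}$. So your closing identity ``$d'-\ell_o-\ell_{oe}=d-\ell_o-\ell_e-2\ell_{oe}$'' contradicts your own decoding condition. Your explanation of where the overwrite threshold comes from is also not right.

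\textbf{The missing idea is how the topological adversary model acts in the main phase.} When the adversary picks its injection at a parent $j_p$ of $j$, it may use only $\mathcal{O}(j_p)$. Since $j_p\prec_{\mathcal{G}} j$, we have $j\notin\mathcal{O}(j_p)$. Moreover, $\alpha_j$ reaches $j$ through Lemma~\ref{lem:parallel_edge_active_secure} with perfect secrecy. Hence $\alpha_j$ is uniform and independent of the injected vector in every case: $j_p\in\beta_e$ or $j_p\in\beta_{oe}$, additive or overwrite.

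The hash $h_{j_p\to j}=\sum_\tau[\boldsymbol{s}_{j_p\to j}]_\tau\alpha_j^\tau$ is linear in the payload. So a forgery $\hat{\boldsymbol{s}}\neq\boldsymbol{s}$ passes only if $\sum_\tau[\hat{\boldsymbol{s}}-\boldsymbol{s}]_\tau\alpha_j^\tau=0$, which happens with probability at most $n/q$. Knowing $\boldsymbol{s}_{j_p\to j}$, as an $oe$ node does, gives no advantage.

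Consequently every harmful corruption becomes an erasure, and node $j$ faces at most $\ell_e+\ell_{oe}$ erasures and no errors. The $(d,d')$ Vandermonde MDS code with $d_{\min}=\ell_e+\ell_{oe}+1$ handles this exactly. That is what makes $d'=d-\ell_e-\ell_{oe}$ and key-rate $d'-\ell_o-\ell_{oe}$ work under both jamming models.

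The overwrite threshold $d>\ell_o+2\ell_e+2\ell_{oe}$ enters only through the pre-distribution step, via the clique argument of Lemma~\ref{lem:parallel_edge_active_secure}, where blind overwrites can pass the self-consistency check.

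\textbf{A smaller point on your converse.} Lemma~\ref{lem:parallel_edge_active_secure} is proved for the stronger, non-topological adversary. An upper bound against a stronger adversary does not automatically bound the capacity against the weaker topological one. The paper also only asserts the converse, via Remark~\ref{rem:asymptotic_optimality}.
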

\begin{proof}[Proof sketch]
{We provide only a proof sketch here and defer the full proof to Appendix~\ref{T1Proof}.}
At a high level, our communication scheme applies the key-cast protocol of \cite{eavsPaper} designed for passive adversaries to provide perfect secrecy, combined with hash-based local error-correction applied at each network node to provide reliability. To distribute the necessary hash values securely, we apply Lemma~\ref{lem:parallel_edge_active_secure}. The primary challenge lies in proving that this combined scheme simultaneously preserves reliability and perfect secrecy against an active, node-based adversary.

We apply Algorithm~\ref{alg:keycast_active} with $d' = d - (\ell_e + \ell_{oe})$. 
{(Note that, to simplify our presentation, Algorithm~\ref{alg:keycast_active} is described in a sequential manner, while in our model of communication transmissions are done according to topological order in which all transmissions leaving a given node are done simultaneously.)} 
Each non-terminal node $v_i\in\mathcal{V}$ is assigned a Vandermonde vector $\boldsymbol{v}_i \in \mathbb{F}_q^{d'}$ indexed by $i$, and all terminals in the same set $\mathcal{T}_i\subseteq\mathcal{T}$ share the same Vandermonde vector $\boldsymbol{v}_{\mathcal{T}_i}$. For each terminal set $\mathcal{T}_i$, the key is 
$$\boldsymbol{K}_{\mathcal{T}_i} = \bigl[\boldsymbol{M}^{(1)} \boldsymbol{v}_{\mathcal{T}_i}, \dots, \boldsymbol{M}^{(n)} \boldsymbol{v}_{\mathcal{T}_i}\bigr]_{1:d'-(\ell_o+\ell_{oe})}\; .$$
During a pre-distribution phase, since every node is $d$-vertex connected from the source $s$, Lemma~\ref{lem:parallel_edge_active_secure} ensures that $s$ can securely transmit a uniformly distributed random value $\alpha_j$ and certain polynomial hashes $\{h_{j_p \to j}\}_{j_p\in \mathcal{N}(j)}$ to each node $j$.
Following the key-cast protocol in \cite{eavsPaper}, let $0\le b_j\le d$ be the number of edges from $s$ to $j$. The source $s$ sends $b_j$ vectors to node $j$ using a Vandermonde matrix $\boldsymbol{V}_{s\to j}\in\mathbb{F}_q^{d'\times b_j}$, whose columns are Vandermonde vectors constructed from distinct field elements unused elsewhere in the network. Node $j$ receives the remaining $d-b_j$ vectors from its non-source parents $j_p\in\mathcal{P}(j)$ inductively according to a topological order on the network. 
Node $j$ uses $\alpha_j$ to verify incoming vectors against the pre-distributed hashes and
corrupted transmissions are reliably detected and marked as erasures. 
Let $\boldsymbol{V}_j\in\mathbb{F}_q^{d'\times d}$ be the matrix whose first $b_j$ columns are $\boldsymbol{V}_{s\to j}$ and whose remaining columns are $\{\boldsymbol{v}_{j_p}\}_{j_p\in\mathcal{P}(j)}$. Due to our assumptions on $d$-vertex connectivity and the distinct nature of the Vandermonde vectors, any $d'$ columns of $\boldsymbol{V}_j$ are linearly independent. Therefore, $\boldsymbol{V}_j$ serves as the generator matrix of an MDS code. 
The minimum distance of this code is $d_{\min} = d - d' + 1 = \ell_e + \ell_{oe} + 1$. As corrupted messages are converted to erasures via our local hash checks, this distance guarantees the capability to correct up to $d_{\min} - 1 = \ell_e + \ell_{oe}$ erasures \cite{macwilliams1977theory}. 

To ensure reliability, we bound the overall probability of error.
The overall error probability consists of two components: the error introduced by applying Lemma~\ref{lem:parallel_edge_active_secure} at each node, and the error caused by failing to correctly identify corrupted edges within a node's parent set. 
Both errors can be made arbitrarily small using a sufficiently large field size (detailed derivations are provided in Appendix~\ref{T1Proof}).
Perfect security follows from a delicate analysis of (the lack of) information leakage when performing multiple rounds of the protocol described in the proof of Lemma~\ref{lem:parallel_edge_active_secure} (for the pre-distribution phase of Algorithm~\ref{alg:keycast_active}) followed by the remaining  steps of Algorithm~\ref{alg:keycast_active}. As each round of  Lemma~\ref{lem:parallel_edge_active_secure}'s protocol, and the remaining steps of Algorithm~\ref{alg:keycast_active}, use independent randomness for masking,  the combined procedure is shown to be perfectly-secure (details in Appendix~\ref{T1Proof}).
\end{proof}

\begin{remark}
It is worth noting that under a passive adversary model, the execution of Algorithm~\ref{alg:keycast_active} for a single terminal set is equivalent to the network secret-sharing protocol presented in \cite{shah2015distributed}. Crucially, the additional pre-distribution and polynomial hashing mechanisms introduced in our algorithm seamlessly extend to this secret sharing framework. Therefore, all results derived herein, including the active security guarantees, maximum decoding error probabilities, and achievable rates established in Theorem~\ref{T1}, hold for network secret sharing against active adversaries over $d$-vertex connected networks.
\end{remark}

\begin{corollary}[Multiple-source setting]\label{C2}
    Consider an instance $\mathcal{I}=\bigl(\mathcal{V},\mathcal{E},\mathcal{B},\mathcal{S}=\{s_1,\dots,s_{|\mathcal{S}|}\},\mathcal{T}\bigr)$ of the secure multiple key-cast problem.
    For any integer $x < |\mathcal{S}|$, let the adversary structure be defined as
    $$
    {
    \begin{aligned}
    \mathcal{B} = \Big\{ &(\beta_o, \beta_e, \beta_{oe}) \mid \text{disjoint}\ \beta_o, \beta_e, \beta_{oe} \subset \mathcal{V},  |\beta_o \setminus \mathcal{S}| \le \ell_o,\\
    &  |\beta_e| \le \ell_e, \, |\beta_{oe}| \le \ell_{oe}, \, 
    |\beta_o \cap \mathcal{S}| \le x, \, \beta_{\mathrm{jam}} \cap \mathcal{S} = \emptyset
    \Big\}.
    \end{aligned}
    }
    $$
    where $x$ denotes the maximum number of source nodes that can be compromised by the adversary for eavesdropping.
    If every node in the network is $d$-vertex connected from each source $s \in \mathcal{S}$, then the network key-capacity under the additive jamming model is given by
    $$
    C_{\mathrm{add}} =
    \begin{cases}
    \frac{|\mathcal{S}|-x}{|\mathcal{S}|} \bigl(d-\ell_o-\ell_e-2\ell_{oe}\bigr), & \text{if } d>\ell_o+\ell_e+2\ell_{oe},\\[1.5ex]
    0, & \text{otherwise,}
    \end{cases}
    $$
    and the network key-capacity under the overwrite jamming model is given by
    $$
    C_{\mathrm{ow}} =
    \begin{cases}
    \frac{|\mathcal{S}|-x}{|\mathcal{S}|} \bigl(d-\ell_o-\ell_e-2\ell_{oe}\bigr), & \text{if } d>\ell_o+2\ell_e+2\ell_{oe},\\[1.5ex]
    0, & \text{otherwise.}
    \end{cases}
    $$
\end{corollary}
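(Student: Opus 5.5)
The plan is to split the proof into achievability and converse. Achievability uses time-sharing across sources, and the converse adapts the single-source bound.

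\textbf{Achievability.} Fix a blocklength $N$ that is a multiple of $|\mathcal{S}|$. Divide the network uses into $|\mathcal{S}|$ equal phases of length $N/|\mathcal{S}|$. In phase $k$, only source $s_k$ is active. Since every node is $d$-vertex connected from $s_k$, Theorem~\ref{T1} applies, treating the other sources as idle or removed. We run Algorithm~\ref{alg:keycast_active} rooted at $s_k$ with fresh independent randomness. This delivers to each $\mathcal{T}_i$ a key $\boldsymbol{K}^{(k)}_{\mathcal{T}_i}$ of length roughly $(N/|\mathcal{S}|)(d-\ell_o-\ell_e-2\ell_{oe})$ symbols. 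The key is perfectly secret against any non-source observation set and is reliable with error at most $\epsilon/|\mathcal{S}|$. A union bound over phases then gives total error at most $\epsilon$.

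Following \cite[Corollaries 1 and 2]{eavsPaper}, terminals in $\mathcal{T}_i$ combine the $|\mathcal{S}|$ per-source keys. Stack them as a vector of length $L|\mathcal{S}|$, where $L$ is the per-source key length. Multiply this vector by the parity-check-type matrix $\boldsymbol{G}$ of an MDS code, of size $L(|\mathcal{S}|-x)\times L|\mathcal{S}|$. Concretely, take an $(|\mathcal{S}|,|\mathcal{S}|-x)$ Vandermonde combination applied symbol-wise, so that any $|\mathcal{S}|-x$ per-source blocks determine the output bijectively. The final key then has rate $\frac{|\mathcal{S}|-x}{|\mathcal{S}|}(d-\ell_o-\ell_e-2\ell_{oe})-\gamma$.

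\textbf{Secrecy.} Fix $\beta$. The adversary observes at most $x$ sources, which reveals their entire keys. The remaining non-source observations and injections follow the model of Section~\ref{SysM}. Let $U\subseteq\mathcal{S}$ with $|U|\ge|\mathcal{S}|-x$ be the unobserved sources. By Theorem~\ref{T1}, applied phase by phase with independent randomness, the keys $\{\boldsymbol{K}^{(k)}\}_{k\in U}$ are uniform. They are also independent of the entire adversarial view: the observed sources' randomness, all non-source observations in all phases, and hence the adversary's injections, which are functions of that view.

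The main obstacle is establishing this joint independence across phases. An adversary in phase $k$ may base its injections on observations from earlier phases. I would handle this with a chain-rule argument over phases. In each phase, the injections are functions of the previous view together with same-phase observations. Theorem~\ref{T1}'s secrecy holds for every adversarial strategy, including strategies randomized by independent side information. Conditioning on the past view therefore preserves zero leakage of the current phase's key, and this key remains independent of future phases, which use independent randomness. Given this independence, the MDS property makes the combined key a bijective image of the $U$-blocks once the observed blocks are fixed, so it is uniform and independent of the view. Hence $I(K_i;\text{view})=0$.

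\textbf{Converse.} The factor $(|\mathcal{S}|-x)/|\mathcal{S}|$ comes from the argument in \cite{eavsPaper}: averaging over observing subsets of $x$ sources bounds the key entropy by the fraction of randomness not seen. The term $d-\ell_o-\ell_e-2\ell_{oe}$, together with the threshold conditions, comes from Remark~\ref{rem:asymptotic_optimality}. In the degenerate cases where the thresholds fail, the single-source argument gives zero capacity per source, and therefore zero overall.
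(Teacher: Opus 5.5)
Your achievability route is the paper's route; its proof only says to run the algorithm of Theorem~\ref{T1} once per source and combine as in \cite[Corollary~1]{eavsPaper}. But your claim that each phase is reliable with error at most $\epsilon/|\mathcal{S}|$ ``by Theorem~\ref{T1}'' fails whenever $x\ge 1$ and $\ell_e+\ell_{oe}\ge 1$, precisely in the phases whose source is observed. Theorem~\ref{T1} is proved only for $\beta\subset\mathcal{V}\setminus\{s\}$, whereas here $s_k\in\beta_o$ is allowed. Since sources have no incoming edges, $s_k$ is a descendant of no node, so $s_k\in\mathcal{O}(v)$ for every $v\in\beta_{\mathrm{jam}}$. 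All of phase $k$'s authentication randomness is generated by $s_k$: the $\alpha_j$, the hashes $h_{j_p\to j}$, and the $\alpha_{i,j},\gamma_{i,j}$ used in Lemma~\ref{lem:parallel_edge_active_secure}. Hence every jammer knows it. For instance, an additive jammer at $j_p\in\beta_e$ can add $\boldsymbol{e}=(\alpha_j,-1,0,\dots,0)\neq\boldsymbol{0}$ to $\boldsymbol{s}_{j_p\to j}$. Since $\sum_\tau e_\tau\alpha_j^\tau=\alpha_j^2-\alpha_j^2=0$, the check \eqref{eq-new} passes. Node $j$ then faces as many as $\ell_e+\ell_{oe}$ undetected errors, which an MDS code with $d-d'=\ell_e+\ell_{oe}$ can detect but not correct. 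The terminals cannot tell which blocks are observed, so a corrupted $\boldsymbol{K}^{(k)}_{\mathcal{T}_i}$ corrupts the combined key, or makes two terminals of $\mathcal{T}_i$ disagree. Your ``main obstacle'' paragraph treats only secrecy. The missing ingredient is reliability of every phase against an adversary that is effectively omniscient within that phase. Simple remedies cost rate: decoding errors rather than erasures needs $d-d'\ge 2(\ell_e+\ell_{oe})$. Authenticating phase-$k$ data with randomness from other sources is not immediate either, since only $s_k$ knows the data to be tagged. The paper's one-line sketch does not address this point either.

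Your converse cannot be completed, because the equality in the statement is false in this generality: the factor $(|\mathcal{S}|-x)/|\mathcal{S}|$ is an artifact of time-sharing. Secrecy only requires the key to be independent of the randomness of any $x$ sources. Since each source has unlimited randomness, ``the fraction of randomness not seen'' bounds nothing. Take $|\mathcal{S}|=2$, $x=1$, and $\ell_o=\ell_e=\ell_{oe}=0$. Let relays $r_1,\dots,r_d$ each have $d$ parallel edges from each source and one edge to a single terminal $t$, so every non-source node is $d$-vertex connected from each source. Let $r_i$ forward $a_i+b_i$, where $a_i$ and $b_i$ are fresh uniform symbols from $s_1$ and $s_2$. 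Then $K=(a_i+b_i)_{i=1}^{d}$ has rate $d>d/2$ and is independent of $X_{s_1}$ and of $X_{s_2}$ separately. So at most a lower bound can be claimed, and even that requires the reliability fix above. Remark~\ref{rem:asymptotic_optimality} supplies no multi-source converse.
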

\begin{proof}[Proof sketch]
The proof follows \cite[Corollary 1]{eavsPaper} and involves multiple executions of the algorithm of Theorem~\ref{T1}. 
\end{proof}

\section{Multiple Key-Cast in Networks with  Partially-Connected Nodes}\label{Pcon}
In this section, we relax the $d$-vertex connectivity assumption considered previously and study multiple key-cast in networks where only terminal nodes are guaranteed to be $d$-vertex connected from the source. Intermediate nodes that do not satisfy this connectivity requirement are referred to as partially connected nodes; that is, nodes that are less than $d$-vertex connected from the source. These nodes have limited connectivity and thus restrict the flow of independent information from the source.
We show that, despite the presence of partially connected nodes, secure multiple key-cast remains feasible under a structural condition on the network. Let $z$ denote the maximum number of partially connected nodes that provide input to any $d$-vertex connected node. We establish that if $z \le d-\ell_o-\ell_e-2\ell_{oe}$, then the network sustains a positive secure key-rate. Building on Lemma~\ref{lem:parallel_edge_active_secure} and ideas in \cite[Theorem~2]{eavsPaper}, we construct a single-source coding scheme that guarantees reliability and perfect secrecy against active adversaries.
Finally, we extend these results to the multiple-source setting.

A key structural insight used in our analysis is provided by Lemma~\ref{lem:subnetwork-indegree}, which shows that any DAG contains a subgraph in which each node's in-degree equals its vertex connectivity from the source; this allows us to apply our coding scheme on a reduced subgraph without loss of connectivity.
We introduce the notations used in the lemma below.
For a directed acyclic network $\mathcal{G}=(\mathcal{V},\mathcal{E})$ with source $s\in\mathcal{V}$
and a node $v\in\mathcal{V}\setminus\{s\}$:
 
\begin{itemize}
  \item $|\mathcal{P}_{\mathcal{G}}(v)|$ denotes the \emph{in-degree} of $v$ in
        $\mathcal{G}$, i.e., the number of directed edges entering $v$,
        where $\mathcal{P}_{\mathcal{G}}(v)$ is the set of parent nodes of $v$
        in $\mathcal{G}$.
 
  \item $\kappa_{\mathcal{G}}(s,v)$ denotes the \emph{vertex connectivity} from
        $s$ to $v$, defined as the maximum number of
        vertex-disjoint paths from $s$ to $v$ in $\mathcal{G}$.
\end{itemize}

\begin{lemma}[Modification of {\cite[Theorem 2]{lovasz1973connectivity}}] \label{lem:subnetwork-indegree}
For any directed acyclic graph $\mathcal{G}=(\mathcal{V},\mathcal{E})$ with source
$s\in\mathcal{V}$, there exists a subgraph
$\mathcal{G}'=(\mathcal{V},\mathcal{E}')$ with $\mathcal{E}'\subseteq\mathcal{E}$
such that, for every node $v\in\mathcal{V}\setminus\{s\}$,
\[
  |\mathcal{P}_{\mathcal{G}'}(v)|
  \;=\;
  \kappa_{\mathcal{G}'}(s,v)
  \;=\;
  \kappa_{\mathcal{G}}(s,v).
\]
\end{lemma}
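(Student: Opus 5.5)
The plan is to take $\mathcal{G}'$ to be a \emph{minimal} subgraph that preserves every connectivity value, and then to show that minimality forces the in-degrees down to the connectivities. Concretely, start from $\mathcal{E}'=\mathcal{E}$ and repeatedly delete any edge whose removal leaves $\kappa(s,w)$ unchanged for \emph{every} $w\in\mathcal{V}\setminus\{s\}$. This terminates in a subgraph $\mathcal{G}'$ with $\kappa_{\mathcal{G}'}(s,w)=\kappa_{\mathcal{G}}(s,w)$ for all $w$, in which every edge is critical for some node. The inequality $|\mathcal{P}_{\mathcal{G}'}(v)|\ge\kappa_{\mathcal{G}'}(s,v)$ is immediate, because the internally vertex-disjoint paths to $v$ (counting a direct edge $s\to v$ as a path) enter $v$ through distinct edges. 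So the whole proof reduces to showing that if some $v$ has $|\mathcal{P}_{\mathcal{G}'}(v)|>\kappa_{\mathcal{G}'}(s,v)$, then some edge entering $v$ can be deleted without decreasing any connectivity.

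To analyze critical edges I will use Menger's theorem in its cut form. For an edge $e=(p,v)$, criticality means there is a node $w$ (equal to $v$ or a descendant of $v$) and a set $X$ with the following properties: $X$ consists of nodes other than $s,w$, together with possibly some direct $s$--$w$ edges counted individually; it has size exactly $\kappa(s,w)$; and after deleting $X$, every remaining $s$--$w$ path uses $e$. Call such an $X$ a tight separator for $w$ \emph{witnessing} $e$.

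Acyclicity gives the first structural fact. The witnessed edge lies on a surviving $s$--$w$ path through $v$, so $v\notin X$. Moreover, $X$ restricted to the ancestors of $v$ must separate $s$ from every parent of $v$ other than $p$ along the edges into $v$, while leaving $p$ reachable. Rephrasing on the side of $v$: consider the node set $R$ reachable from $s$ in $\mathcal{G}'-X$. The edge $(p,v)$ is the unique edge from $R$ into $v$.

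The main step, and the main obstacle, is an uncrossing argument in the spirit of \cite[Theorem 2]{lovasz1973connectivity}. Suppose every edge entering $v$ is critical, with witnesses $(w_e,X_e)$. I would work with the set function $f(A)=$ (number of nodes of $\mathcal{V}\setminus A$ with an in-neighbor in $A$), counting parallel source edges separately, over sets $A\ni s$. This is the vertex-boundary function, and it is submodular. A tight separator for $w$ corresponds to a reachable set $A$ with $w\notin A$ and $f(A)=\kappa(s,w)$.

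I would take two witnesses for distinct edges $e_1,e_2$ into $v$, with reachable sets $A_1,A_2$. Then I apply submodularity,
\begin{equation*}
f(A_1\cap A_2)+f(A_1\cup A_2)\le f(A_1)+f(A_2),
\end{equation*}
together with the lower bounds $f(A_1\cap A_2)\ge\kappa(s,w_1\text{ or } w_2)$ and $f(A_1\cup A_2)\ge\kappa(s,\cdot)$, which hold because connectivity is preserved. This should force $A_1\cap A_2$ to be tight as well. Iterating over all edges entering $v$, I aim to obtain a single set $A$ that misses $v$, is tight for some $w\succeq v$, and contains all parents of $v$ in $\mathcal{G}'$. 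Then every one of the $|\mathcal{P}_{\mathcal{G}'}(v)|$ edges into $v$ leaves $A$. Restricting to paths ending at $v$, the boundary of $A$ then separates $s$ from $v$ using at most $\kappa(s,v)$ elements, and the counting contradicts $|\mathcal{P}_{\mathcal{G}'}(v)|>\kappa(s,v)$.

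Making this counting rigorous is the delicate part. Node boundaries do not uncross as cleanly as edge cuts, and the witnesses $w_1,w_2$ may differ. If the direct uncrossing fails, my fallback is to pass to the standard split graph, replacing each node $u\neq s$ by an arc $u^{\mathrm{in}}\to u^{\mathrm{out}}$. Then I would invoke Lovász's edge version on that graph, checking that the in-arcs of $v^{\mathrm{in}}$ correspond exactly to the original edges entering $v$.
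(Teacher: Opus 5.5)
\textbf{A genuine gap in your primary route.} Uncrossing vertex-boundary sets directly in $\mathcal{G}$ does not close, and it breaks at the final counting step, not only in the uncrossing.

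First, $f(A)$ counts \emph{nodes}. If $A$ contains every parent of $v$ and misses $v$, then $v$ contributes exactly $1$ to $f(A)$, however many edges enter it. So no bound on $|\mathcal{P}_{\mathcal{G}'}(v)|$ follows.

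Second, your target set is tight for some descendant $w\succeq v$. Even with an edge count, that would only give in-degree at most $\kappa(s,w)$, which is unrelated to $\kappa(s,v)$.

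The missing idea is Lov\'asz's. Take the smallest tight set $T_v$ whose core is $v$ itself. Show, by the core-ordered uncrossing ($U\cap V$ is tight with core $u$ whenever $u\in V$), that $T_v$ lies inside every tight set containing $v$. Each critical edge into $v$ enters its witness set from outside, so it also enters $T_v$ from outside. Hence the in-degree is at most the number of \emph{arcs} entering $T_v$. That number equals $\kappa(s,v)$ only when node capacities are represented by arcs.

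Your uncrossing step is also off in two places:
\begin{itemize}
\item For $e_i=(p_i,v)$ with $p_1\neq p_2$, the reachable set $A_1$ contains $p_1$ but not $p_2$, and vice versa for $A_2$. So $A_1\cap A_2$ discards both parents. You need the union, and the union is tight (for $w_1$) only under the core hypothesis $w_1\notin A_2$.
\item The lower bound $f(A)\ge\kappa(s,w)$ for $w\notin A$ fails whenever $w$ lies on the vertex boundary of $A$, because several internally disjoint paths can first leave $A$ at $w$. This is exactly what can happen to $A_1\cup A_2$ when $w_1$ is an out-neighbor of $A_2$.
\end{itemize}

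\textbf{Your fallback is the correct proof, and it is the paper's route.} Splitting each $u\neq s$ into an arc $u^{\rm in}\to u^{\rm out}$ turns node capacities into arc capacities. Arc cuts then count the edges entering $v^{\rm in}$ individually, and the argument above goes through.

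The paper re-derives Lov\'asz's argument on $\hat{\mathcal{G}}$: regular sets, submodularity of the in-cut function, the minimal regular set $T_{v^{\rm in}}$, and criticality of each arc into $v^{\rm in}$. It uses a minimal subgraph preserving $c(s,v^{\rm in})$ and the split arcs. Citing Lov\'asz's edge theorem as a black box on $\hat{\mathcal{G}}$, as you propose, is equally valid and shorter.

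The checks you must write out are:
\begin{itemize}
\item $c_{\hat{\mathcal{G}}}(s,v^{\rm in})=\kappa_{\mathcal{G}}(s,v)$. No non-source node can be used twice, since $u^{\rm out}$ has in-degree one, and each direct $s\to v$ edge is its own arc.
\item The arcs entering $v^{\rm in}$ in $\hat{\mathcal{G}}'$ are in bijection with the edges of the collapsed graph $\mathcal{G}'$ entering $v$. This gives $|\mathcal{P}_{\mathcal{G}'}(v)|=\kappa_{\mathcal{G}}(s,v)$.
\item The split graph of $\mathcal{G}'$ contains $\hat{\mathcal{G}}'$, so $\kappa_{\mathcal{G}}(s,v)\ge\kappa_{\mathcal{G}'}(s,v)\ge c_{\hat{\mathcal{G}}'}(s,v^{\rm in})=\kappa_{\mathcal{G}}(s,v)$.
\end{itemize}
Make this the proof rather than a fallback.
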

\begin{proof}
    The proof closely follows \cite[Theorem 2]{lovasz1973connectivity}, adapted to vertex-based connectivity via a node-splitting reduction.
    The full proof is provided in Appendix~\ref{L1Proof}.
\end{proof}

\begin{theorem}[Partially-connected, single-source setting]\label{T3-active}
Consider an instance $\mathcal{I}=\bigl(\mathcal{V},\mathcal{E},\mathcal{B},\mathcal{S}=\{s\},\mathcal{T}\bigr)$ of the Secure Multiple Key-Cast problem with 
$$
{
\begin{aligned}
\mathcal{B} = \Big\{ &(\beta_o, \beta_e, \beta_{oe}) \mid \text{disjoint}\ \beta_o, \beta_e, \beta_{oe} \subset \mathcal{V} \setminus \{s\}, \\
& |\beta_o| \le \ell_o, \, |\beta_e| \le \ell_e, \, |\beta_{oe}| \le \ell_{oe}
\Big\},
\end{aligned}
}
$$
in which all terminal nodes are $d$-vertex connected from the source node $s$.
Let $L$ denote the maximum number of partially-connected nodes forming a contiguous directed path within the network. 
Suppose no $d$-vertex connected node receives input from more than $z$ partially-connected nodes.
Then, under the condition that $z \le d-\ell_o-\ell_e-2\ell_{oe}$, the network key-capacity under the additive jamming model satisfies
$$
C_{\mathrm{add}} \ge
\begin{cases}
\frac{d-\ell_o-\ell_e-2\ell_{oe}-z+1}{d'(d-1)^{L-1}+1}, & \text{if } d>\ell_o+\ell_e+2\ell_{oe},\\[1.5ex]
0, & \text{otherwise,}
\end{cases}
$$
and the network key-capacity under the overwrite jamming model satisfies
$$
C_{\mathrm{ow}} \ge
\begin{cases}
\frac{d-\ell_o-\ell_e-2\ell_{oe}-z+1}{d'(d-1)^{L-1}+1}, & \text{if } d>\ell_o+2\ell_e+2\ell_{oe},\\[1.5ex]
0, & \text{otherwise.}
\end{cases}
$$
\end{theorem}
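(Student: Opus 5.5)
We would follow the template of \cite[Theorem~2]{eavsPaper} for the passive case and add the active-adversary machinery of Theorem~\ref{T1}. First we apply Lemma~\ref{lem:subnetwork-indegree} to replace $\mathcal{G}$ by a subgraph $\mathcal{G}'$ in which every node $v$ has in-degree exactly $\kappa_{\mathcal{G}}(s,v)$. In $\mathcal{G}'$ every $d$-vertex connected node has exactly $d$ incoming edges, and a partially connected node $u$ has exactly $\kappa(s,u)\le d-1$ incoming edges. We run the scheme of Algorithm~\ref{alg:keycast_active} on $\mathcal{G}'$ with $d'=d-(\ell_e+\ell_{oe})$. The only change concerns partially connected nodes: they cannot reconstruct their own share $\boldsymbol{M}\boldsymbol{v}_u$. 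Instead, they act as relays that forward linear combinations of what they receive.

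\textbf{Step 1 (relaying through partially connected nodes).} We would treat each maximal directed path of partially connected nodes, which has length at most $L$, as a forwarding tree. A partially connected node $u$ has at most $d-1$ parents. To forward to a child $w$ everything $w$ needs, it must pass on the raw incoming vectors, one per in-edge, each of dimension proportional to $d'$. Nesting this through a chain of $L$ partially connected nodes multiplies the number of symbols by at most $(d-1)$ per level. This yields an edge load of $d'(d-1)^{L-1}$ symbols per unit of key, plus one further unit for the pre-distributed hash and pad overhead, which becomes negligible with blocklength. Hence the denominator $d'(d-1)^{L-1}+1$: we time-share, so each edge carries a block $d'(d-1)^{L-1}+1$ times longer than a single key unit.

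\textbf{Step 2 (decoding at $d$-connected nodes).} A $d$-connected node $j$ receives its $d$ in-edges. At most $z$ of these come from partially connected nodes, whose forwarded content it must resolve using the Vandermonde vectors of their ancestors. The remaining $d-z$ come from $d$-connected parents (or from $s$), as in Theorem~\ref{T1}. For reliability, every forwarded packet, including those relayed through partially connected nodes, is protected by polynomial hashes pre-distributed by $s$ to the final $d$-connected recipient via Lemma~\ref{lem:parallel_edge_active_secure}; this uses that the recipient, not the relay, is $d$-vertex connected. A corrupted packet is therefore converted into an erasure. Since a single corrupted node among the partially connected ones may spoil a whole relayed bundle, we count each partially connected input as a potential erasure. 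We then reduce the key dimension to $d-\ell_o-\ell_e-2\ell_{oe}-z+1$, so that MDS decoding of the share still succeeds. The condition $z\le d-\ell_o-\ell_e-2\ell_{oe}$ guarantees that this key dimension is positive.

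\textbf{Step 3 (perfect secrecy).} An observed partially connected node sees up to $(d-1)^{L-1}$ Vandermonde evaluations, namely those of its ancestor chain. These are not a single share, so the leakage argument of \cite{eavsPaper} must be redone. The plan is to show that the union of evaluations seen by $\ell_o+\ell_{oe}$ observed nodes spans a space whose intersection with the key coordinates is trivial once the key is truncated to the stated dimension. The hash and pad randomness from the pre-distribution phase is independent, and by Lemma~\ref{lem:parallel_edge_active_secure} it leaks nothing. The main obstacle is this counting: we must make sure that relayed views, together with the $+z-1$ slack, do not exceed the masking dimension, and simultaneously that erasures induced by jamming relayed bundles stay within the MDS distance. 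We would first try to charge each observed or jammed partially connected node to a single $d$-connected descendant input slot, which is why $z$ appears.
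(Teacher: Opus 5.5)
Your single-instance design cannot reach the claimed key dimension, and the counting you flag as ``the main obstacle'' in Step~3 actually fails. With one symmetric $\boldsymbol{M}$ used everywhere, a partially connected node $u$ that receives the raw $d'$-dimensional shares $\boldsymbol{M}\boldsymbol{v}_a$ of its fully connected parents $a$ reveals, by symmetry, $\boldsymbol{v}_a^T\boldsymbol{M}\boldsymbol{v}_{\mathcal{T}_i}$ for every such $a$. That is one linear functional of the terminal share $\boldsymbol{M}\boldsymbol{v}_{\mathcal{T}_i}$ per parent. For example, take $\ell_o=1$, $\ell_e=\ell_{oe}=0$, $z=1$, so $d'=d$ and the claimed key dimension is $d-1$. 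Let $u$ have $\kappa(s,u)=d-1$ and $d-1$ fully connected parents in $\mathcal{G}'$. Observing $u$ alone yields $d-1$ independent functionals of $\boldsymbol{M}\boldsymbol{v}_{\mathcal{T}_i}\in\mathbb{F}_q^{d}$, leaving at most one symbol of uncertainty per $\tau$. So for $d\ge 3$ no truncation to $d-1$ coordinates is perfectly secret.

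Step~2 is also inconsistent. If the $z$ partially connected inputs are erasures on top of the $\ell_e+\ell_{oe}$ jammed ones, only $d-z-\ell_e-\ell_{oe}=d'-z$ symbols of the $(d,d')$ MDS codeword survive. That is too few to recover a $d'$-dimensional share, and shrinking the key does not change the share dimension. Finally, the $+1$ in the denominator cannot come from hash overhead, which, as you note yourself, vanishes as $n\to\infty$. The $+1$ in the numerator is never accounted for at all.

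What is missing is the two-instance construction the paper takes from \cite[Theorem~2]{eavsPaper}.
\begin{itemize}
\item An independent symmetric $\boldsymbol{M}^{(\tau)}\in\mathbb{F}_q^{(d'-z)\times(d'-z)}$ runs only among fully connected nodes, with partially connected inputs simply omitted. This gives a $(d-z,d'-z)$ MDS code that still corrects $\ell_e+\ell_{oe}$ hash-detected erasures.
\item A second one, $\boldsymbol{R}^{(\tau)}\in\mathbb{F}_q^{d'\times d'}$, runs on all of $\mathcal{G}'$. Partially connected nodes forward full $\boldsymbol{R}$-shares and hand a fully connected child $j$ the value $\boldsymbol{v}_j^T\boldsymbol{R}^{(\tau)}\boldsymbol{v}_{\eta(j_p)}$, for the distinct indices $\eta(j_p)$ of \cite[Claim~5.1]{eavsPaper}. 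So $\boldsymbol{R}$ keeps a full $(d,d')$ code.
\item The key is the sum of the two shares truncated to $d'-\ell'-z+1$ coordinates, with $\ell'=\ell_o+\ell_{oe}$.
\end{itemize}

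Secrecy is then a case split rather than a span count. If some partially connected node is observed, concede all of $\boldsymbol{R}$. Partially connected nodes carry no $\boldsymbol{M}$-information, so at most $\ell'-1$ $\boldsymbol{M}$-shares are seen and $(d'-z)-(\ell'-1)$ coordinates remain secret. Otherwise concede all of $\boldsymbol{M}$; at most $\ell'$ $\boldsymbol{R}$-shares are seen, leaving $d'-\ell'\ge d'-\ell'-z+1$ coordinates secret. That is where the $-z+1$ comes from. The denominator is the $n$ symbols per edge of the $\boldsymbol{M}$-instance plus the $nd'(d-1)^{L-1}$ symbols of relayed $\boldsymbol{R}$-shares.

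Your erasure idea, made consistent by shrinking the matrix to size $d'-z$, is exactly the $\boldsymbol{M}$-instance alone. It yields key dimension $d'-z-\ell'$, one short of the claim, and gives nothing at the boundary $z=d-\ell_o-\ell_e-2\ell_{oe}$. The $\boldsymbol{R}$-instance and the case split exist precisely to recover that extra coordinate.
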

\begin{proof}[Proof sketch]
{We provide only a proof sketch here and defer the full proof to Appendix~\ref{T2Proof}.}
The proposed scheme builds on the approach of \cite{eavsPaper} for handling partially-connected nodes, combined with the active-adversary protection mechanism of Theorem~\ref{T1}. Specifically, two parallel instances of Algorithm~\ref{alg:keycast_active} are executed, while verification hashes and auxiliary randomness are securely pre-distributed to the collection $\mathcal{V}^{\mathrm{Full}} \subset \mathcal{V}$ of $d$-vertex connected nodes. Since Lemma~\ref{lem:parallel_edge_active_secure} requires $d$-vertex connectivity from the source, only these nodes can perform local error detection and MDS-based error correction.
Similar to \cite{eavsPaper}, the first parallel instance of Algorithm~\ref{alg:keycast_active} operates only on $\mathcal{V}^{\mathrm{Full}}$, treating inputs from partially-connected nodes as erasures, while the second instance utilizes the entire network. The final key is obtained by securely combining the outputs of the two instances. The principal difference from the passive setting of \cite{eavsPaper} is that partially-connected nodes cannot reliably decode and reconstruct local shares, as they do not possess the verification information required to detect adversarial modifications. Consequently, such nodes must forward all linearly independent symbols they receive.
For a tree of depth $L$ consisting solely of partially-connected nodes, this forwarding requirement causes the communication overhead to grow as $d(d-1)^{L-1}$, yielding the rate expression stated in the theorem. Reliability follows from the same hashing-based verification arguments used in Theorem~\ref{T1}, where the probability of an undetected forgery or hash collision can be made arbitrarily small by choosing sufficiently large field sizes, while secrecy follows by combining the security analysis of \cite{eavsPaper} together with the secure pre-distribution phase. 
\end{proof}

\begin{corollary}[Partially-connected, multiple-source setting]\label{C3-active}
    Consider an instance $\mathcal{I}=\bigl(\mathcal{V},\mathcal{E},\mathcal{B},\mathcal{S}=\{s_1,\dots,s_{|\mathcal{S}|}\},\mathcal{T}\bigr)$ of the secure multiple key-cast problem.
    For any integer $x < |\mathcal{S}|$, let the adversary structure be defined as
    $$
    {
    \begin{aligned}
    \mathcal{B} = \Big\{ &(\beta_o, \beta_e, \beta_{oe}) \mid \text{disjoint}\ \beta_o, \beta_e, \beta_{oe} \subset \mathcal{V},\ 
    |\beta_o \setminus \mathcal{S}| \le \ell_o,\\
    & |\beta_e| \le \ell_e, \, |\beta_{oe}| \le \ell_{oe}, \, 
    |\beta_o \cap \mathcal{S}| \le x, \, \beta_{\mathrm{jam}} \cap \mathcal{S} = \emptyset
    \Big\},
    \end{aligned}
    }
    $$
    where $x$ denotes the maximum number of source nodes that can be compromised by the adversary for eavesdropping.
    Suppose all terminal nodes are $d$-vertex connected from each source node $s \in \mathcal{S}$. Let $L$ denote the maximum number of partially-connected nodes forming a contiguous directed path, and suppose no $d$-vertex connected node receives input from more than $z$ partially-connected nodes. If $z \le d-\ell_o-\ell_e-2\ell_{oe}$, then the network key-capacity under the additive jamming model satisfies
    $$
    C_{\mathrm{add}} \ge
    \begin{cases}
    \frac{|\mathcal{S}|-x}{|\mathcal{S}|} \left( \frac{d-\ell_o-\ell_e-2\ell_{oe}-z+1}{d'(d-1)^{L-1}+1} \right), & \text{if } d>\ell_o+\ell_e+2\ell_{oe},\\[1.5ex]
    0, & \text{otherwise,}
    \end{cases}
    $$
    and the network key-capacity under the overwrite jamming model satisfies
    $$
    C_{\mathrm{ow}} \ge
    \begin{cases}
    \frac{|\mathcal{S}|-x}{|\mathcal{S}|} \left( \frac{d-\ell_o-\ell_e-2\ell_{oe}-z+1}{d'(d-1)^{L-1}+1} \right), & \text{if } d>\ell_o+2\ell_e+2\ell_{oe},\\[1.5ex]
    0, & \text{otherwise.}
    \end{cases}
    $$
\end{corollary}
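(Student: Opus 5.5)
The plan is to reduce Corollary~\ref{C3-active} to Theorem~\ref{T3-active}, in the same way that Corollary~\ref{C2} reduces to Theorem~\ref{T1}, following the multi-source combining strategy of \cite[Corollary 1]{eavsPaper}.

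\textbf{Step 1: one run per source.} Fix $\gamma>0$ and $\epsilon>0$. For each source $s_k\in\mathcal{S}$, consider the single-source instance $\mathcal{I}_k$ obtained by keeping $s_k$ as the only source. All other sources are kept as ordinary nodes that only forward, or are simply unused.

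\begin{itemize}
\item Every terminal is $d$-vertex connected from $s_k$.
\item The parameters $L$ and $z$ with respect to $s_k$ are bounded as in the hypothesis.
\item Since $\beta_{\mathrm{jam}}\cap\mathcal{S}=\emptyset$ and $|\beta_o\setminus\mathcal{S}|\le\ell_o$, the restriction of any admissible $\beta$ to non-source nodes lies in the adversary structure of Theorem~\ref{T3-active}.
\end{itemize}

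So Theorem~\ref{T3-active} gives, for each $k$, a key-code with blocklength $N$ and error at most $\epsilon/|\mathcal{S}|$. It delivers to all $t\in\mathcal{T}_i$ a key $K_i^{(k)}$ of rate $R\ge R^*-\gamma$, where $R^*=\frac{d-\ell_o-\ell_e-2\ell_{oe}-z+1}{d'(d-1)^{L-1}+1}$. The runs are time-shared, so the total blocklength is $|\mathcal{S}|N$. They use independent randomness, and each terminal receives its shares from run $k$ and decodes $K_i^{(k)}$. By a union bound, all $|\mathcal{S}|$ keys are decoded correctly with probability at least $1-\epsilon$.

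\textbf{Step 2: combine the keys.} Terminals in $\mathcal{T}_i$ stack the $K_i^{(k)}$ into a vector of length $|\mathcal{S}|RN$. They then apply the same deterministic matrix $\boldsymbol{G}$ of size $(|\mathcal{S}|-x)RN\times|\mathcal{S}|RN$ over $\mathbb{F}_q$. The blocks of $\boldsymbol{G}$ are chosen so that the submatrix formed by any $|\mathcal{S}|-x$ block-columns is invertible; a block-Vandermonde or MDS construction works for $q$ large. The final key $K_i=\boldsymbol{G}[K_i^{(1)};\dots;K_i^{(|\mathcal{S}|)}]$ then has rate $\frac{|\mathcal{S}|-x}{|\mathcal{S}|}R$. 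Letting $\gamma\to0$ gives the claimed bound. The zero branch is trivial.

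\textbf{Step 3: secrecy (the main obstacle).} Suppose the adversary observes the sources in $A=\beta_o\cap\mathcal{S}$, with $|A|\le x$. Then the keys $K_i^{(k)}$ for $k\in A$ are fully exposed. For $k\notin A$, the run driven by $s_k$ is perfectly secret against the non-source observations by Theorem~\ref{T3-active}.

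Two points need care.

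\begin{itemize}
\item A non-source observed node sees traffic from all runs simultaneously. Because each run uses its own source's independent randomness and its own pre-distribution, the joint view factorizes into independent per-run views. Hence $I(\{K_i^{(k)}\}_{k\notin A};\text{view})=0$.
\item The active adversary's injections in one run may depend on observations of other runs, including sources in $A$. Reliability still holds, because the hash-verification guarantee of Lemma~\ref{lem:parallel_edge_active_secure} is worst case over error patterns chosen independently of the secret hash keys and $\alpha_j$. Secrecy is unaffected, since injections are functions of the view.
\end{itemize}

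Given this independence, the unexposed keys $\{K_i^{(k)}\}_{k\notin A}$ are uniform and independent of the view. Fix the exposed part. The map from the $|\mathcal{S}|-|A|\ge|\mathcal{S}|-x$ unexposed blocks to $K_i$ is surjective, because any $|\mathcal{S}|-x$ block-columns of $\boldsymbol{G}$ are invertible. Therefore $K_i$ is uniform and independent of everything observed, so $I(K_i;\{X_v:v\in\beta_{\mathrm{obs}}\})=0$.

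The independence-across-runs claim needs to be justified from the structure of the proof of Theorem~\ref{T3-active}, and I expect that to be the delicate step.
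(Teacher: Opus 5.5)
Your route matches the paper's one-line sketch: multiple executions of Theorem~\ref{T3-active}, one per source, combined following \cite[Corollary~1]{eavsPaper}. Your rate count and the secrecy argument for the combined key are fine. The genuine gap is \emph{reliability} of any run whose own source is eavesdropped, i.e., the union bound in Step~1 and your second bullet in Step~3.

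To invoke Theorem~\ref{T3-active} for run $k$ you restrict $\beta$ to non-source nodes, but the part you drop may contain $s_k$ itself. Theorem~\ref{T3-active}, whose $\mathcal{B}$ excludes the source, says nothing in that case. Its reliability rests entirely on the verification keys being hidden from the adversary: $\alpha_j$ for the local checks, and $\alpha_{i,j},\gamma_{i,j}$ inside each application of Lemma~\ref{lem:parallel_edge_active_secure}. In run $k$ all of these are generated by $s_k$ and are part of $X_{s_k}$.

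Sources have no incoming edges, so $v\not\prec_{\mathcal{G}} s_k$ for every $v$. Hence if $s_k\in\beta_o$ (allowed whenever $x\ge 1$), then $s_k\in\mathcal{O}(v)$ for every jammed node $v$.
\begin{itemize}
\item \textbf{Local checks.} Even a purely additive jammer at a parent $v$ of $j$ can inject $\boldsymbol{s}'$ with $[\boldsymbol{s}']_1=-c\,\alpha_j^{(k)}$, $[\boldsymbol{s}']_2=c$ ($c\neq 0$) and zeros elsewhere. This satisfies $\sum_i[\boldsymbol{s}']_i(\alpha_j^{(k)})^i=0$, so it passes the local check with probability one.
\item \textbf{Pre-distribution.} The cross-consistency checks in run $k$'s pre-distribution can be forged the same way. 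Node $j$ can then be fed wrong verification parameters, which turn honest symbols into erasures.
\item \textbf{Residual protection.} What remains is the bare MDS redundancy $d-d'=\ell_e+\ell_{oe}$. It corrects only $\lfloor(\ell_e+\ell_{oe})/2\rfloor$ errors of unknown location.
\end{itemize}
Taking $v$ to be a parent of some $t\in\mathcal{T}_i$, that terminal can be made to output a wrong $\hat{K}_i^{(k)}$. Every block-column of $\boldsymbol{G}$ has full column rank, so this forces $\hat{K}_i\neq K_i$. Your claim that error patterns are chosen independently of the hash keys therefore holds only for runs with unobserved sources.

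You cannot patch this by ignoring exposed runs. The terminals do not know $A=\beta_o\cap\mathcal{S}$, and your combination needs every $K_i^{(k)}$ to be agreed upon. What is missing is an additional mechanism that keeps run $k$ reliable when $s_k$ is eavesdropped, or lets all terminals consistently detect and discard that run. Theorem~\ref{T3-active} does not supply this, and the paper's sketch does not address it either.

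By contrast, the cross-run independence you flag as the delicate step is routine. The runs use independent source randomness and disjoint time slots. Because sources have no incoming edges, an observed $s_{k'}$ never relays run-$k$ traffic. This is also why your ``kept as ordinary forwarding nodes'' option is vacuous and does not inflate the $\ell_o$ budget of run $k$.
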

\begin{proof}[Proof sketch]
The proof follows \cite[Corollary 1]{eavsPaper} and involves multiple executions of the algorithm of Theorem~\ref{T3-active}. 
\end{proof}

\section{Conclusion}\label{Con}
In this paper, we studied secure multiple key-cast over noiseless networks under an active, node-based adversarial model. For the single-source setting, we showed that in networks where every node is $d$-vertex connected from the source, the network key-capacity is $d-\ell_o-\ell_e-2\ell_{oe}$. We further extended these results to networks containing partially connected nodes and to the multiple-source setting, including scenarios in which the adversary may observe source nodes. The proposed constructions also naturally extend to network secret sharing and secure multicast network coding. The active, node-based adversarial model differs fundamentally from both edge-based error models and passive eavesdropping models, introducing unique challenges due to the adversary's ability to both observe and corrupt transmissions through compromised nodes (which potentially control several outgoing edges).
Several research directions remain open. As established in Theorem~\ref{T3-active}, long chains of partially connected nodes cause the transmission overhead to grow exponentially, significantly reducing the achievable key-capacity; overcoming this limitation remains an important open problem. Another challenging direction is to remove the structural assumptions imposed on partially connected nodes (represented by the parameter $z$) in Theorem~\ref{T3-active} and derive matching converse bounds for general network topologies. We note that, while the capacity lower bound in Theorem~\ref{T3-active} decrease as $z$ and $L$ increase, the achievable rate remains sufficient for many cryptographic applications that require only constant-sized secret keys.
{Finally, establishing results of similar nature in the presence of node-based adversaries that design their corruptions using information available at any nodes in 
the set $\beta_{\mathrm{obs}}$ irrelevant of their topological order 
is subject to future work.}

\section{Acknowledgments}
This work is supported in part by NSF grant CCF-2245204.
\appendices

\section{Proof of Lemma \ref{lem:parallel_edge_active_secure}}\label{LemmaProof}
\begin{proof}
We now provide a complete proof of Lemma \ref{lem:parallel_edge_active_secure}.
We first observe that a node-based adversary operating on $d$ vertex-disjoint paths possesses the exact same capabilities as an edge-based adversary attacking $d$ parallel links. This equivalence allows us to directly apply the parallel-edge framework studied in \cite{zhang2023optimal}. Our proof follows the same overall method as the proofs of \cite[Theorems 1 and 2]{zhang2023optimal}, utilizing standard MDS codes for the underlying data transmission. However, we introduce a modified polynomial hashing mechanism. This specific refinement to the hashing function is what upgrades the asymptotic weak secrecy guarantees of their model to perfect secrecy, ensuring the eavesdropper gains zero information.
\subsection*{Encoding:}
Consider a single-source, single-terminal network in which the terminal node $t$ is $d$-vertex connected from the source $s$. Specifically, we specify $d$ vertex-disjoint paths from $s$ to $t$. Because the intermediate nodes along these paths simply forward the information they receive, an adversary located at any intermediate node within a given path does not gain any additional eavesdropping or error-injection capabilities. Consequently, it suffices to consider the worst-case scenario in which the adversary takes action directly upon the final parent node of each of the $d$ paths.
Let $d' = d - \ell_e - \ell_{oe}$, and let $\ell' = \ell_o + \ell_{oe}$ be the maximum number of eavesdropped nodes. Suppose that the source wishes to securely transmit a message matrix $\boldsymbol{X}\in\mathbb{F}_q^{n\times(d'-\ell')}$ to $t$ over $d$ vertex-disjoint paths. To ensure secrecy, the source generates a random matrix $\boldsymbol{R}\in\mathbb{F}_q^{n\times\ell'}$ whose entries are independent and uniformly distributed over $\mathbb{F}_q$. Let $\boldsymbol{V} \in \mathbb{F}_q^{d' \times d}$ be a Vandermonde matrix constructed from distinct Vandermonde indices, so that any $d'$ columns of $\boldsymbol{V}$ are linearly independent.
The source node generates $\boldsymbol{U} \in \mathbb{F}_q^{n \times d}$ as
\[
\boldsymbol{U} = \begin{bmatrix} \boldsymbol{X} & \boldsymbol{R} \end{bmatrix} \boldsymbol{V}.
\]
For each vertex-disjoint path $i \in \{1,\ldots,d\}$, the source transmits the vector
\[
\boldsymbol{t}_i=
\begin{bmatrix}
\boldsymbol{u}_i^T &
\boldsymbol{\alpha}_{i}^T &
\boldsymbol{\gamma}_{i}^T &
\boldsymbol{h}_i^T
\end{bmatrix}^{T},
\]
where $\boldsymbol{u}_i$ is the $i$-th column of $\boldsymbol{U}$, $\boldsymbol{\alpha}_{i}, \boldsymbol{\gamma}_{i} \in \mathbb{F}_q^{d}$ are random vectors with independent uniformly distributed entries over $\mathbb{F}_q$, whose $j$-th components are denoted by $\alpha_{i,j}$ and $\gamma_{i,j}$, respectively, and $\boldsymbol{h}_i=[h_{i,1},\ldots,h_{i,d}]^{T}\in\mathbb{F}_q^{d}$ is a hash vector whose $j$-th component is defined by
\begin{equation}\label{eq-verify-M}
\begin{aligned}
h_{ij}
&= f\left(\boldsymbol{u}_j, \alpha_{i,j}+\alpha_{j,i}, \gamma_{i,j}+\gamma_{j,i}\right) \\
&\triangleq
\sum_{\tau=1}^{n}
[\boldsymbol{u}_{j}]_{\tau}
\bigl(\alpha_{i,j}+\alpha_{j,i}\bigr)^{\tau}
+\bigl(\alpha_{i,j}+\alpha_{j,i}\bigr)^{n+2} \\
&\quad
+\gamma_{i,j}+\gamma_{j,i} .
\end{aligned}
\end{equation}
\subsection*{Decoding:}
At the destination, the terminal receives $\hat{\boldsymbol{t}}_i$ from the $i$-th parent node and performs two layers of consistency checks to isolate uncorrupted paths:
\subsubsection{Self-Consistency Check}
In this step, the terminal identifies and isolates transmissions from parent nodes in $\beta_e$ that inject errors without being observed by the adversary.

Consider $i \notin \beta_{\mathrm{obs}}$.
For an \underline{additive} jammer, the vector received from the $i$-th parent node can be expressed as:
$$\hat{\boldsymbol{t}}_i = \begin{bmatrix}
\boldsymbol{u}_i^T + {\boldsymbol{u}_i'}^T &
\boldsymbol{\alpha}_{i}^T + {\boldsymbol{\alpha}_{i}'}^T &
\boldsymbol{\gamma}_{i}^T + {\boldsymbol{\gamma}_{i}'}^T &
\boldsymbol{h}_i^T + {\boldsymbol{h}_i'}^T
\end{bmatrix}^{T},$$
where $\boldsymbol{u}'_i$, $\boldsymbol{\alpha}_{i}'$, $\boldsymbol{\gamma}_{i}'$, and $\boldsymbol{h}_i'$ for $i \in [d]$ denote the possible additive errors injected at the $i$-th parent node into $\boldsymbol{u}_i$, $\boldsymbol{\alpha}_{i}$, $\boldsymbol{\gamma}_{i}$, and $\boldsymbol{h}_i$, respectively.
The terminal verifies that the received components satisfy the hash relation. Failure of
this verification identifies the corresponding parent node as belonging to
$\beta_{\mathrm{jam}}$, and its transmitted symbols are discarded.

Let the $j$-th components of $\boldsymbol{h}_i'$, $\boldsymbol{\alpha}_i'$, and $\boldsymbol{\gamma}_i'$ be denoted by $h'_{i,j}$, $\alpha'_{i,j}$, and $\gamma'_{i,j}$, respectively.
For every $i \in [d]$ (where $i=j$), the terminal checks:
$$\begin{aligned}
h_{i,i} + h'_{i,i} &= \sum_{\tau=1}^{n} [\boldsymbol{u}_i + \boldsymbol{u}'_i]_{\tau} \Bigl(2\alpha_{i,i} + 2\alpha_{i,i}'\Bigr)^{\tau} \\
& \quad + \Bigl(2\alpha_{i,i} + 2\alpha_{i,i}'\Bigr)^{n+2} + 2\gamma_{i,i} + 2\gamma_{i,i}'.
\end{aligned}$$
If this equality does not hold, then the transmission from parent node $i$ is identified as erroneous and treated as an erasure.
Therefore, the adversary succeeds in bypassing this check if and only if:
\begin{equation}\label{eq:hprime}
\begin{aligned}
h'_{ii} &= \Bigl((2\alpha_{i,i} + 2\alpha_{i,i}')^{n+2} - (2\alpha_{i,i})^{n+2}\Bigr) \\
&\quad + \sum_{\tau=1}^{n} [\boldsymbol{u}_i]_\tau \Bigl((2\alpha_{i,i} + 2\alpha_{i,i}')^\tau - (2\alpha_{i,i})^\tau\Bigr) \\
&\quad + \sum_{\tau=1}^{n} [\boldsymbol{u}'_i]_\tau (2\alpha_{i,i} + 2\alpha_{i,i}')^\tau + 2\gamma_{i,i}'.
\end{aligned}
\end{equation}
We only consider the case where $\boldsymbol{u}'_i \neq \boldsymbol{0}$, since if $\boldsymbol{u}'_i = \boldsymbol{0}$, the adversary does not alter the data payload and hence induces no harmful corruption.
Let the adversary's observation set be denoted by $\{\boldsymbol{t}_j\}_{j \in \beta_{\mathrm{obs}}}$. By assumption, $i \notin \beta_{\mathrm{obs}}$, since the adversary can only inject errors at this parent node and not observe it.
Therefore, $\alpha_{i,i}$ and $\gamma_{i,i}$ are uniformly distributed and independent of the adversary's observations.
Consequently, the injected error vectors $\boldsymbol{u}'_i$, $\boldsymbol{\alpha}'_{i}$,
$\boldsymbol{\gamma}'_{i}$, and $\boldsymbol{h}'_{i}$ must be chosen without any knowledge
of $\alpha_{i,i}$ and $\gamma_{i,i}$.

First, suppose that $\alpha'_{i,i} = 0$.
Then \eqref{eq:hprime} collapses to:
\[
h_{i,i}' = \sum_{\tau=1}^{n} [\boldsymbol{u}'_i]_\tau (2\alpha_{i,i})^\tau + 2\gamma'_{i,i}.
\]
Since $\boldsymbol{u}'_i \neq \boldsymbol{0}$, the right-hand side is a nonzero polynomial in $2\alpha_{i,i}$ of degree at most $n$ (where $2\gamma'_{i,i}$ simply acts as a constant offset). Therefore, for any fixed $h_{i,i}'$ and $\gamma'_{i,i}$, this equality can hold for at most $n$ values of $\alpha_{i,i}$. Hence,
\[
\Pr(\text{Self-Consistency Check passes} \mid i \in \beta_{e},\; \alpha'_{i,i} = 0) \le \frac{n}{q}.
\]
Now suppose that $\alpha'_{i,i} \neq 0$.
Look at the right-hand side of \eqref{eq:hprime} as a polynomial in $2\alpha_{i,i}$. The term of highest degree $n+1$ comes exclusively from the expansion of $((2\alpha_{i,i} + 2\alpha'_{i,i})^{n+2} - (2\alpha_{i,i})^{n+2})$. The leading term of degree $n+1$ resulting from this expression is:
\[
(n+2) 2\alpha'_{i,i} (2\alpha_{i,i})^{n+1}.
\]
Since $\alpha'_{i,i} \neq 0$, the coefficient $(n+2) 2\alpha'_{i,i}$ is strictly non-zero.
The remaining sums in \eqref{eq:hprime} have a maximum degree of $n$, and $2\gamma'_{i,i}$ remains a degree-zero constant. Therefore, the right-hand side of \eqref{eq:hprime} is guaranteed to be a strictly non-zero polynomial in $2\alpha_{i,i}$ of degree exactly $n+1$. For any fixed $h_{i,i}'$, this equality can hold for at most $n+1$ values of $\alpha_{i,i}$. Hence,
\[\small
\Pr(\text{Self-Consistency Check passes} \mid i \in \beta_{e},\;\alpha'_{i,i} \neq 0) \le \frac{n+1}{q}.
\]
Combining both cases, the probability that a harmful corruption from parent node $i \in \beta_e$ passes the hash check is bounded by:
\[
\Pr(\text{Self-Consistency Check passes} \mid i \in \beta_{e}) \le \frac{n+1}{q}.
\]
In the case of an \underline{overwrite} jammer, the adversary can arbitrarily choose any value for the received vector. Consequently, it can always ensure that the self-consistency check fails to identify the corruption.
\subsubsection{Cross-Consistency Check}
For any two self-consistent transmissions from parent nodes $i$ and $j$, the terminal
checks if they are pairwise-consistent. They are pairwise-consistent if and only if the
hash constraints hold across the distinct received transmissions. In practice, because the
terminal does not know \textit{a priori} which parent nodes belong to
$\beta_{\mathrm{jam}}$, it must execute these verification checks exhaustively for every
pair of parent nodes in both directions.
However, for the purpose of analytically bounding the probability of a successful attack, it is sufficient to analyze the scenario involving an uncorrupted transmission from parent $i \notin (\beta_{\mathrm{jam}}\cup \beta_{\mathrm{obs}})$ and a parent node $j \in \beta_{\mathrm{jam}}$.
For the pair to be deemed consistent, both directional checks must pass.

In the additive-error model, all parent nodes in $\beta_e$ are already treated as erasures by the self-consistency check, and therefore, any parent node that remains active after the self-consistency check must satisfy $j \in \beta_{oe}$. In contrast, under the overwrite-error model, any parent node in $\beta_{\mathrm{jam}}$ may be fully controlled by the adversary, so we assume $j \in \beta_{\mathrm{jam}}$.

Let
$$\hat{\boldsymbol{t}}_i = \begin{bmatrix}
 \hat{\boldsymbol{u}}_i^T & \hat{\boldsymbol{\alpha}}_{i}^T & \hat{\boldsymbol{\gamma}}_{i}^T  & \hat{\boldsymbol{h}}_i^T
\end{bmatrix}$$
denote the received vector by the terminal from any parent node $i$. We define the adversarial error injected into these components as their deviation from the true transmitted values:
\[
\boldsymbol{u}'_k \triangleq \hat{\boldsymbol{u}}_k - \boldsymbol{u}_k, \; \boldsymbol{\alpha}'_{i} \triangleq \hat{\boldsymbol{\alpha}}_{i} - \boldsymbol{\alpha}_{i}, \; \boldsymbol{\gamma}'_{i} \triangleq \hat{\boldsymbol{\gamma}}_{i} - \boldsymbol{\gamma}_{i}, \; \boldsymbol{h}'_{i} \triangleq \hat{\boldsymbol{h}}_{i} - \boldsymbol{h}_{i}.
\]
Under an additive jammer, the adversary directly chooses these error terms ($\boldsymbol{u}'_i, \boldsymbol{\alpha}'_i, \boldsymbol{\gamma}'_i,\boldsymbol{h}'_{i}$), whereas under an overwrite jammer, the adversary chooses the exact received values ($\hat{\boldsymbol{u}}_i, \hat{\boldsymbol{\alpha}}_i, \hat{\boldsymbol{\gamma}}_i, \hat{\boldsymbol{h}}_{i}$), which implicitly defines the error terms. The terminal verifies the cross-consistency between the transmission from parent $i$ and $j$ by checking whether:
\[\small
\hat{h}_{ij} = \sum_{\tau=1}^{n} [\hat{\boldsymbol{u}}_j]_{\tau} \Bigl(\hat{\alpha}_{i,j} + \hat{\alpha}_{j,i}\Bigr)^{\tau} + \Bigl(\hat{\alpha}_{i,j} + \hat{\alpha}_{j,i}\Bigr)^{n+2} + \hat{\gamma}_{i,j} + \hat{\gamma}_{j,i},
\]
where $\hat{h}_{i,j}$, $\hat{\alpha}_{i,j}$, and $\hat{\gamma}_{i,j}$ denote the $j$-th components of $\hat{\boldsymbol{h}}_i$, $\hat{\boldsymbol{\alpha}}_i$, and $\hat{\boldsymbol{\gamma}}_i$, respectively.
If the transmissions received from parents $i$ and $j$ are both uncorrupted, all error terms are zero ($\hat{h}_{ij} = h_{ij}$, $\hat{\boldsymbol{u}}_j = \boldsymbol{u}_j$, etc.), and the check passes trivially. The terminal constructs an undirected graph connecting pairwise-consistent parent nodes and isolates the uncorrupted transmissions by finding the largest clique.
To bound the success probability of a parent node $j \in \beta_{\mathrm{jam}}$ faking consistency with an uncorrupted transmission from parent $i \notin (\beta_{\mathrm{jam}}\cup \beta_{\mathrm{obs}})$, we evaluate the verification check processed at parent node $i$. Because the transmission from parent $i$ is uncorrupted, its local received values match the true values ($\hat{\alpha}_{i,j} = \alpha_{i,j}$, $\hat{\gamma}_{i,j} = \gamma_{i,j}$, and $\hat{h}_{ij} = h_{ij}$). Thus, the check simplifies to:
\[\small
h_{ij} = \sum_{\tau=1}^{n} [\hat{\boldsymbol{u}}_j]_{\tau} \Bigl(\alpha_{i,j} + \hat{\alpha}_{j,i}\Bigr)^{\tau} + \Bigl(\alpha_{i,j} + \hat{\alpha}_{j,i}\Bigr)^{n+2} + \gamma_{i,j} + \hat{\gamma}_{j,i}.
\]
Expanding $h_{ij}$ using its definition and gathering all terms to one side, the random variable $\gamma_{i,j}$ cancels out. The adversary succeeds if and only if:
\begin{equation*}
\begin{aligned}
0 &= \sum_{\tau=1}^{n} [\hat{\boldsymbol{u}}_j]_\tau \Bigl(\alpha_{i,j} + \hat{\alpha}_{j,i}\Bigr)^\tau - \sum_{\tau=1}^{n} [\boldsymbol{u}_j]_\tau \Bigl(\alpha_{i,j} + \alpha_{j,i}\Bigr)^\tau \\
&\quad + \Bigl(\alpha_{i,j} + \hat{\alpha}_{j,i}\Bigr)^{n+2} - \Bigl(\alpha_{i,j} + \alpha_{j,i}\Bigr)^{n+2} + \gamma'_{j,i}.
\end{aligned}
\end{equation*}
To formalize this via polynomial analysis, we define our evaluation point as $x \triangleq \alpha_{i,j} + \hat{\alpha}_{j,i}$. For this verification check to be secure, $x$ must be uniformly distributed and entirely unpredictable from the adversary's perspective. Notice that, outside of the uncorrupted node $i \notin (\beta_{\mathrm{obs}}\cup\beta_{\mathrm{jam}})$, the random variables $\alpha_{i,j}$ and $\gamma_{i,j}$ only ever appear in the hash value $h_{j,i}$. According to \eqref{eq-verify-M}, $h_{j,i}$ is additively masked by $\gamma_{i,j}$. Because node $i$ is uncorrupted, conditioned on the adversary's view, $\gamma_{i,j}$ remains a uniformly distributed random variable. It effectively acts as a one-time pad, ensuring that the hash $h_{j,i}$ reveals no information about $\alpha_{i,j}$. Consequently, $\alpha_{i,j}$, and by extension, the evaluation point $x$, remains uniformly distributed over $\mathbb{F}_q$ from the adversary's point of view.
Expressing the true evaluation point as $\alpha_{i,j} + \alpha_{j,i} = x - \alpha'_{j,i}$, we rewrite the adversary's success condition as a polynomial in $x$:
\begin{equation}\label{eq:cross_poly_unified}
\begin{aligned}
0 &= \Bigl(x^{n+2} - (x - \alpha'_{j,i})^{n+2}\Bigr) \\
&\quad + \sum_{\tau=1}^{n} [\hat{\boldsymbol{u}}_j]_\tau x^\tau - \sum_{\tau=1}^{n} [\boldsymbol{u}_j]_\tau (x - \alpha'_{j,i})^\tau \\
&\quad + \gamma'_{j,i}.
\end{aligned}
\end{equation}
We focus exclusively on harmful corruptions where the data payload is altered, meaning $\hat{\boldsymbol{u}}_j \neq \boldsymbol{u}_j$ (or equivalently, $\boldsymbol{u}'_j \neq \boldsymbol{0}$).
First, suppose that $\alpha'_{j,i} = 0$ (i.e., the adversary preserves or perfectly guesses the true evaluation parameter, $\hat{\alpha}_{j,i} = \alpha_{j,i}$). Equation \eqref{eq:cross_poly_unified} collapses to:
\[
0 = \sum_{\tau=1}^{n} ([\hat{\boldsymbol{u}}_j]_\tau - [\boldsymbol{u}_j]_\tau) x^\tau + \gamma'_{j,i}.
\]
Since $\hat{\boldsymbol{u}}_j \neq \boldsymbol{u}_j$, the right-hand side yields a non-zero polynomial in $x$ of degree at most $n$. For any fixed adversarial choices, this equality can hold for at most $n$ values of $x$. This yields the following bound
\[\small
\begin{aligned}
\Pr\Big(\text{Cross-Consistency Check passes} \mid & \, i \notin (\beta_{\mathrm{jam}} \cup \beta_{\mathrm{obs}}), \\
& \, j \in \beta_{\mathrm{jam}}, \alpha'_{j,i} = 0\Big) \le \frac{n}{q}.
\end{aligned}
\]
Now suppose that $\alpha'_{j,i} \neq 0$. The highest degree term in \eqref{eq:cross_poly_unified} arises uniquely from the binomial expansion of $x^{n+2} - (x - \alpha'_{j,i})^{n+2}$, which produces the leading term:
\[
(n+2)(\alpha'_{j,i})x^{n+1}.
\]
Because $\alpha'_{j,i} \neq 0$, this leading coefficient is non-zero. The remaining summations contribute terms of degree at most $n$. Therefore, the expression is guaranteed to be a non-zero polynomial in $x$ of degree exactly $n+1$, which possesses at most $n+1$ distinct roots. Note that the field size is assumed to be much larger than $n$ (e.g, $q \gg n$). Hence:
\[\small
\begin{aligned}
\Pr\Big(\text{Cross-Consistency Check passes}& \mid  \, i \notin (\beta_{\mathrm{jam}} \cup \beta_{\mathrm{obs}}), \\
& \, j \in \beta_{\mathrm{jam}}, \alpha'_{j,i} \neq 0\Big) \le \frac{n+1}{q}.
\end{aligned}
\]
Combining the two cases via the law of total probability, the probability that a parent node $j \in \beta_{\mathrm{jam}}$ successfully passes the pairwise consistency check with respect to an uncorrupted transmission from parent $i \notin (\beta_{\mathrm{obs}}\cup\beta_{\mathrm{jam}})$ under either an additive or an overwrite attack is bounded by
\[\small
\begin{aligned}
\Pr\Big(\text{Cross-Consistency Check passes}& \mid  \, i \notin (\beta_{\mathrm{jam}} \cup \beta_{\mathrm{obs}}), \\
& \, j \in \beta_{\mathrm{jam}}\Big) \le \frac{n+1}{q}.
\end{aligned}
\]
\paragraph*{Clique Size Analysis}
The terminal constructs an undirected graph $\mathcal{H}$ whose vertices correspond
to the transmissions received from the $d$ parent nodes. A link $(i,j)$ is included
if and only if the transmissions from vertices $i$ and $j$ satisfy the pairwise
cross-consistency check. Under the additive error model, vertices whose transmissions
fail the self-consistency check are first identified and removed. The terminal then
recovers the set of uncorrupted transmissions by finding the largest clique in
$\mathcal{H}$.
In order to uniquely identify the uncorrupted transmissions, the true clique $\mathcal{C}_{\text{true}}$, formed by all mutually consistent uncorrupted parent nodes $j \notin \beta_{\mathrm{jam}}$, must be larger than any clique the adversary can fabricate. The true clique size is
\[
|\mathcal{C}_{\text{true}}| = d - |\beta_{\mathrm{jam}}| = d - |\beta_e| - |\beta_{oe}|.
\]
By the probability bounds established in the cross-consistency analysis, the adversary  cannot, with high probability, forge a consistent transmission from a parent node $j \in \beta_{\mathrm{jam}}$ that passes the check against a transmission from a parent $i \notin (\beta_{\mathrm{obs}} \cup \beta_{\mathrm{jam}})$. Consequently, any
adversarially fabricated clique $\mathcal{C}_{\text{fake}}$ can only span nodes within $\beta_{\mathrm{obs}} \cup \beta_{\mathrm{jam}}$, and its maximum size depends on the error model:

\textbf{Case 1: Additive Jammer.} Under additive jamming, the adversary does not observe the transmissions of jam-only parent nodes in $\beta_e$. As established by the self-consistency check, any error injected into these unobserved parent nodes will cause their transmissions to fail verification with high probability, leading to their immediate removal from $\mathcal{H}$. Therefore, to construct a fake clique, the adversary can only utilize the parent nodes it simultaneously observes and corrupts ($\beta_{oe}$), and attempt to make their transmissions pairwise-consistent with the parent nodes it merely observes ($\beta_o$). The maximum size of any such fake clique is bounded by $|\mathcal{C}_{\mathrm{fake}}| \le |\beta_{oe}| + |\beta_o|$. For the terminal to successfully isolate the true clique, we require $|\mathcal{C}_{\mathrm{true}}| > |\mathcal{C}_{\mathrm{fake}}|$, which gives
\[
  d - |\beta_e| - |\beta_{oe}| > |\beta_{oe}| + |\beta_o|.
\]
Rearranging yields the necessary condition for correct decoding under additive jamming:
\[
  d > |\beta_o| + |\beta_e| + 2|\beta_{oe}|,
\]
which, since $|\beta_o| \le \ell_o$, $|\beta_e| \le \ell_e$, and $|\beta_{oe}| \le \ell_{oe}$ for every $\beta \in \mathcal{B}$, reduces to $d > \ell_o + \ell_e + 2\ell_{oe}$.

\textbf{Case 2: Overwrite Jammer.} An overwrite jammer replaces the transmissions of all jammed parent nodes in $\beta_{\mathrm{jam}} = \beta_e \cup \beta_{oe}$. Consequently, the adversary can forge arbitrary values on the jam-only parent nodes in $\beta_e$ to force them to pass the self-consistency check perfectly. Because these parent nodes are no longer discarded, the adversary can construct a larger fake clique by combining all manipulable parent nodes ($\beta_e \cup \beta_{oe}$) and forging consistency with the observed parent nodes ($\beta_o$). The maximum size of this fake clique is bounded by $|\mathcal{C}_{\mathrm{fake}}| \le |\beta_e| + |\beta_{oe}| + |\beta_o|$. For the terminal to successfully isolate the true clique, we require $|\mathcal{C}_{\mathrm{true}}| > |\mathcal{C}_{\mathrm{fake}}|$, which gives
\[
  d - |\beta_e| - |\beta_{oe}| > |\beta_e| + |\beta_{oe}| + |\beta_o|.
\]
Rearranging yields the necessary condition for correct decoding under overwrite jamming:
\[
  d > |\beta_o| + 2|\beta_e| + 2|\beta_{oe}|,
\]
which, since $|\beta_o| \le \ell_o$, $|\beta_e| \le \ell_e$, and $|\beta_{oe}| \le \ell_{oe}$ for every $\beta \in \mathcal{B}$, reduces to $d > \ell_o + 2\ell_e + 2\ell_{oe}$.

\subsection*{Decoding Error Probability}
% CHANGE 5: replaced "paths the adversary observes or controls" with beta notation
We now bound the overall probability of a decoding error, $P_{\mathrm{err}}$. As
established in the clique size analysis, the topological conditions guarantee that the
true clique $\mathcal{C}_{\text{true}}$ is larger than any fake clique whose
nodes are confined to $\beta_{\mathrm{obs}} \cup \beta_{\mathrm{jam}}$. Therefore, a
decoding error occurs if and only if the adversary successfully bypasses the verification
checks to artificially enlarge a fake clique.

To artificially incorporate a transmission from parent $i \notin (\beta_{\mathrm{jam}} \cup \beta_{\mathrm{obs}})$ into a fake clique, it is a necessary condition that the adversary successfully forges pairwise consistency between at least one parent node in $\beta_{\mathrm{jam}}$ and $i$. While forming a valid clique requires passing multiple simultaneous consistency checks, bounding the probability of at least one successful pairwise forgery serves as an upper bound on the overall error probability.
For an additive jammer, an error occurs if either a transmission from parent $j \in \beta_e$ successfully passes the Self-Consistency Check (defined as event $\mathcal{E}_{\text{self}}$) to expand the fake clique, or a transmission from parent $j \in \beta_{oe}$ successfully fakes pairwise consistency with an uncorrupted transmission from parent $i \notin (\beta_{\mathrm{jam}} \cup \beta_{\mathrm{obs}})$, defined as event $\mathcal{E}_{\text{cross}}$. Applying the union bound over these necessary conditions yields:
\[
\begin{aligned}
P_{\mathrm{err}}^{\text{add}} &\le \Pr(\mathcal{E}_{\text{self}}) + \Pr(\mathcal{E}_{\text{cross}}) \\
&\le \sum_{j \in \beta_e} \frac{n+1}{q} + \sum_{j \in \beta_{oe}} \sum_{i \notin (\beta_{\mathrm{jam}} \cup \beta_{\mathrm{obs}})} \frac{n+1}{q}.
\end{aligned}
\]
Since $|\beta_e| \le \ell_e$, $|\beta_{oe}| \le \ell_{oe}$, and the total number of secure paths is trivially upper-bounded by $d$, we have:
\[
P_{\mathrm{err}}^{\text{add}} \le \left(\ell_e + d \cdot \ell_{oe}\right) \frac{n+1}{q}.
\]
For an overwrite jammer, an error occurs if and only if any parent node $j \in \beta_{\mathrm{jam}}$ (where $\beta_{\mathrm{jam}} = \beta_e \cup \beta_{oe}$) successfully fakes pairwise consistency with an uncorrupted transmission from parent $i \notin (\beta_{\mathrm{jam}} \cup \beta_{\mathrm{obs}})$. Applying the union bound over all combinations:
\[
\begin{aligned}
P_{\mathrm{err}}^{\text{ow}} &\le \Pr(\mathcal{E}_{\text{cross}}) \\
&\le \sum_{j \in \beta_{\mathrm{jam}}} \sum_{i \notin (\beta_{\mathrm{jam}} \cup \beta_{\mathrm{obs}})} \frac{n+1}{q}.
\end{aligned}
\]
Since $|\beta_{\mathrm{jam}}| \le \ell_e + \ell_{oe}$, we obtain:
\[
P_{\mathrm{err}}^{\text{ow}} \le d (\ell_e + \ell_{oe}) \frac{n+1}{q}.
\]
In both cases, the total decoding error probability is strictly bounded by the worst-case scenario (noting that $\ell_e + d\cdot\ell_{oe} \le d(\ell_e + \ell_{oe})$ for any operational $d \ge 1$):
\[
P_{\mathrm{err}} \le P_{\mathrm{err}}^{\text{ow}} \le d (\ell_e + \ell_{oe}) \frac{n+1}{q}.
\]
\subsection*{Rate Analysis}
The secret message $\boldsymbol{X}$ contains $n(d' - \ell')$ symbols over $\mathbb{F}_q$. Substituting the operational path metrics for $d'$ and $\ell'$, the total size of the securely transmitted message is:
\[
|\boldsymbol{X}| = n(d - \ell_e - \ell_o - 2\ell_{oe}).
\]
To transmit this message, the source sends a vector $\boldsymbol{t}_i$ over each of the $d$ vertex disjoint paths. Each vector $\boldsymbol{t}_i$ consists of a payload $\boldsymbol{u}_i$ of $n$ symbols, two random vectors $\boldsymbol{\alpha}_{i}$ and $\boldsymbol{\gamma}_{i}$ each of $d$ symbols, and a hash vector of size $d$. Thus, the total transmission size per path is $n + 2d + d = n + 3d$ symbols.
Defining the secure rate $R$ as the ratio of the secret message size to the transmission size per path, we obtain:
\[
R = \frac{|\boldsymbol{X}|}{n + 3d} = \frac{n}{n + 3d}(d - \ell_e - \ell_o - 2\ell_{oe}).
\]
\subsection*{Key-Capacity}
We now formalize the key-capacity of our proposed scheme based on the established definition.
Recall that an instance has capacity $C$ if for any $\epsilon > 0$ and $\rho > 0$, there exists a corresponding key-code achieving a key-rate $R \ge C - \rho$ with a decoding error probability $P_{\mathrm{err}} \le \epsilon$.
From our rate analysis, the secure rate achieved by the scheme is:
\[
R = \frac{n}{n + 3d}(d - \ell_e - \ell_o - 2\ell_{oe}).
\]
For a fixed network topology where $d$ is constant, as the data payload blocklength $n$ becomes arbitrarily large ($n \to \infty$), the ratio $\frac{n}{n + 3d}$ converges to $1$. Consequently, the overhead penalty diminishes to zero, and the rate asymptotically approaches:
\[
\lim_{n \to \infty} R = d - \ell_e - \ell_o - 2\ell_{oe}.
\]
Thus, for any target gap $\rho > 0$, we can always select a sufficiently large payload parameter $n$ such that the achieved rate satisfies $R \ge (d - \ell_e - \ell_o - 2\ell_{oe}) - \rho$.
Simultaneously, our decoding error probability is bounded by the worst-case scenario:
\[
P_{\mathrm{err}} \le d (\ell_e + \ell_{oe}) \frac{n+1}{q}.
\]
Once a sufficiently large $n$ is chosen to satisfy the rate requirement above, the topological parameters $d, \ell_e, \ell_{oe}$, and $n$ are all fixed finite values. By constructing our code over a sufficiently large finite field $\mathbb{F}_q$ such that $q \gg n$, $\frac{n+1}{q}$ converges to zero. Therefore, for any strict error tolerance $\epsilon > 0$, there exists a large enough field size $q$ to guarantee that $P_{\mathrm{err}} \le \epsilon$.
\subsection*{Security Proof:}
Let $\ell' = \ell_o + \ell_{oe} = |\beta_{\mathrm{obs}}|$ be the maximum number of nodes the adversary can observe. We aim to show that the adversary gains no information about the secret message $\boldsymbol{X}$ from their observations, i.e.,
\begin{equation}
\begin{aligned}
I(\boldsymbol{X};\,& \{\boldsymbol{t}_i : i \in {\beta_{\mathrm{obs}}}\}) \\
&= I\bigl(\boldsymbol{X};\, \{\boldsymbol{u}_i\}_{i \in {\beta_{\mathrm{obs}}}},\, \{\boldsymbol{\alpha}_{i}, \boldsymbol{\gamma}_{i}\}_{i \in {\beta_{\mathrm{obs}}}},\, \{\boldsymbol{h}_{i}\}_{i \in {\beta_{\mathrm{obs}}}}\bigr) \\
&\overset{(a)}{=} I\bigl(\boldsymbol{X};\, \{\boldsymbol{u}_i\}_{i \in {\beta_{\mathrm{obs}}}},\, \{\boldsymbol{\alpha}_{i}, \boldsymbol{\gamma}_{i}\}_{i \in {\beta_{\mathrm{obs}}}},\, \{h_{ij}\}_{i \in {\beta_{\mathrm{obs}}}, j \notin {\beta_{\mathrm{obs}}}}\bigr) \\
&\overset{(b)}{=} I\bigl(\boldsymbol{X};\, \{\boldsymbol{u}_i\}_{i \in {\beta_{\mathrm{obs}}}},\, \{\boldsymbol{\alpha}_{i}, \boldsymbol{\gamma}_{i}\}_{i \in {\beta_{\mathrm{obs}}}}\bigr) \\
&\overset{(c)}{=} I\bigl(\boldsymbol{X};\, \{\boldsymbol{u}_i\}_{i \in {\beta_{\mathrm{obs}}}}\bigr) \\
&\overset{(d)}{=} 0,
\end{aligned}
\end{equation}
where step (a) follows because for all eavesdropped transmissions from parents $j \in {\beta_{\mathrm{obs}}}$, the set of hashes $\{h_{ij}\}_{i \in {\beta_{\mathrm{obs}}}, j \in {\beta_{\mathrm{obs}}}}$ defined in \eqref{eq-verify-M} can be deterministically computed from the payload shares $\{\boldsymbol{u}_i\}_{i \in {\beta_{\mathrm{obs}}}}$ and random vectors $\{\boldsymbol{\alpha}_{i}, \boldsymbol{\gamma}_{i}\}_{i \in {\beta_{\mathrm{obs}}}}$ already present in the adversary's observation set.
Step (b) follows directly from the hash construction. For an unobserved target transmission from parent $j \notin {\beta_{\mathrm{obs}}}$ evaluated on an eavesdropped transmission from parent $i \in {\beta_{\mathrm{obs}}}$, the hash is given by:
\[
h_{ij} = \sum_{\tau=1}^{n} [\boldsymbol{u}_{j}]_{\tau} (\alpha_{i,j} + \alpha_{j,i})^{\tau} + (\alpha_{i,j} + \alpha_{j,i})^{n+2} + \gamma_{i,j} + \gamma_{j,i}.
\]
Crucially, for any $j \notin {\beta_{\mathrm{obs}}}$, the symbol $\gamma_{j,i}$ is drawn uniformly at random from $\mathbb{F}_q$ and is jointly independent of the secret message $\boldsymbol{X}$, $\{\boldsymbol{u}_i\}_{i \in {\beta_{\mathrm{obs}}}}$, and $\{\boldsymbol{\alpha}_{i}, \boldsymbol{\gamma}_{i}\}_{i \in {\beta_{\mathrm{obs}}}}$. By acting as a linear additive term over a finite field, $\gamma_{j,i}$ serves as a perfect one-time pad. This guarantees that the hashes $\{h_{ij}\}_{i \in {\beta_{\mathrm{obs}}}, j \notin {\beta_{\mathrm{obs}}}}$ are themselves uniformly distributed and jointly independent of the adversary's available observations. Since they provide zero mutual information about the payload, they can be safely dropped from the conditioning set.
Step (c) follows since the random vectors in $\{\boldsymbol{\alpha}_{i}, \boldsymbol{\gamma}_{i}\}_{i \in {\beta_{\mathrm{obs}}}}$ are generated independently of both the message $\boldsymbol{X}$ and the payload shares $\{\boldsymbol{u}_i\}_{i \in {\beta_{\mathrm{obs}}}}$.
To establish step (d) and prove perfect secrecy, we must show that the adversary gains zero information about the secret message $\boldsymbol{X}$ from the exposed payloads. Let $\boldsymbol{U}_{\beta_{\mathrm{obs}}} \in \mathbb{F}_q^{n \times |{\beta_{\mathrm{obs}}}|}$ denote the matrix consisting of the columns of $\boldsymbol{U}$ indexed by ${\beta_{\mathrm{obs}}}$. The adversary's observation of these payloads can be expressed algebraically as an affine transformation:
\[
\boldsymbol{U}_{\beta_{\mathrm{obs}}} = \boldsymbol{X}\boldsymbol{V}_{X,{\beta_{\mathrm{obs}}}} + \boldsymbol{R}\boldsymbol{V}_{R,{\beta_{\mathrm{obs}}}}
\]
where $\boldsymbol{V}_{X,{\beta_{\mathrm{obs}}}} \in \mathbb{F}_q^{(d'-\ell') \times |{\beta_{\mathrm{obs}}}|}$ and $\boldsymbol{V}_{R,{\beta_{\mathrm{obs}}}} \in \mathbb{F}_q^{\ell' \times |{\beta_{\mathrm{obs}}}|}$ are the corresponding submatrices of the partitioned Vandermonde matrix $\boldsymbol{V}$.
Because $\boldsymbol{R}$ is generated with elements chosen independently and uniformly at random from $\mathbb{F}_q$, and $\boldsymbol{V}_{R,{\beta_{\mathrm{obs}}}}$ has full column rank (since any square submatrix of a Vandermonde matrix is invertible and $|{\beta_{\mathrm{obs}}}| \le \ell'$), the product $\boldsymbol{R}\boldsymbol{V}_{R,{\beta_{\mathrm{obs}}}}$ is uniformly distributed over $\mathbb{F}_q^{n \times |{\beta_{\mathrm{obs}}}|}$.
Consequently, $\boldsymbol{R}\boldsymbol{V}_{R,{\beta_{\mathrm{obs}}}}$ acts as a one-time pad. The addition of this uniform random matrix masks the deterministic term $\boldsymbol{X}\boldsymbol{V}_{X,{\beta_{\mathrm{obs}}}}$. As a result, the distribution of the adversary's observation $\boldsymbol{U}_{\beta_{\mathrm{obs}}}$ is strictly uniform and statistically independent of $\boldsymbol{X}$. This implies that the following mutual information is exactly zero:
\[
I(\boldsymbol{X}; \{\boldsymbol{u}_i\}_{i \in {\beta_{\mathrm{obs}}}}) = 0.
\]
This concludes the security proof, ensuring the eavesdropper gains no information regardless of their observations.
\end{proof}

\section{Proof of Theorem \ref{T1}}\label{T1Proof}
\begin{proof}
Consider a network in which all nodes are $d$-vertex connected from the source. Let $d' = d - (\ell_e + \ell_{oe})$. Each non-terminal node $v_i \in \mathcal{V}$ is assigned a Vandermonde vector indexed by $i$, denoted by $\boldsymbol{v}_i \in \mathbb{F}_{q}^{d'}$, and for each terminal set $\mathcal{T}_i \subseteq \mathcal{T}$, all terminal nodes $v_t \in \mathcal{T}_i$ share the same Vandermonde vector, denoted by $\boldsymbol{v}_{\mathcal{T}_i} \in \mathbb{F}_{q}^{d'}$. Define the key for terminal set $\mathcal{T}_i$ as
\begin{equation}\label{key-defT1-new}\small
\boldsymbol{K}_{\mathcal{T}_i} = \bigl[\boldsymbol{M}^{(1)} \boldsymbol{v}_{\mathcal{T}_i}, \dots, \boldsymbol{M}^{(n)} \boldsymbol{v}_{\mathcal{T}_i}\bigr]_{1:d'-(\ell_o+\ell_{oe})} \in \mathbb{F}_{q}^{(d'-(\ell_o + \ell_{oe})) \times n},
\end{equation}
where for each $i \in [n]$, $\boldsymbol{M}^{(i)} \in \mathbb{F}_{q}^{d' \times d'}$ is an independent symmetric random matrix whose upper-triangular entries are chosen independently and uniformly from $\mathbb{F}_q$.
\subsection*{Encoding:}
The encoding is described in Algorithm~\ref{alg:keycast_active} and proceeds as follows.
\paragraph*{Step 1}
For each node $j \in \mathcal{V}\setminus\{s\}$, let $b_j \le d$ be the number of edges from $s$ to $j$. The source generates an independent Vandermonde matrix $\boldsymbol{V}_{s\to j} \in \mathbb{F}_q^{d' \times b_j}$ whose columns $\boldsymbol{v}_{1}^{s\to j}, \dots, \boldsymbol{v}_{b_j}^{s\to j} \in \mathbb{F}_q^{d'}$ are distinct Vandermonde vectors unused elsewhere in the network. Over the $i$-th edge ($1 \le i \le b_j$) from $s$ to $j$, the source computes and transmits the data vector
$$\boldsymbol{s}_{s \to j}^{(i)} = [\boldsymbol{v}_{j}^T \boldsymbol{M}^{(1)} \boldsymbol{v}_{i}^{s \to j},\; \dots,\; \boldsymbol{v}_{j}^T \boldsymbol{M}^{(n)} \boldsymbol{v}_{i}^{s \to j}]^{T} \in \mathbb{F}_{q}^{n}.$$
Since the source knows all randomness in the network, it can pre-compute, for every node $j$ and every parent node $j_p \in \mathcal{P}(j)$, the data vector
\begin{equation}\label{eq:data_vector}
\boldsymbol{s}_{j_p \to j} = [\boldsymbol{v}_{j}^T \boldsymbol{M}^{(1)} \boldsymbol{v}_{j_p},\; \dots,\; \boldsymbol{v}_{j}^T \boldsymbol{M}^{(n)} \boldsymbol{v}_{j_p}]^{T} \in \mathbb{F}_{q}^{n}
\end{equation}
that will be transmitted over the edge $e=(j_p,j)\in\mathcal{E}$. Then, for each node $j$, the source node generates an independent random symbol $\alpha_j$, chosen uniformly from $\mathbb{F}_q$, and for each $j_p \in \mathcal{P}(j)$, computes a verification hash value as
\begin{equation}\label{eq:hash_generation}
h_{j_p \to j} = \sum_{\tau=1}^{n} [\boldsymbol{s}_{j_p \to j}]_\tau (\alpha_j)^\tau \in \mathbb{F}_q.
\end{equation}
Since node $j$ is $d$-vertex connected from the source, by Lemma~\ref{lem:parallel_edge_active_secure}, the source generates $\alpha_j$ and $\{h_{j_p \to j}\}_{j_p \in \mathcal{P}(j)}$ and securely transmits the tuple
\[
\left( \alpha_j,\, \bigl( h_{j_p \to j} \bigr)_{j_p \in \mathcal{P}(j)} \right)
\]
to node $j$ through any $d$ vertex-disjoint paths from $s$ to $j$, provided that $d > \ell_o + \ell_e + 2\ell_{oe}$ under an additive adversary and $d > \ell_o + 2\ell_e + 2\ell_{oe}$ under an overwrite adversary. Each intermediate node on these paths serves as a relay, forwarding the received symbols toward $j$ as the topological order advances, so that node $j$ has received the tuple $(\alpha_j, \{h_{j_p \to j}\}_{j_p \in \mathcal{P}(j)})$ by the time its turn arrives in the topological order. Note that the transmissions using Lemma~\ref{lem:parallel_edge_active_secure} operate over a smaller finite field of size $\hat{q}$, whereas the primary transmissions of this theorem operate over a much larger field of size $q$, such that $\hat{q} \le q$. This dual-field approach is practically justifiable because physical transmissions over the network ultimately occur in bits; we can naturally model each field symbol as a structured collection of bits (e.g., allocating $\log_2(q)$ bits to represent a symbol in $\mathbb{F}_q$).
\paragraph*{Step 2}
By induction, let $j \in \mathcal{V} \setminus \{s\}$ denote the next node in the topological order such that every preceding node, including any parent node $j_p \in \mathcal{P}(j)$, has already obtained its share
$$\boldsymbol{S}_{j_p} = [\boldsymbol{M}^{(1)} \boldsymbol{v}_{j_p}, \dots, \boldsymbol{M}^{(n)} \boldsymbol{v}_{j_p}] \in \mathbb{F}_q^{d' \times n},$$
with the base case being the source node $s$. Each non-source parent $j_p \in \mathcal{P}(j)$ computes $\boldsymbol{s}_{j_p \to j}$ as defined in \eqref{eq:data_vector} from its recovered share $\boldsymbol{S}_{j_p}$, and transmits it to $j$. Simultaneously, as described in Step 1, node $j$ receives $\alpha_j$ and the hash values $\{h_{j_p \to j}\}_{j_p \in \mathcal{P}(j)}$ relayed from the source via Lemma~\ref{lem:parallel_edge_active_secure} over the $d$ vertex-disjoint paths. For each $j_p \in \mathcal{P}(j)$, node $j$ verifies the received vector $\hat{\boldsymbol{s}}_{j_p \to j}$ by checking:
\begin{equation}\label{eq-new}
h_{j_p \to j} = \sum_{i=1}^{n} [\hat{\boldsymbol{s}}_{j_p \to j}]_{i} (\alpha_j)^{i}.
\end{equation}
If this equality does not hold, node $j$ declares the transmission corrupted and treats it as an erasure. In the Decoding Error Probability section, we prove that the adversary cannot forge the transmission and bypass this verification except with negligibly small probability. Because the hash check reduces any active corruption to a known erasure, and since at most $\ell_e+\ell_{oe}$ nodes in the parent set of $j$ can belong to $\beta_{\mathrm{jam}}$, node $j$ receives at most $\ell_e+\ell_{oe}$ erasures and zero undetected errors. Let $\mathcal{P}(j)=\{j_1,\dots,j_{d-b_j}\}$ be the non-source parent set of node $j$, and define the Vandermonde matrix
$$\boldsymbol{V}_j = \bigl[\boldsymbol{V}_{s \to j}, \boldsymbol{v}_{j_1}, \dots, \boldsymbol{v}_{j_{d-b_j}}\bigr] \in \mathbb{F}_{q}^{d' \times d}.$$
For each $i\in[n]$, node $j$ receives $b_j$ symbols from the source and $d-b_j$ symbols from its non-source parents, yielding a row vector of the form:
$$\boldsymbol{v}_{j}^{T}\boldsymbol{M}^{(i)}\boldsymbol{V}_j \in \mathbb{F}_{q}^{1 \times d}.$$
Since $\boldsymbol{V}_j$ is a Vandermonde matrix with distinct evaluation points, it generates a $(d,d')$ MDS code over $\mathbb{F}_{q}$ with minimum distance $d_{\min} = d-d'+1 = \ell_e+\ell_{oe}+1$. An MDS code with minimum distance $d_{\min}$ can correct up to $d_{\min}-1 = \ell_e+\ell_{oe}$ erasures, which covers all induced erasures \cite{macwilliams1977theory}. Thus, for every $i\in[n]$, node $j$ recovers $\boldsymbol{M}^{(i)}\boldsymbol{v}_{j}$ and reconstructs its share as
$$\boldsymbol{S}_{j} = [\boldsymbol{M}^{(1)} \boldsymbol{v}_{j}, \boldsymbol{M}^{(2)} \boldsymbol{v}_{j}, \dots, \boldsymbol{M}^{(n)} \boldsymbol{v}_{j}].$$
By induction over the topological ordering, every node reconstructs its share. In particular, every terminal node $t \in \mathcal{T}_i$ reconstructs
$$\boldsymbol{S}_{t} = [\boldsymbol{M}^{(1)} \boldsymbol{v}_{\mathcal{T}_i}, \dots, \boldsymbol{M}^{(n)} \boldsymbol{v}_{\mathcal{T}_i}].$$
\subsection*{Decoding:}
\paragraph*{Step 3}
For each terminal node $t \in \mathcal{T}_i$, the recovered key is the first $d'-(\ell_o+\ell_{oe})$ rows of $\boldsymbol{S}_{t}$, namely
$$\small \boldsymbol{K}_{\mathcal{T}_i} = \bigl[\boldsymbol{M}^{(1)} \boldsymbol{v}_{\mathcal{T}_i}, \dots, \boldsymbol{M}^{(n)} \boldsymbol{v}_{\mathcal{T}_i}\bigr]_{1:d'-(\ell_o+\ell_{oe})} \in \mathbb{F}_{q}^{(d'-(\ell_o + \ell_{oe})) \times n}.$$
\subsection*{Decoding Error Probability}
The overall decoding error probability of the proposed scheme arises from two sources. The first is the probability that the local hash verification fails to detect an adversarially corrupted transmission. The second arises from the application of Lemma~\ref{lem:parallel_edge_active_secure} used to securely deliver the verification parameters $\left(\alpha_j, \{h_{j_p \to j}\}_{j_p \in \mathcal{P}(j)}\right)$ to each node $j$. Let $P_{pd}$ denote the error probability of each such application. In what follows, we first bound the probability of a hash verification failure, then combine this with $P_{pd}$ via a union bound to obtain the overall decoding error probability.

Consider node $j$ and a corrupted parent node $j_p \in \beta_{\mathrm{jam}}$. The adversary attempts to forge a transmission $\hat{\boldsymbol{s}}_{j_p \to j} \neq \boldsymbol{s}_{j_p \to j}$ while satisfying \eqref{eq-new}. By the adversary model, when injecting an error at $j_p$, the adversary's valid observation set is $\mathcal{O}(j_p) = \{u \in \beta_{\mathrm{obs}} \mid j_p \not\prec_{\mathcal{G}} u\}$, i.e., only those observed nodes that are not descendants of $j_p$. Since $(j_p, j) \in \mathcal{E}$, we have $j_p \prec_{\mathcal{G}} j$, so $j \notin \mathcal{O}(j_p)$. Therefore, $\alpha_j$ is not accessible to the adversary when determining its corruption at $j_p$, and hence remains uniformly distributed and independent of the adversary's corruption strategy. Substituting the true hash definition \eqref{eq:hash_generation} into the verification check \eqref{eq-new} yields
\[
\sum_{i=1}^{n} \left( [\hat{\boldsymbol{s}}_{j_p \to j}]_i - [\boldsymbol{s}_{j_p \to j}]_i \right) (\alpha_j)^i = 0.
\]
Let $\boldsymbol{s}'_{j_p \to j} \triangleq \hat{\boldsymbol{s}}_{j_p \to j} - \boldsymbol{s}_{j_p \to j}$ denote the adversarial error vector. The verification condition reduces to
\begin{equation}\label{eq:error_poly_local}
\sum_{i=1}^{n} [\boldsymbol{s}'_{j_p \to j}]_i (\alpha_j)^i = 0.
\end{equation}
If $\boldsymbol{s}'_{j_p \to j}=\boldsymbol{0}$, no corruption has occurred and the check passes trivially. Otherwise, \eqref{eq:error_poly_local} is a non-zero polynomial in $\alpha_j$ of degree at most $n$. Since $\alpha_j$ is independent of $\boldsymbol{s}'_{j_p \to j}$, the polynomial has at most $n$ roots in $\mathbb{F}_q$, giving
\[
\Pr\!\left( \text{Local Check passes} \mid j_p \in \beta_{\mathrm{jam}} \right) \le \frac{n}{q}.
\]
To establish the overall decoding error probability, we apply a union bound over all potential failure events across the network. The application of Lemma~\ref{lem:parallel_edge_active_secure} is executed $|\mathcal{V}|-1$ times, once for each non-source node, contributing at most $(|\mathcal{V}|-1) P_{pd}$ to the total error. During the data transmission phase, each of the $|\mathcal{V}|-1$ non-source nodes evaluates its incoming transmissions, and the adversary can corrupt at most $\ell_e + \ell_{oe}$ nodes in any parent set. Summing the local failure probability over all corrupted nodes and all verifying nodes yields an aggregate forgery probability bounded by $(|\mathcal{V}|-1)(\ell_e + \ell_{oe})\frac{n}{q}$. Consequently, the overall decoding error probability $P_{\mathrm{err}}$ is bounded by
$$P_{\mathrm{err}} \le (|\mathcal{V}|-1) (P_{pd} + (\ell_e + \ell_{oe})\frac{n}{q}).$$

\subsection*{Rate Analysis}
The secret key $\boldsymbol{K}_{\mathcal{T}_i}$ contains $n(d-\ell_o-\ell_e-2\ell_{oe})$ symbols over $\mathbb{F}_q$. The total network blocklength consists of a data transmission phase and the overhead from the secure delivery of verification parameters. The data transmission phase requires $n$ transmissions. Let $R_{pd}$ denote the transmission rate achieved by Lemma~\ref{lem:parallel_edge_active_secure}. Since the source securely transmits $d+1$ verification symbols to each of the $|\mathcal{V}|-1$ non-source nodes, the resulting blocklength overhead is $(|\mathcal{V}|-1)\frac{d+1}{R_{pd}}$. Therefore, the achieved secure key rate is
\[
R = \frac{n(d-\ell_o-\ell_e-2\ell_{oe})}{n+(|\mathcal{V}|-1)\frac{d+1}{R_{pd}}}.
\]
\subsection*{Key-Capacity}
We now formalize the key-capacity of our proposed scheme based on the established definition. Recall that an instance has key-capacity $C$ if for any $\epsilon > 0$ and $\rho > 0$, there exists a corresponding key-code achieving a key-rate $R \ge C - \rho$ with a decoding error probability $P_{\mathrm{err}} \le \epsilon$. From our rate analysis, the secure rate achieved by the scheme is
\[
R = \frac{n(d-\ell_o-\ell_e-2\ell_{oe})}{n+(|\mathcal{V}|-1)\frac{d+1}{R_{pd}}}.
\]
For a fixed network topology and $R_{pd}>0$, as the data payload blocklength $n$ becomes arbitrarily large ($n \to \infty$), the ratio $\frac{n}{n + (|\mathcal{V}|-1)\frac{d+1}{R_{pd}}}$ converges to $1$. Consequently, the overhead penalty diminishes to zero, and the rate asymptotically approaches:
\[
\lim_{n \to \infty} R = d - \ell_e - \ell_o - 2\ell_{oe}.
\]
Thus, for any target gap $\rho > 0$, we can always select a sufficiently large payload parameter $n$ such that $R \ge (d - \ell_e - \ell_o - 2\ell_{oe}) - \rho$. Simultaneously, the decoding error probability satisfies
\[
P_{\mathrm{err}} \le (|\mathcal{V}|-1) (P_{pd} + (\ell_e + \ell_{oe})\frac{n}{q}).
\]
Once $n$ is fixed, by constructing the code over a sufficiently large field $\mathbb{F}_q$ with $q \gg n$, the term $(\ell_e + \ell_{oe})\frac{n}{q}$ converges to zero. Furthermore, from the proof of Lemma~\ref{lem:parallel_edge_active_secure}, $P_{pd} \to 0$ as $\hat{q} \to \infty$. Therefore, for any $\epsilon > 0$, there exist sufficiently large field sizes $q$ and $\hat{q}$ such that $P_{\mathrm{err}} \le \epsilon$.
\subsection*{Security Proof:}
Let $\ell' = \ell_o + \ell_{oe}$ be the maximum number of nodes the adversary can observe. We aim to show that for any terminal set $\mathcal{T}_i \subseteq \mathcal{T}$ such that no terminal in $\mathcal{T}_i$ is observed, the adversary gains no information about the key $\boldsymbol{K}_{\mathcal{T}_i}$, i.e.,
\begin{equation}\small
\begin{aligned}
I\big(\boldsymbol{K}_{\mathcal{T}_i} &;\, \{X_v^{\text{main}}\}_{v\in\beta_{\mathrm{obs}}}, \{X_v^{\text{pd}}\}_{v\in\beta_{\mathrm{obs}}}\big) \\
&= I\big(\boldsymbol{K}_{\mathcal{T}_i};\, \{X_v^{\text{main}}\}_{v\in\beta_{\mathrm{obs}}}\big) \\
&\quad + I\big(\boldsymbol{K}_{\mathcal{T}_i};\, \{X_v^{\text{pd}}\}_{v\in\beta_{\mathrm{obs}}} \mid \{X_v^{\text{main}}\}_{v\in\beta_{\mathrm{obs}}}\big) \\
&= 0,
\end{aligned}
\end{equation}
where $X_v^{\text{main}}$ denotes the data held at node $v$ during the main transmission phase (node shares, hashes, and $\alpha_v$), and $X_v^{\text{pd}}$ denotes the data intercepted at node $v$ during the applications of Lemma~\ref{lem:parallel_edge_active_secure} (the physical transmissions routed through $v$ to deliver verification parameters to other nodes).

To evaluate the conditional term 
$$I\big(\boldsymbol{K}_{\mathcal{T}_i};\, \{X_v^{\text{pd}}\}_{v\in\beta_{\mathrm{obs}}} \mid \{X_v^{\text{main}}\}_{v\in\beta_{\mathrm{obs}}}\big),$$
we refer to the encoding in the proof of Lemma~\ref{lem:parallel_edge_active_secure}. To ensure a strict upper bound on information leakage, we assume the worst case in which $\beta_{\mathrm{obs}}$ intersects every transmission path utilized by each application of the lemma. For each such application, the adversary's observation $\{X_v^{\text{pd}}\}_{v\in\beta_{\mathrm{obs}}}$ takes the form
$$\{X_v^{\text{pd}}\}_{v\in\beta_{\mathrm{obs}}} = \boldsymbol{X}_{pd}\boldsymbol{V}_{X,\beta_{\mathrm{obs}}}^{(pd)} + \boldsymbol{R}\boldsymbol{V}_{R,\beta_{\mathrm{obs}}}^{(pd)},$$
where $\boldsymbol{X}_{pd}$ contains the verification payloads ($\alpha_j$ and hashes $\{h_{j_p\to j}\}_{j_p \in \mathcal{P}(j)}$) and $\boldsymbol{R}$ is a fresh independent random masking matrix generated for each application. Since $\boldsymbol{V}_{R,\beta_{\mathrm{obs}}}^{(pd)}$ has full column rank and $\boldsymbol{R}$ is uniform and independent of $\boldsymbol{K}_{\mathcal{T}_i}$ and $\{X_v^{\text{main}}\}_{v\in\beta_{\mathrm{obs}}}$, the product $\boldsymbol{R}\boldsymbol{V}_{R,\beta_{\mathrm{obs}}}^{(pd)}$ is uniformly distributed over the field and acts as a perfect one-time pad, masking $\boldsymbol{X}_{pd}\boldsymbol{V}_{X,\beta_{\mathrm{obs}}}^{(pd)}$ entirely. Consequently,
\begin{equation}\label{CondiT11}
I\big(\boldsymbol{K}_{\mathcal{T}_i};\, \{X_v^{\text{pd}}\}_{v\in\beta_{\mathrm{obs}}} \mid \{X_v^{\text{main}}\}_{v\in\beta_{\mathrm{obs}}}\big) = 0.
\end{equation}

Next, for the main transmission term, we obtain
\begin{equation}\label{CondiT12}\small
    \begin{aligned}
    I(&\boldsymbol{K}_{\mathcal{T}_i};\, \{X_v^{\text{main}}:v\in\beta_{\mathrm{obs}}\})\\
    &= I\!\big([\boldsymbol{M}^{(1)} \boldsymbol{v}_{\mathcal{T}_i}, \dots, \boldsymbol{M}^{(n)} \boldsymbol{v}_{\mathcal{T}_i}]_{1:d'-(\ell_o + \ell_{oe})};\\
    &\qquad \boldsymbol{M}^{(1)}\boldsymbol{V}_{\epsilon},
      \ldots,
      \boldsymbol{M}^{(n)}\boldsymbol{V}_{\epsilon},\\
      &\qquad\{\alpha_j\}_{j\in\beta_{\mathrm{obs}}},
      \bigl\{\{h_{j_p\to j}\}_{j_p\in\mathcal{P}(j)}\bigr\}_{j\in\beta_{\mathrm{obs}}}\big) \\
    &\overset{(a)}{=} I\!\big([\boldsymbol{M}^{(1)} \boldsymbol{v}_{\mathcal{T}_i}, \dots, \boldsymbol{M}^{(n)} \boldsymbol{v}_{\mathcal{T}_i}]_{1:d'-(\ell_o + \ell_{oe})};\\
    &\qquad \boldsymbol{M}^{(1)}\boldsymbol{V}_{\epsilon},
      \ldots,
      \boldsymbol{M}^{(n)}\boldsymbol{V}_{\epsilon},
      \{\alpha_j\}_{j\in\beta_{\mathrm{obs}}},
    \big) \\
    &\overset{(b)}{=} I\!\big([\boldsymbol{M}^{(1)} \boldsymbol{v}_{\mathcal{T}_i}, \dots, \boldsymbol{M}^{(n)} \boldsymbol{v}_{\mathcal{T}_i}]_{1:d'-(\ell_o + \ell_{oe})};\\
    &\qquad \boldsymbol{M}^{(1)}\boldsymbol{V}_{\epsilon},
      \ldots,
      \boldsymbol{M}^{(n)}\boldsymbol{V}_{\epsilon}, \big) \\
    &\overset{(c)}{=}
    \sum_{\tau=1}^{n}
    I\left(
    [\boldsymbol{M}^{(\tau)} \boldsymbol{v}_{\mathcal{T}_i}]_{1:d'-(\ell_o + \ell_{oe})}
    ;
    \boldsymbol{M}^{(\tau)}\boldsymbol{V}_\epsilon
    \right) \\
    &= \sum_{\tau=1}^{n} \Big( H([\boldsymbol{M}^{(\tau)}\boldsymbol{v}_{\mathcal{T}_i}]_{1:d'-\ell'}) \\
    & \qquad - H\!\left([\boldsymbol{M}^{(\tau)}\boldsymbol{v}_{\mathcal{T}_i}]_{1:d'-\ell'}\,\middle|\,\boldsymbol{M}^{(\tau)}\boldsymbol{V}_\epsilon\right) \Big) \\
    &\overset{(d)}{=} \sum_{\tau=1}^{n} \left( H([\boldsymbol{M}^{(\tau)}\boldsymbol{v}_{\mathcal{T}_i}]_{1:d'-\ell'}) - H\!\left([\boldsymbol{M}^{(\tau)}\boldsymbol{v}_{\mathcal{T}_i}]_{1:d'-\ell'}\right) \right) \\
    &= 0,
    \end{aligned}
\end{equation}
where $\boldsymbol{V}_\epsilon \in \mathbb{F}_q^{d' \times \ell'}$ has columns $\{\boldsymbol{v}_j\}_{j \in \beta_{\mathrm{obs}}}$; (a) follows because, for all eavesdropped nodes $j \in \beta_{\mathrm{obs}}$, the hashes $\{h_{j_p\to j}\}_{j_p\in\mathcal{P}(j)}$ defined in \eqref{eq:hash_generation} can be deterministically computed from $\{\alpha_j\}_{j\in\beta_{\mathrm{obs}}}$ and $\{\boldsymbol{M}^{(\tau)}\boldsymbol{v}_j\}_{j\in\beta_{\mathrm{obs}}, \tau\in[n]}$ already present in $\{\boldsymbol{M}^{(\tau)}\boldsymbol{V}_{\epsilon}\}_{\tau \in [n]}$; (b) follows because $\{\alpha_j\}_{j \in \beta_{\mathrm{obs}}}$ are generated independently and uniformly, hence independent of $\{\boldsymbol{M}^{(\tau)}\}_{\tau\in[n]}$ and any function thereof; (c) follows since $\boldsymbol{M}^{(1)},\dots,\boldsymbol{M}^{(n)}$ are independent, making the pairs $\bigl([\boldsymbol{M}^{(\tau)}\boldsymbol{v}_{\mathcal{T}_i}]_{1:d'-(\ell_o+\ell_{oe})}, \boldsymbol{M}^{(\tau)}\boldsymbol{V}_\epsilon\bigr)$ independent across $\tau$; and (d) follows from \cite[Lemma~3.2]{eavsPaper} and the fact that $\boldsymbol{v}_{\mathcal{T}_i}$ is linearly independent of the columns of $\boldsymbol{V}_\epsilon$, since no terminal in $\mathcal{T}_i$ is observed. 

By combining \eqref{CondiT11} and \eqref{CondiT12}, the total information leakage is 
$$I\big(\boldsymbol{K}_{\mathcal{T}_i};\, \{X_v^{\text{main}}\}_{v\in\beta_{\mathrm{obs}}}, \{X_v^{\text{pd}}\}_{v\in\beta_{\mathrm{obs}}}\big) = 0.$$
Hence, the eavesdropper obtains no information about the key $\boldsymbol{K}_{\mathcal{T}_i}$, ensuring perfect secrecy.
\end{proof}

\section{Proof of Lemma \ref{lem:subnetwork-indegree}}\label{L1Proof}
\begin{proof}
The proof closely follows \cite{lovasz1973connectivity}, adapted to
vertex-based connectivity via a standard node-splitting reduction.

\paragraph*{Node splitting}
We reduce vertex connectivity to edge connectivity using the following
construction.  Build a split digraph
$\hat{\mathcal{G}}=(\hat{\mathcal{V}},\hat{\mathcal{E}})$ by replacing every
non-source node $v\in\mathcal{V}\setminus\{s\}$ with a pair of nodes
$v^{\rm in}$ and $v^{\rm out}$ joined by a single directed
\emph{edge} $v^{\rm in}\!\to\! v^{\rm out}$:
\begin{align*}
  \hat{\mathcal{V}}
    &= \{s\}
       \;\cup\;
       \bigl\{v^{\rm in},\,v^{\rm out}
             \;:\; v\in\mathcal{V}\setminus\{s\}\bigr\},\\[4pt]
  \hat{\mathcal{E}}
    &= \bigl\{(s,\,v^{\rm in})
              \;:\; (s,v)\in\mathcal{E}\bigr\}\\
     &\qquad\cup\;
       \bigl\{(u^{\rm out},\,v^{\rm in})
              \;:\; (u,v)\in\mathcal{E},\;u\neq s\bigr\}\\
     &\qquad\cup\;
       \bigl\{(v^{\rm in},\,v^{\rm out})
              \;:\; v\in\mathcal{V}\setminus\{s\}\bigr\}.
\end{align*}
Let $c_{\hat{\mathcal{G}}}(s,v)$ denote the maximum number of
\emph{edge}-disjoint directed paths from $s$ to $v$ in $\hat{\mathcal{G}}$.
The splitting establishes
\begin{equation}
  c_{\hat{\mathcal{G}}}(s,\,v^{\rm in})
  \;=\;
  \kappa_{\mathcal{G}}(s,v)
  \qquad
  \forall\,v\in\mathcal{V}\setminus\{s\},
  \label{eq:split-correspondence}
\end{equation}
and, since every path to $v^{\rm out}$ must cross the single bottleneck
edge $v^{\rm in}\to v^{\rm out}$,
\begin{equation}
  c_{\hat{\mathcal{G}}}(v^{\rm in},\,v^{\rm out})
  \;=\;
  1
  \qquad
  \forall\,v\in\mathcal{V}\setminus\{s\}.
  \label{eq:split-correspondence-out}
\end{equation}
Indeed, every internally vertex-disjoint path
$s\!=\!v_0\!\to\!v_1\!\to\!\cdots\!\to\!v_k\!=\!v$ in $\mathcal{G}$
maps to an edge-disjoint path
$s\to v_1^{\rm in}\!\to\!v_1^{\rm out}\!\to\!\cdots\!\to\!v^{\rm in}$
in $\hat{\mathcal{G}}$, where vertex-disjointness at each internal node
$v_i$ corresponds precisely to using the bottleneck edge
$v_i^{\rm in}\!\to\!v_i^{\rm out}$ at most once.  The converse direction
is analogous.  The rest of the proof is identical to that provided in
\cite{lovasz1973connectivity} on $\hat{\mathcal{G}}$, but for clarity we
provide it here.

\paragraph*{Minimal subgraph}
Let $\hat{\mathcal{G}}'$ be a \emph{minimal} subgraph of $\hat{\mathcal{G}}$
(with the same vertex set $\hat{\mathcal{V}}$) satisfying
\begin{equation}
  c_{\hat{\mathcal{G}}'}(s,\,v^{\rm in})
  \;=\;
  \kappa_{\mathcal{G}}(s,v)
  \qquad
  \forall\,v\in\mathcal{V}\setminus\{s\},
  \label{eq:minimality-condition}
\end{equation}
and
\begin{equation}
  c_{\hat{\mathcal{G}}'}(v^{\rm in},\,v^{\rm out})
  \;=\;
  1
  \qquad
  \forall\,v\in\mathcal{V}\setminus\{s\}.
  \label{eq:minimality-condition-out}
\end{equation}
Such a subgraph exists, take $\hat{\mathcal{G}}$ itself, and a minimal one
is obtained by removing edges from $\hat{\mathcal{E}}$ one at a time as
long as \eqref{eq:minimality-condition} and
\eqref{eq:minimality-condition-out} continue to hold.  For a set
$X\subseteq\hat{\mathcal{V}}\setminus\{s\}$, write $\phi(X)$ for the
number of edges entering $X$ from $\hat{\mathcal{V}}\setminus X$ in
$\hat{\mathcal{G}}'$.

\paragraph*{Regular sets}
Call a set $X\subseteq\hat{\mathcal{V}}\setminus\{s\}$ \emph{regular}
(relative to $s$) if
\[
  \phi(X) \;=\; c_{\hat{\mathcal{G}}}(s,w)
  \quad
  \text{for some } w\in X,
\]
and refer to such $w$ as a \emph{core} of $X$.  In words, $X$ is regular
when the edges entering $X$ form a minimum cut for some node inside $X$.

The function $\phi$ is \emph{submodular}: for any
$X,Y\subseteq\hat{\mathcal{V}}\setminus\{s\}$,
\begin{equation}
  \phi(X\cup Y)+\phi(X\cap Y)\;\leq\;\phi(X)+\phi(Y),
  \label{eq:submodular}
\end{equation}
since each edge entering $X\cup Y$ or $X\cap Y$ is counted at least as
many times on the right-hand side.  We now establish two properties of
regular sets that will be used in the main argument.

\begin{claim}[Closure under $\cap$ and $\cup$]\label{l1c1}
If $U$ and $V$ are regular sets with cores $u$ and $v$
respectively, and $u\in V$, then $U\cap V$ is regular with core $u$ and
$U\cup V$ is regular with core $v$.
\end{claim}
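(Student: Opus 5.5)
Regularity of each set comes from a sandwich: submodularity \eqref{eq:submodular} gives an upper bound on $\phi(U\cap V)+\phi(U\cup V)$, and Menger's theorem applied to the cores gives a matching lower bound. Write $c(w)$ for $c_{\hat{\mathcal{G}}}(s,w)$. By \eqref{eq:minimality-condition} and the edge-disjoint Menger theorem, $c(w)$ also equals the maximum number of edge-disjoint $s$--$w$ paths in $\hat{\mathcal{G}}'$. (This holds for every vertex: for a $v^{\rm in}$ directly, and for $v^{\rm out}$ because its only entering edge comes from $v^{\rm in}$.) Consequently, for every $X\subseteq\hat{\mathcal{V}}\setminus\{s\}$ and every $w\in X$, the edges entering $X$ form an $s$--$w$ cut in $\hat{\mathcal{G}}'$, so
\[
\phi(X)\;\ge\;c(w).
\]
Call this the \emph{cut lower bound}.

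The sets $U\cap V$ and $U\cup V$ are subsets of $\hat{\mathcal{V}}\setminus\{s\}$. We have $u\in U$ by definition and $u\in V$ by hypothesis, so $u\in U\cap V$. Likewise $v\in V\subseteq U\cup V$. The cut lower bound therefore gives
\[
\phi(U\cap V)\ge c(u),\qquad \phi(U\cup V)\ge c(v).
\]
Regularity of $U$ and $V$ gives $\phi(U)=c(u)$ and $\phi(V)=c(v)$. Submodularity then yields
\[
c(u)+c(v)\;\le\;\phi(U\cap V)+\phi(U\cup V)\;\le\;\phi(U)+\phi(V)\;=\;c(u)+c(v).
\]
Both ends are equal, so all inequalities are equalities. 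Each of the two individual lower bounds must therefore be tight: $\phi(U\cap V)=c(u)$ and $\phi(U\cup V)=c(v)$. This says exactly that $U\cap V$ is regular with core $u$ and $U\cup V$ is regular with core $v$.

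The main point needing care is the cut lower bound. Regularity is defined with $c_{\hat{\mathcal{G}}}$, while $\phi$ counts edges of $\hat{\mathcal{G}}'$. The bridge is that $\hat{\mathcal{G}}'$ preserves the connectivities $c(w)$, by \eqref{eq:minimality-condition} together with the single-edge structure at each $v^{\rm out}$. A secondary check is submodularity of the in-degree function $\phi$, which holds because $s\notin U\cup V$: every edge entering $U\cup V$ or $U\cap V$ enters $U$ or $V$, with multiplicities accounted for as already asserted for \eqref{eq:submodular}.
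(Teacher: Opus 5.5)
Your proof is correct and is essentially the paper's argument: the submodularity upper bound on $\phi(U\cap V)+\phi(U\cup V)$ is matched by the Menger lower bounds at $u\in U\cap V$ and $v\in U\cup V$, which forces both bounds to be tight. What you add is an explicit justification that cuts in $\hat{\mathcal{G}}'$ bound $c_{\hat{\mathcal{G}}}(s,\cdot)$ for every possible core. The paper leaves this implicit and only remarks, in the next claim, that all cores used are of the form $v^{\rm in}$. Note that for a core $v^{\rm out}$ your bridge also needs \eqref{eq:minimality-condition-out}, which guarantees that the split edge survives in $\hat{\mathcal{G}}'$; the single in-edge structure of $\hat{\mathcal{G}}$ alone is not enough.
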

\begin{proof}
By \eqref{eq:submodular}:
\[
  \phi(U\cup V)+\phi(U\cap V)
  \;\leq\;
  \phi(U)+\phi(V)
  \;=\;
  c_{\hat{\mathcal{G}}}(s,u)+c_{\hat{\mathcal{G}}}(s,v).
\]
Since $u\in U\cap V$ (as $u\in U$ and $u\in V$ by assumption), Menger's
theorem applied to $\hat{\mathcal{G}}'$ gives
$\phi(U\cap V)\geq c_{\hat{\mathcal{G}}}(s,u)$.  Since $v\in U\cup V$,
similarly $\phi(U\cup V)\geq c_{\hat{\mathcal{G}}}(s,v)$.  Adding these
two lower bounds matches the upper bound, so equality holds throughout:
$\phi(U\cap V)=c_{\hat{\mathcal{G}}}(s,u)$ and
$\phi(U\cup V)=c_{\hat{\mathcal{G}}}(s,v)$.
\end{proof}

\begin{claim}[Minimal regular set]\label{l1c2}
For every $w\in\hat{\mathcal{V}}\setminus\{s\}$ there exists a
unique smallest regular set $T_w$ with core $w$: it is contained in every
regular set that contains $w$.
\end{claim}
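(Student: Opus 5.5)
The claim follows from two ingredients: the family of regular sets containing $w$ whose core is $w$ is nonempty and finite, and that family is closed under intersection. So I would first show nonemptiness, then prove closure by an intersection argument based on Claim~\ref{l1c1}. Finally I would upgrade the conclusion from regular sets with core $w$ to arbitrary regular sets containing $w$.

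\emph{Existence.} By Menger's theorem in $\hat{\mathcal{G}}'$ there is a minimum $s$--$w$ edge cut of size $c_{\hat{\mathcal{G}}'}(s,w)$. Let $X$ be the set of vertices on the $w$ side of this cut (those not reachable from $s$ after deleting the cut edges). Then $w\in X$, $s\notin X$, and $\phi(X)$ equals the cut size. The minimality conditions \eqref{eq:minimality-condition}--\eqref{eq:minimality-condition-out} preserve the relevant connectivities in $\hat{\mathcal{G}}'$, so for the nodes of interest the cut size equals $c_{\hat{\mathcal{G}}}(s,w)$. Hence $X$ is regular with core $w$.

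\emph{Closure.} Let $U,V$ be regular sets with core $w$. Applying Claim~\ref{l1c1} with $u=v=w$ (the hypothesis $w\in V$ holds) shows that $U\cap V$ is regular with core $w$. The family is finite because $\hat{\mathcal{V}}$ is finite. Therefore the intersection $T_w$ of all regular sets with core $w$ is itself regular with core $w$, and it is the smallest such set.

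\emph{Containment in every regular set containing $w$.} This is the step I expect to be the main obstacle, since such a set $V$ may have a different core $v\neq w$. I would apply Claim~\ref{l1c1} with $U=T_w$ and core $u=w$, using $w\in V$. This gives that $T_w\cap V$ is regular with core $w$. By minimality of $T_w$ we then have $T_w\subseteq T_w\cap V$, i.e.\ $T_w\subseteq V$. Uniqueness is then immediate: two smallest sets would each contain the other.

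The delicate point is that the lower bounds inside Claim~\ref{l1c1} are stated via $c_{\hat{\mathcal{G}}}$ rather than $c_{\hat{\mathcal{G}}'}$. I therefore need to confirm that the minimal subgraph preserves $c(s,w)$ for every split vertex $w$, including the $w=v^{\rm out}$ vertices, which \eqref{eq:minimality-condition-out} only partially controls. If it does not, I would restrict attention to cores $w=v^{\rm in}$, for which \eqref{eq:minimality-condition} gives the needed equality, and run the argument only for those.
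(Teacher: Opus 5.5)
Your proposal is correct and follows essentially the same route as the paper. Both arguments obtain a regular set with core $w$ from a minimum cut in $\hat{\mathcal{G}}'$, take $T_w$ as the intersection of all regular sets with core $w$ (regular by repeated use of Claim~\ref{l1c1} with $u=v=w$), and get $T_w\subseteq V$ for any regular $V\ni w$ by applying Claim~\ref{l1c1} to $T_w$ and $V$. The paper resolves your ``delicate point'' in the same way, by restricting cores to nodes $w=v^{\rm in}$. The restriction is not actually needed: \eqref{eq:minimality-condition-out} keeps the split edge in $\hat{\mathcal{G}}'$, and \eqref{eq:minimality-condition} preserves reachability of $v^{\rm in}$, so $c(s,v^{\rm out})\in\{0,1\}$ is the same in $\hat{\mathcal{G}}'$ and $\hat{\mathcal{G}}$.
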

\begin{proof}
By the edge version of Menger's theorem applied to $\hat{\mathcal{G}}'$,
any minimum $(s\leadsto w)$-cut corresponds to a set
$U\subseteq\hat{\mathcal{V}}\setminus\{s\}$ with $w\in U$ and
$\phi(U)=c_{\hat{\mathcal{G}}'}(s,w)=c_{\hat{\mathcal{G}}}(s,w)$
(the last equality holds by \eqref{eq:minimality-condition} for nodes
$w=v^{\rm in}$, and such nodes are the only ones that appear as
cores in the argument below).  Hence $w$ is always the core of some
regular set.  Define $T_w$ as the intersection of all regular sets with
core $w$.  By repeated application of Claim~\ref{l1c1}, $T_w$ is itself
regular with core $w$.  For any regular set $U$ with $w\in U$:
Claim~\ref{l1c1} gives that $T_w\cap U$ is regular with core $w$, so by
the definition of $T_w$ we have $T_w\subseteq T_w\cap U\subseteq U$.
\end{proof}

\paragraph*{Incoming edges of $v^{\rm in}$ originate outside
$T_{v^{\rm in}}$}
Fix any $v\in\mathcal{V}\setminus\{s\}$.  We show that every directed
edge in $\hat{\mathcal{G}}'$ entering $v^{\rm in}$ has its tail
\emph{outside} $T_{v^{\rm in}}$.  Let $E=(y\to v^{\rm in})$ be any such
edge.  Removing $E$ from $\hat{\mathcal{G}}'$ yields a subgraph
$\hat{\mathcal{G}}''$; by minimality of $\hat{\mathcal{G}}'$ there exists
some $z\in\mathcal{V}\setminus\{s\}$ such that
\[
  c_{\hat{\mathcal{G}}''}(s,\,z^{\rm in})
  \;<\;
  c_{\hat{\mathcal{G}}'}(s,\,z^{\rm in})
  \;=\;
  c_{\hat{\mathcal{G}}}(s,\,z^{\rm in}).
\]
By Menger's theorem applied to $\hat{\mathcal{G}}''$, there is a set
$Z\subseteq\hat{\mathcal{V}}\setminus\{s\}$ with $z^{\rm in}\in Z$ and
\[
  \phi''(Z)
  \;=\;
  c_{\hat{\mathcal{G}}''}(s,\,z^{\rm in})
  \;<\;
  c_{\hat{\mathcal{G}}}(s,\,z^{\rm in})
  \;\leq\;
  \phi(Z),
\]
where $\phi''$ denotes the cut function in $\hat{\mathcal{G}}''$ and the
last inequality is Menger applied to $\hat{\mathcal{G}}'$.  Since
$\hat{\mathcal{G}}''$ differs from $\hat{\mathcal{G}}'$ only by the
removal of the single edge $E$, we have $\phi''(Z)\geq\phi(Z)-1$.
Combined with $\phi''(Z)<\phi(Z)$ and integrality, this forces
\[
  \phi(Z) \;=\; c_{\hat{\mathcal{G}}}(s,\,z^{\rm in}),
  \qquad
  \phi''(Z) \;=\; \phi(Z)-1.
\]
The first equality shows $Z$ is a \emph{regular} set with core $z^{\rm in}$
in $\hat{\mathcal{G}}'$.  The second equality shows that edge $E$ crosses
into $Z$, i.e., $v^{\rm in}\in Z$ and $y\notin Z$.  Since $v^{\rm in}\in Z$
and $Z$ is a regular set, Claim~\ref{l1c2} gives $T_{v^{\rm in}}\subseteq
Z$.  Because $y\notin Z$, it follows that $y\notin T_{v^{\rm in}}$.
Since $E$ was an arbitrary edge entering $v^{\rm in}$, we conclude: every
edge of $\hat{\mathcal{G}}'$ entering $v^{\rm in}$ originates outside
$T_{v^{\rm in}}$.

\paragraph*{Conclusion and translation back to $\mathcal{G}$}
Because every edge entering $v^{\rm in}$ in $\hat{\mathcal{G}}'$ comes
from outside $T_{v^{\rm in}}$, and $v^{\rm in}\in T_{v^{\rm in}}$, each
such edge also crosses into the set $T_{v^{\rm in}}$.  Hence
\[
  |\mathcal{P}_{\hat{\mathcal{G}}'}(v^{\rm in})|
  \;\leq\;
  \phi(T_{v^{\rm in}})
  \;=\;
  c_{\hat{\mathcal{G}}}(s,\,v^{\rm in})
  \;=\;
  \kappa_{\mathcal{G}}(s,v),
\]
where the first equality uses the fact that $T_{v^{\rm in}}$ is regular
with core $v^{\rm in}$.  On the other hand, $\{v^{\rm in}\}$ is itself a
valid $(s\leadsto v^{\rm in})$-cut in $\hat{\mathcal{G}}'$, so Menger
gives
\[
  \kappa_{\mathcal{G}}(s,v)
  \;=\;
  c_{\hat{\mathcal{G}}'}(s,\,v^{\rm in})
  \;\leq\;
  \phi\!\bigl(\{v^{\rm in}\}\bigr)
  \;=\;
  |\mathcal{P}_{\hat{\mathcal{G}}'}(v^{\rm in})|.
\]
Combining the two inequalities:
\[
  |\mathcal{P}_{\hat{\mathcal{G}}'}(v^{\rm in})|
  \;=\;
  \kappa_{\mathcal{G}}(s,v).
\]
Finally, define $\mathcal{G}'=(\mathcal{V},\mathcal{E}')$ by collapsing
the split nodes:
\begin{align*}
      \mathcal{E}'
  \;&=\;
  \bigl\{(u,v)\in\mathcal{E}
         \;:\; (u^{\rm out},v^{\rm in})\in\hat{\mathcal{E}}'\bigr\} \\
  &\qquad\cup\;
  \bigl\{(s,v)\in\mathcal{E}
         \;:\; (s,v^{\rm in})\in\hat{\mathcal{E}}'\bigr\}.
\end{align*}
Before using this construction, observe that for any
$v\in\mathcal{V}\setminus\{s\}$ with $\kappa_{\mathcal{G}}(s,v)\neq 0$,
the split edge $v^{\rm in}\to v^{\rm out}$ is never removed during
minimization: this edge is the unique edge entering $v^{\rm out}$ in
$\hat{\mathcal{G}}'$, so removing it would reduce
$c_{\hat{\mathcal{G}}'}(s,\,v^{\rm out})$ to zero, directly violating
\eqref{eq:minimality-condition-out}.  Hence $\hat{\mathcal{G}}'$ retains
every such split edge, a fact we use below.  Every edge entering $v$ in
$\mathcal{G}$ corresponds to a unique edge entering $v^{\rm in}$ in
$\hat{\mathcal{G}}$, so
\[
  |\mathcal{P}_{\mathcal{G}'}(v)|
  \;=\;
  |\mathcal{P}_{\hat{\mathcal{G}}'}(v^{\rm in})|
  \;=\;
  \kappa_{\mathcal{G}}(s,v).
\]
Moreover, the same splitting correspondence applied to $\mathcal{G}'$,
which is valid precisely because every split edge of $\hat{\mathcal{G}}$
survives in $\hat{\mathcal{G}}'$, gives
\[
  \kappa_{\mathcal{G}'}(s,v)
  \;=\;
  c_{\hat{\mathcal{G}}'}(s,\,v^{\rm in})
  \;=\;
  c_{\hat{\mathcal{G}}}(s,\,v^{\rm in})
  \;=\;
  \kappa_{\mathcal{G}}(s,v),
\]
completing the proof.
\end{proof}

\section{Proof of Theorem \ref{T3-active}}\label{T2Proof}
\begin{proof}
By Lemma~\ref{lem:subnetwork-indegree}, for any directed acyclic graph $\mathcal{G}=(\mathcal{V},\mathcal{E})$ with source $s$, there exists a subgraph $\hat{\mathcal{G}}=(\mathcal{V},\hat{\mathcal{E}})$ with $\hat{\mathcal{E}}\subseteq\mathcal{E}$ satisfying
\[
  \kappa_{\hat{\mathcal{G}}}(s,v)=\kappa_{\mathcal{G}}(s,v)
\]
and
\[
  \bigl|\mathcal{P}_{\hat{\mathcal{G}}}(v)\bigr|=\kappa_{\mathcal{G}}(s,v),
\]
for all $v\in\mathcal{V}\setminus\{s\}$. That is, $\hat{\mathcal{G}}$ preserves every node's vertex connectivity from $s$ while reducing every node's in-degree to exactly that vertex connectivity. In particular, every partially-connected node $j'$ satisfies
\[
  \bigl|\mathcal{P}_{\hat{\mathcal{G}}}(j')\bigr| = \kappa_{\hat{\mathcal{G}}}(s,j') \le d-1,
\]
since by definition a partially-connected node has fewer than $d$ vertex-disjoint paths from $s$. Because any coding scheme that is correct and secure over $\hat{\mathcal{G}}$ is equally correct and secure over $\mathcal{G}$ (which contains all edges of $\hat{\mathcal{G}}$ as a subgraph), it suffices to establish the achievability result on $\hat{\mathcal{G}}$. Henceforth, all transmissions, path selections, and edge references are with respect to $\hat{\mathcal{G}}$.

Consider a network in which all terminal nodes are $d$-vertex connected from the source. Let $d' = d - (\ell_e + \ell_{oe})$ and $\ell' = \ell_o + \ell_{oe}$. Each non-terminal node $v_i \in \mathcal{V}$ is assigned two Vandermonde vectors $\boldsymbol{v}_{i}^{(M)} \in \mathbb{F}_q^{d'-z}$ and $\boldsymbol{v}_i^{(R)}\in \mathbb{F}_q^{d'}$ indexed by $i$. Furthermore, for each terminal set $\mathcal{T}_i \subseteq \mathcal{T}$, all terminal nodes $v_t \in \mathcal{T}_i$ share the same Vandermonde vectors, denoted by $\boldsymbol{v}_{\mathcal{T}_i}^{(M)} \in \mathbb{F}_q^{d'-z}$ and $\boldsymbol{v}_{\mathcal{T}_i}^{(R)}\in \mathbb{F}_q^{d'}$. All Vandermonde vectors described above are distinct. Define the key for terminal set $\mathcal{T}_i$ as the matrix $\boldsymbol{K}_{\mathcal{T}_i} \in \mathbb{F}_{q}^{(d'-\ell'-z+1) \times n}$, where the $i$-th column (for $i \in [n]$) is constructed as:
\[
  \boldsymbol{k}_{\mathcal{T}_i}^{(i)} = [\boldsymbol{M}^{(i)} \boldsymbol{v}_{\mathcal{T}_i}^{(M)} ]_{1:d'-\ell'-z+1} +[\boldsymbol{R}^{(i)} \boldsymbol{v}_{\mathcal{T}_i}^{(R)} ]_{1:d'-\ell'-z+1}.
\]
Here, for each $i \in [n]$, $\boldsymbol{M}^{(i)} \in \mathbb{F}_{q}^{(d'-z) \times (d'-z)}$ and $\boldsymbol{R}^{(i)} \in \mathbb{F}_{q}^{d' \times d'}$ are independent symmetric random matrices whose upper-triangular entries are chosen independently and uniformly from $\mathbb{F}_q$.

\subsection*{Encoding:}
\paragraph*{Step 1}
For each node $j \in \mathcal{V}\setminus\{s\}$, let $b_j \le d$ be the number of direct edges from $s$ to $j$. The source generates independent Vandermonde matrices $\boldsymbol{V}_{s\to j}^{(M)} \in \mathbb{F}_q^{(d'-z) \times b_j}$ and $\boldsymbol{V}_{s\to j}^{(R)} \in \mathbb{F}_q^{d' \times b_j}$ whose columns are distinct Vandermonde vectors unused elsewhere in the network. If $j$ is $d$-vertex connected, over the $m$-th edge ($1 \le m \le b_j$) from $s$ to $j$, the source computes and transmits the data vectors of size $n$:
$$\small \boldsymbol{s}_{s \to j}^{(M, m)} = \bigl[ {\boldsymbol{v}_{j}^{(M)}}^T \boldsymbol{M}^{(1)} \boldsymbol{v}_{m}^{(M), s \to j},\; \dots, \;{\boldsymbol{v}_{j}^{(M)}}^T \boldsymbol{M}^{(n)} \boldsymbol{v}_{m}^{(M), s \to j} \bigr]^{T}, $$
$$\small \boldsymbol{s}_{s \to j}^{(R, m)} = \bigl[ {\boldsymbol{v}_{j}^{(R)}}^T \boldsymbol{R}^{(1)} \boldsymbol{v}_{m}^{(R), s \to j},\; \dots, \;{\boldsymbol{v}_{j}^{(R)}}^T \boldsymbol{R}^{(n)} \boldsymbol{v}_{m}^{(R), s \to j} \bigr]^{T}. $$
If $j$ is less than $d$-vertex connected (partially connected), then for each $i \in [n]$, the source directly sends the vector $\boldsymbol{R}^{(i)}\boldsymbol{v}_j^{(R)}$ to node $j$.

Furthermore, since the source knows all randomness in the network, it can pre-compute, for every $d$-vertex connected node $j$ and every parent node $j_p \in \mathcal{P}(j)$, the vectors $\boldsymbol{s}_{j_p \to j}^{(M)}$ and $\boldsymbol{s}_{j_p \to j}^{(R)}$ that will be transmitted over the edge $(j_p,j)$ according to the node-type rules defined later in Step 2; if a particular component is absent (e.g., $\boldsymbol{s}_{j_p \to j}^{(M)}$ when $j_p$ is partially connected), it is treated as the zero vector $\boldsymbol{0} \in \mathbb{F}_q^n$. The source then generates an independent random symbol $\alpha_j$ uniformly from $\mathbb{F}_q$, and for each $j_p \in \mathcal{P}(j)$, computes verification hash values as
\begin{align}\label{hash}
  h_{j_p \to j}^{(M)} &= \sum_{i=1}^{n} [\boldsymbol{s}_{j_p \to j}^{(M)}]_i (\alpha_j)^i \in \mathbb{F}_q, \\
  h_{j_p \to j}^{(R)} &= \sum_{i=1}^{n} [\boldsymbol{s}_{j_p \to j}^{(R)}]_i (\alpha_j)^i \in \mathbb{F}_q.
\end{align}
Since node $j$ is $d$-vertex connected from the source, by Lemma~\ref{lem:parallel_edge_active_secure}, the source generates the tuple
\[
  \Bigl(\alpha_j,\; \bigl(h_{j_p \to j}^{(M)}\bigr)_{j_p \in \mathcal{P}(j)},\; \bigl(h_{j_p \to j}^{(R)}\bigr)_{j_p \in \mathcal{P}(j)}\Bigr)
\]
and securely transmits it to node $j$ through any $d$ vertex-disjoint paths from $s$ to $j$, provided that $d > \ell_o + \ell_e + 2\ell_{oe}$ under an additive adversary and $d > \ell_o + 2\ell_e + 2\ell_{oe}$ under an overwrite adversary. Each intermediate node on these paths serves as a relay, forwarding the received symbols toward $j$ in topological order. The transmissions using Lemma~\ref{lem:parallel_edge_active_secure} operate over a smaller finite field of size $\hat{q}$, whereas the primary transmissions operate over $\mathbb{F}_q$ with $\hat{q} \le q$.

\paragraph*{Step 2}
By induction, let $j \in \mathcal{V}$ be the next node in the topological order such that every preceding node, including any parent node $j_p \in \mathcal{P}(j)$, has either received its shares or is the source node. For each $d$-vertex connected node ${j_p}$, these shares are
\[
  \boldsymbol{S}_{{j_p}}^{(M)} = [\boldsymbol{M}^{(1)}\boldsymbol{v}_{{j_p}}^{(M)}, \dots, \boldsymbol{M}^{(n)}\boldsymbol{v}_{{j_p}}^{(M)}],
\]
and
\[
  \boldsymbol{S}_{{j_p}}^{(R)} = [\boldsymbol{R}^{(1)}\boldsymbol{v}_{{j_p}}^{(R)}, \dots, \boldsymbol{R}^{(n)}\boldsymbol{v}_{{j_p}}^{(R)}].
\]
For each node ${j_p} \in \mathcal{N}(s)$ that is less than $d$-vertex connected, its share is
\[
  \boldsymbol{S}_{{j_p}}^{(R)} = [\boldsymbol{R}^{(1)}\boldsymbol{v}_{{j_p}}^{(R)}, \dots, \boldsymbol{R}^{(n)}\boldsymbol{v}_{{j_p}}^{(R)}].
\]
For each partially-connected node ${j_p} \notin \mathcal{N}(s)$, the share consists of the accumulated collection $\mathcal{s}_{{j_p}}^{(R)}$, defined as all path-tagged matrix copies that ${j_p}$ has received from nodes in $\mathcal{D}({j_p})$, defined below
\begin{equation}\small
  \mathcal{D}({j_p}) \triangleq \left\{ j' \in \mathcal{V}: \begin{array}{l} j' \text{ is either } d\text{-vertex connected from } s \\ \text{or is a source-neighbor node, and} \\ \text{there exists a directed path from } j' \\ \text{to } {j_p} \text{ whose internal nodes, if any,} \\ \text{are all partially-connected nodes} \end{array} \right\}.
\end{equation}
Let $\Pi(j'\!\to\!{j_p})$ denote the set of directed paths from $j'$ to ${j_p}$ in $\hat{\mathcal{G}}$ whose internal nodes are all partially-connected.  Since transmissions proceed in topological order,
each node ${j_p}$ receives the shares from $\mathcal{D}({j_p})$ in a fixed deterministic order induced by the topology.  We define
\begin{align*}
  \mathcal{s}_{{j_p}}^{(R)}
  \;\triangleq\;
  \Bigl([\boldsymbol{R}^{(1)}\boldsymbol{v}_{j'}^{(R)},
          \dots,
          \boldsymbol{R}^{(n)}\boldsymbol{v}_{j'}^{(R)}]
  \;:\; j'\in\mathcal{D}({j_p}),\\
        \pi\in\Pi(j'\!\to\!{j_p})\Bigr),
\end{align*}
as an \emph{ordered} tuple, where the ordering is the topological arrival order. Note that the same underlying matrix $[\boldsymbol{R}^{(1)}\boldsymbol{v}_{j'}^{(R)},\dots, \boldsymbol{R}^{(n)}\boldsymbol{v}_{j'}^{(R)}]$ may appear multiple times in this tuple, once per distinct path $\pi\in\Pi(j'\!\to\!{j_p})$, since the same share may reach ${j_p}$ via several different routes through the partially-connected subgraph of $\hat{\mathcal{G}}$. Since all nodes know the network topology, the position of each entry in the tuple unambiguously identifies the path it traversed, with no additional transmission overhead. 
% The precise construction of these tuples and how they are processed will be detailed later.

We specify the transmissions over each edge $j_p \to j$ for each non-source parent node $j_p \in \mathcal{P}(j)$ depending on its type:
\begin{itemize}
  \item \textbf{If $j_p$ is a $d$-vertex connected node:} Node $j_p$ utilizes its recovered shares
    \[
      \boldsymbol{S}_{{j_p}}^{(M)} = [\boldsymbol{M}^{(1)}\boldsymbol{v}_{{j_p}}^{(M)}, \dots, \boldsymbol{M}^{(n)}\boldsymbol{v}_{{j_p}}^{(M)}],
    \]
    and
    \[
      \boldsymbol{S}_{{j_p}}^{(R)} = [\boldsymbol{R}^{(1)}\boldsymbol{v}_{{j_p}}^{(R)}, \dots, \boldsymbol{R}^{(n)}\boldsymbol{v}_{{j_p}}^{(R)}].
    \]
    The transmission depends on the classification of the receiving node $j$.
    \begin{itemize}
      \item If $j$ is $d$-vertex connected, $j_p$ computes and transmits:
        \begin{align*}
          \boldsymbol{s}_{j_p \to j}^{(M)} &= \bigl[{\boldsymbol{v}_{j}^{(M)}}^T \boldsymbol{M}^{(1)} \boldsymbol{v}_{j_p}^{(M)}, \dots, {\boldsymbol{v}_{j}^{(M)}}^T \boldsymbol{M}^{(n)} \boldsymbol{v}_{j_p}^{(M)}\bigr]^T, \\
          \boldsymbol{s}_{j_p \to j}^{(R)} &= \bigl[{\boldsymbol{v}_{j}^{(R)}}^T \boldsymbol{R}^{(1)} \boldsymbol{v}_{j_p}^{(R)}, \dots, {\boldsymbol{v}_{j}^{(R)}}^T \boldsymbol{R}^{(n)} \boldsymbol{v}_{j_p}^{(R)}\bigr]^T.
        \end{align*}
      \item If $j$ is partially connected, $j_p$ directly forwards its full matrix share $\boldsymbol{S}_{j_p}^{(R)}$ so that $j$ can incorporate it into $\mathcal{s}_j^{(R)}$.
    \end{itemize}

  \item \textbf{If $j_p \in \mathcal{N}(s)$ is partially connected:} Node $j_p$ holds $\boldsymbol{S}_{j_p}^{(R)} = [\boldsymbol{R}^{(1)}\boldsymbol{v}_{j_p}^{(R)}, \dots, \boldsymbol{R}^{(n)}\boldsymbol{v}_{j_p}^{(R)}]$, received directly from the source.
    \begin{itemize}
      \item If $j$ is $d$-vertex connected, $j_p$ transmits:
        \[
          \boldsymbol{s}_{j_p \to j}^{(R)} = \bigl[{\boldsymbol{v}_{j}^{(R)}}^T \boldsymbol{R}^{(1)} \boldsymbol{v}_{j_p}^{(R)}, \dots, {\boldsymbol{v}_{j}^{(R)}}^T \boldsymbol{R}^{(n)} \boldsymbol{v}_{j_p}^{(R)}\bigr]^T.
        \]
      \item If $j$ is partially connected, $j_p$ directly forwards its share $\boldsymbol{S}_{j_p}^{(R)}$.
    \end{itemize}

  \item \textbf{If $j_p \notin \mathcal{N}(s)$ is partially connected:} Node $j_p$ holds the accumulated collection $\mathcal{s}_{{j_p}}^{(R)}$ as defined above.
    \begin{itemize}
    \item If $j$ is also partially connected, $j_p$ forwards its \emph{entire} accumulated ordered tuple $\mathcal{s}_{j_p}^{(R)}$, including every copy from every distinct path, to node $j$. Node $j$ then forms its own ordered tuple $\mathcal{s}_{j}^{(R)}$ by concatenating all tuples and data vectors received from its parents in topological arrival order. The same underlying matrix may appear multiple times in $\mathcal{s}_{j}^{(R)}$, once per distinct path through which it arrived, and all copies must be retained, since each corresponds to a distinct path that may be required by different downstream nodes. Since all nodes share knowledge of the network topology, the position of each entry in $\mathcal{s}_{j}^{(R)}$ unambiguously identifies the path it traversed, with no additional transmission overhead.
    \item If $j$ is $d$-vertex connected, since in $\hat{\mathcal{G}}$ every node's in-degree equals its vertex connectivity, there exist exactly $\kappa(s,j)=d$ vertex-disjoint paths from $s$ to $j$; denote them $\pi_1,\dots,\pi_d$. Since these paths are vertex-disjoint, no two share a parent of $j$, so each parent $j_p\in\mathcal{P}(j)$ belongs to exactly one distinct path among $\pi_1,\dots,\pi_d$. In particular, every partially-connected parent is essential for maintaining $\kappa(s,j)$, and \cite[Claim~5.1]{eavsPaper} applies to all of them. By that claim, for each partially-connected parent $j_p$, there exists an index $\eta(j_p)\notin\mathcal{P}(j)$ such that $[\boldsymbol{R}^{(1)}\boldsymbol{v}_{\eta(j_p)}^{(R)},\dots,\boldsymbol{R}^{(n)}\boldsymbol{v}_{\eta(j_p)}^{(R)}]\in\mathcal{s}_{j_p}^{(R)}$, and these indices are mutually distinct. Since $\eta(j_p)\notin\mathcal{P}(j)$, node $\eta(j_p)$ can only reach $j$ through $j_p$, so it lies on the same path $\pi_i$ as $j_p$. Since $\mathcal{s}_{j_p}^{(R)}$ is ordered by topological arrival order and the topology is known to all nodes, $j_p$ unambiguously identifies the copy of $[\boldsymbol{R}^{(1)}\boldsymbol{v}_{\eta(j_p)}^{(R)},\dots,\boldsymbol{R}^{(n)}\boldsymbol{v}_{\eta(j_p)}^{(R)}]$ corresponding to path $\pi_i$ by its position in the tuple, and transmits:
        \[
          \boldsymbol{s}_{j_p \to j}^{(R)} = \bigl[{\boldsymbol{v}_{j}^{(R)}}^T \boldsymbol{R}^{(1)} \boldsymbol{v}_{\eta(j_p)}^{(R)}, \dots, {\boldsymbol{v}_{j}^{(R)}}^T \boldsymbol{R}^{(n)} \boldsymbol{v}_{\eta(j_p)}^{(R)}\bigr]^T.
        \]
        \end{itemize}
\end{itemize}

Simultaneously, every $d$-vertex connected node $j$ receives $\alpha_j$ and the hash values $\{h_{j_p \to j}^{(M)}\}_{j_p \in \mathcal{P}(j)}$ and $\{h_{j_p \to j}^{(R)}\}_{j_p \in \mathcal{P}(j)}$ from the source via Lemma~\ref{lem:parallel_edge_active_secure} over the $d$ vertex-disjoint paths. For each $j_p \in \mathcal{P}(j)$, node $j$ verifies consistency by checking:
\begin{equation}\label{eq2-newM}
  h_{j_p \to j}^{(M)} = \sum_{i=1}^{n} [\hat{\boldsymbol{s}}_{j_p \to j}^{(M)}]_{i} (\alpha_j)^{i},
\end{equation}
and
\begin{equation}\label{eq2-newR}
  h_{j_p \to j}^{(R)} = \sum_{i=1}^{n} [\hat{\boldsymbol{s}}_{j_p \to j}^{(R)}]_{i} (\alpha_j)^{i},
\end{equation}
where $\hat{\boldsymbol{s}}_{j_p \to j}^{(M)}$ and $\hat{\boldsymbol{s}}_{j_p \to j}^{(R)}$ denote the received vectors from $j_p$. If either equality fails, node $j$ declares the transmission corrupted and treats it as an erasure. Because the hash check reduces any active corruption to the setting of erasure, and there can be at most $\ell_e+\ell_{oe}$ corrupted nodes in the parent set of $j$, we assume from this point on that node $j$ receives at most $\ell_e+\ell_{oe}$ erasures and zero undetected errors.

For any $d$-vertex connected node $j$, let $\mathcal{P}_{par}(j)$ be the set of partially-connected nodes in the parent set of $j$. Then, the Vandermonde matrices $\boldsymbol{V}_j^{(M)} \in \mathbb{F}_q^{(d'-z) \times (d-z)}$ and $\boldsymbol{V}_j^{(R)} \in \mathbb{F}_q^{d' \times d}$ consist of
$$\boldsymbol{V}_j^{(M)} = \bigl[\, \boldsymbol{V}_{s\to j}^{(M)} \;,\; \{ \boldsymbol{v}_{j_p}^{(M)} \}_{j_p \in \mathcal{P}(j) \setminus \mathcal{P}_{par}(j)} \,\bigr],$$
and
$$\small \boldsymbol{V}_{j}^{(R)} = \bigl[\, \boldsymbol{V}_{s\to j}^{(R)} \;,\; \{ \boldsymbol{v}_{j_p}^{(R)} \}_{j_p \in \mathcal{P}(j) \setminus \mathcal{P}_{par}(j)} \;,\; \{ \boldsymbol{v}_{\eta(j_{p})}^{(R)} \}_{j_p \in \mathcal{P}_{par}(j)} \,\bigr],$$
respectively. Note that $\boldsymbol{V}_j^{(M)}$ may have more than $(d-z)$ columns when $|\mathcal{P}_{\mathrm{par}}(j)|<z$. It is sufficient to consider the worst case $|\mathcal{P}_{\mathrm{par}}(j)|=z$, since the result for smaller $|\mathcal{P}_{\mathrm{par}}(j)|$ follows by restricting to any $(d-z)$ columns of $\boldsymbol{V}_j^{(M)}$. For each $i\in[n]$, node $j$ obtains
\begin{align*}
  {\boldsymbol{v}_{j}^{(M)}}^{T}\boldsymbol{M}^{(i)}\boldsymbol{V}_j^{(M)} &\in \mathbb{F}_{q}^{1\times(d-z)}, \\
  {\boldsymbol{v}_{j}^{(R)}}^{T}\boldsymbol{R}^{(i)}\boldsymbol{V}_j^{(R)} &\in \mathbb{F}_{q}^{1\times d}.
\end{align*}
Since $\boldsymbol{V}_j^{(M)}$ and $\boldsymbol{V}_j^{(R)}$ are Vandermonde matrices, they generate $(d-z,\,d'-z)$ and $(d,\,d')$ MDS codes over $\mathbb{F}_{q}$, respectively, with minimum distances $d_{\min}^{(M)} = \ell_e+\ell_{oe}+1$ and $d_{\min}^{(R)} = \ell_e+\ell_{oe}+1$. An MDS code with minimum distance $\ell_e+\ell_{oe}+1$ corrects up to $\ell_e+\ell_{oe}$ erasures \cite{macwilliams1977theory}. Thus, for every $i\in[n]$, node $j$ recovers $\boldsymbol{M}^{(i)}\boldsymbol{v}_{j}^{(M)}$ and $\boldsymbol{R}^{(i)}\boldsymbol{v}_{j}^{(R)}$ and reconstructs:
\begin{align*}
  \boldsymbol{{S}}_{j}^{(M)} &= [\boldsymbol{M}^{(1)} \boldsymbol{v}_{j}^{(M)}, \dots, \boldsymbol{M}^{(n)} \boldsymbol{v}_{j}^{(M)}], \\
  \boldsymbol{{S}}_{j}^{(R)} &= [\boldsymbol{R}^{(1)} \boldsymbol{v}_{j}^{(R)}, \dots, \boldsymbol{R}^{(n)} \boldsymbol{v}_{j}^{(R)}].
\end{align*}
By induction over the topological ordering, every node reconstructs its share. In particular, every terminal $t \in \mathcal{T}_i$ reconstructs $\boldsymbol{{S}}_{t}^{(M)}$ and $\boldsymbol{{S}}_{t}^{(R)}$.

\subsection*{Decoding:}
\paragraph*{Step 3}
Using the recovered matrices $\boldsymbol{S}_t^{(M)}$ and $\boldsymbol{S}_t^{(R)}$, the terminal node $t \in \mathcal{T}_i$ recovers the key $\boldsymbol{K}_{\mathcal{T}_i}$ with its $i$-th column (for $i \in [n]$):
\[
  \boldsymbol{k}_{\mathcal{T}_i}^{(i)} = [\boldsymbol{M}^{(i)} \boldsymbol{v}_{\mathcal{T}_i}^{(M)}]_{1:d'-\ell'-z+1} + [\boldsymbol{R}^{(i)} \boldsymbol{v}_{\mathcal{T}_i}^{(R)}]_{1:d'-\ell'-z+1}.
\]

\subsection*{Decoding Error Probability}
The overall decoding error probability arises exclusively from the $d$-vertex connected nodes. It is attributed to two sources: the probability that the local hash verification fails to detect an adversarially corrupted transmission, and the error probability of each application of Lemma~\ref{lem:parallel_edge_active_secure}. Let $P_{pd}$ denote the error probability of each such application.

Consider a $d$-vertex connected node $j$ and a corrupted parent $j_p \in \beta_{\mathrm{jam}}$. The adversary attempts to forge $\hat{\boldsymbol{s}}_{j_p \to j}^{(M)} \neq \boldsymbol{s}_{j_p \to j}^{(M)}$ or $\hat{\boldsymbol{s}}_{j_p \to j}^{(R)} \neq \boldsymbol{s}_{j_p \to j}^{(R)}$ while satisfying the verification checks \eqref{eq2-newM}--\eqref{eq2-newR}. By the adversary model, when injecting an error at $j_p$, the adversary's valid observation set is $\mathcal{O}(j_p) = \{u \in \beta_{\mathrm{obs}} \mid j_p \not\prec_{\mathcal{G}} u\}$, i.e., only those observed nodes that are not descendants of $j_p$. Since $(j_p, j) \in \mathcal{E}$, we have $j_p \prec_{\mathcal{G}} j$, so $j \notin \mathcal{O}(j_p)$. Therefore, $\alpha_j$ is not accessible to the adversary when determining its corruption at $j_p$, and hence remains uniformly distributed and independent of the adversary's corruption strategy. Substituting the true hash definition into the verification condition yields, for either transmission component,
$$\sum_{i=1}^{n} \left( [\hat{\boldsymbol{s}}_{j_p \to j}]_i - [\boldsymbol{s}_{j_p \to j}]_i \right) (\alpha_j)^i = 0.$$
Here $\boldsymbol{s}_{j_p \to j}$ represents either $\boldsymbol{s}_{j_p \to j}^{(M)}$ or $\boldsymbol{s}_{j_p \to j}^{(R)}$. Let $\boldsymbol{s}'_{j_p \to j} \triangleq \hat{\boldsymbol{s}}_{j_p \to j} - \boldsymbol{s}_{j_p \to j}$ denote the injected error vector. The verification condition reduces to
$$\sum_{i=1}^{n} [\boldsymbol{s}'_{j_p \to j}]_i (\alpha_j)^i = 0.$$
If $\boldsymbol{s}'_{j_p \to j} = \boldsymbol{0}$, no corruption occurs and the check holds trivially. Otherwise, the above expression defines a non-zero polynomial in $\alpha_j$ of degree at most $n$. Since $\alpha_j$ is independent of $\boldsymbol{s}'_{j_p \to j}$, the polynomial has at most $n$ roots in $\mathbb{F}_q$, so for a single transmission,
$$\Pr\!\left( \text{Local Check passes} \mid j_p \in \beta_{\mathrm{jam}} \right) \le \frac{n}{q}.$$
Both $\boldsymbol{s}_{j_p \to j}^{(M)}$ and $\boldsymbol{s}_{j_p \to j}^{(R)}$ use the same hash structure, so the same bound applies to each, yielding a factor of $2$. Applying a union bound over all $d$-vertex connected nodes $\mathcal{V}_d$ and at most $\ell_e+\ell_{oe}$ corrupted parents per node, the overall decoding error probability $P_{\mathrm{err}}$ satisfies
$$P_{\mathrm{err}} \le |\mathcal{V}_d|\left(P_{pd} + 2(\ell_e + \ell_{oe})\frac{n}{q}\right).$$

\subsection*{Key Rate:}
The secret key $\boldsymbol{K}_{\mathcal{T}_i}$ contains $n(d' - \ell' - z + 1) = n(d - \ell_o - \ell_e - 2\ell_{oe} - z + 1)$ symbols over $\mathbb{F}_q$. The total network transmission blocklength consists of the data transmission phase and the overhead from the secure delivery of verification parameters. We analyze each component in turn.

\paragraph*{M-related transmission cost} Each edge carries $n$ symbols of $\boldsymbol{M}$-related data, giving a cost of $n$ symbols per edge.

\paragraph*{R-related transmission cost} Let $L$ denote the length of the longest contiguous directed chain of partially-connected nodes in $\hat{\mathcal{G}}$. Every edge terminating at a $d$-vertex connected node carries an $\boldsymbol{R}$-related data vector of only $n$ symbols, so the dominant cost arises on edges terminating at partially-connected nodes. Recall from Step~2 that when the parent $j_p$ of a partially-connected node $j$ is either $d$-vertex connected or a source-neighbor partially-connected node $j_p \in \mathcal{N}(s)$, it performs no combining and transmits a single matrix share $[\boldsymbol{R}^{(1)}\boldsymbol{v}_{j_p}^{(R)}, \dots, \boldsymbol{R}^{(n)}\boldsymbol{v}_{j_p}^{(R)}] \in \mathbb{F}_q^{d' \times n}$, that is, $nd'$ symbols, whereas when $j_p$ is itself partially connected it performs no processing at all and simply forwards its entire accumulated collection $\mathcal{s}_{j_p}^{(R)}$ unmodified over the edge to $j$. The size of this edge transmission is therefore exactly the size of $\mathcal{s}_{j_p}^{(R)}$, so bounding the per-edge cost amounts to bounding, for each partially-connected node, the number of $d'\times n$ matrices in its accumulated collection. We index this bound by $\lambda$, the length of the longest contiguous chain of partially-connected nodes terminating at the node, and prove by induction on $\lambda$ that a partially-connected node with index $\lambda$ holds at most $(d-1)^{\lambda}$ matrices. For $\lambda = 1$, node $j$ has no partially-connected parent, since such a parent together with $j$ would form a chain of length two; every parent of $j$ is then either $d$-vertex connected or a source-neighbor partially-connected node, and each forwards exactly one $d'\times n$ matrix. Since $|\mathcal{P}_{\hat{\mathcal{G}}}(j)| = \kappa_{\hat{\mathcal{G}}}(s,j) \le d-1$, node $j$ receives at most $d-1$ such matrices, one from each parent, so $\mathcal{s}_{j}^{(R)}$ contains at most $d-1 = (d-1)^1$ matrices. For $\lambda = 2$, node $j$ has at least one partially-connected parent $j_p$, and since a chain of length two terminating at $j$ cannot extend further back through $j_p$ without exceeding $\lambda$, every partially-connected parent of $j$ has index $1$ and, by the base case, holds and forwards at most $d-1$ matrices over its edge to $j$; every non-partially-connected parent of $j$ still forwards a single matrix. Because $j$ has at most $d-1$ parents in total and $\mathcal{s}_j^{(R)}$ is the union of what all its parents forward, the count is maximized when every parent of $j$ is itself partially connected with index $1$, giving at most $(d-1)(d-1) = (d-1)^2$ matrices in $\mathcal{s}_j^{(R)}$. The same reasoning extends to general $\lambda > 1$: assume every partially-connected node of index at most $\lambda - 1$ holds at most $(d-1)^{\lambda - 1}$ matrices, and let $j$ have index $\lambda$. Appending $j$ to a chain terminating at a partially-connected parent $j_p \in \mathcal{P}(j)$ produces a chain terminating at $j$ that is one node longer, so if the chain at $j_p$ had length $\lambda$ or more, the chain at $j$ would have length exceeding $\lambda$, contradicting the index of $j$; hence every partially-connected parent of $j$ has index at most $\lambda - 1$, and by the induction hypothesis holds and forwards at most $(d-1)^{\lambda-1}$ matrices, while every non-partially-connected parent still forwards exactly one. Since $j$ has at most $d-1$ parents, the size of $\mathcal{s}_j^{(R)}$ is maximized when all $d-1$ parents are partially connected with index $\lambda - 1$, giving at most $(d-1)\cdot(d-1)^{\lambda-1} = (d-1)^{\lambda}$ matrices and completing the induction. Applying this bound to the edge between two consecutive partially-connected nodes in the longest chain, the transmitting parent node has index at most $L-1$ and forwards at most $(d-1)^{L-1}$ matrices of $nd'$ symbols each, so this edge carries at most $nd'(d-1)^{L-1}$ symbols; for $L=1$ no such edge exists, and the bound reduces to the single $nd'$-symbol matrix forwarded by a non-partially-connected parent. Every other edge, including the source's direct transmissions to partially-connected nodes in Step~1, carries a single matrix of $nd'$ symbols, so $nd'(d-1)^{L-1}$ bounds the $\boldsymbol{R}$-related per-edge cost across the entire network.

\textit{Role of the in-degree constraint.} The restriction to $\hat{\mathcal{G}}$ is what makes the bound above depend only on $d$ and $L$ rather than on the network size and topology. Without it, a partially-connected node's in-degree is an uncontrolled quantity determined by the topology of $\mathcal{G}$, even though its vertex connectivity from $s$ is at most $d-1$, and repeating the induction with this topology-dependent in-degree in place of $d-1$ would yield a per-edge cost, and hence a blocklength, that grows with the network size and topology rather than with the connectivity parameters alone. Working in $\hat{\mathcal{G}}$, where every node's in-degree equals its vertex connectivity, eliminates these redundant transmissions and caps the number of copies accumulated at depth $\lambda$ at $(d-1)^{\lambda}$, so the achieved rate depends on the network only through $d$ and $L$.

Since the $\boldsymbol{M}$-related and $\boldsymbol{R}$-related transmissions are sequential on the same edge, the total data transmission blocklength is at most
$$n + nd'(d-1)^{L-1} = n\bigl(d'(d-1)^{L-1} + 1\bigr)$$
symbols per edge.

\paragraph*{Overhead from verification parameter delivery} Let $R_{pd}$ denote the transmission rate achieved by Lemma~\ref{lem:parallel_edge_active_secure}. For each $d$-vertex connected node $j \in \mathcal{V}_d$, the source must securely deliver the tuple 
$$(\alpha_j, \{h_{j_p \to j}^{(M)}\}_{j_p \in \mathcal{P}(j)}, \{h_{j_p \to j}^{(R)}\}_{j_p \in \mathcal{P}(j)}),$$
comprising $2d+1$ symbols. Doing so requires a blocklength overhead of $\frac{2d+1}{R_{pd}}$ per node. Summing over all $|\mathcal{V}_d|$ such nodes gives a total overhead of $|\mathcal{V}_d|\frac{2d+1}{R_{pd}}$ transmissions.

\textit{Achieved key rate.} Combining the data blocklength and the verification overhead, the achieved secure key rate is
\[
  R = \frac{n(d-\ell_o-\ell_e-2\ell_{oe}-z+1)}{n\bigl(d'(d-1)^{L-1}+1\bigr) + |\mathcal{V}_d|\dfrac{2d+1}{R_{pd}}}.
\]

\subsection*{Key-Capacity}
We now formalize the key-capacity based on the established definition. Recall that an instance has key-capacity $C$ if for any $\epsilon > 0$ and $\rho > 0$, there exists a key-code achieving rate $R \ge C - \rho$ with decoding error probability $P_{\mathrm{err}} \le \epsilon$.

For fixed network topology and $R_{pd}>0$, as $n \to \infty$ the rate approaches:
\[
  \lim_{n \to \infty} R = \frac{d-\ell_o-\ell_e-2\ell_{oe}-z+1}{d'(d-1)^{L-1}+1}.
\]
Thus, for any target gap $\rho > 0$, there exists a sufficiently large $n$ such that $R \ge \frac{d-\ell_o-\ell_e-2\ell_{oe}-z+1}{d'(d-1)^{L-1}+1} - \rho$. The decoding error probability satisfies
\[
  P_{\mathrm{err}} \le |\mathcal{V}_d| \!\left(P_{pd} + 2(\ell_e + \ell_{oe})\frac{n}{q}\right).
\]
Once $n$ is fixed, over a sufficiently large field $\mathbb{F}_q$ with $q \gg n$, the term $2|\mathcal{V}_d|(\ell_e+\ell_{oe})\frac{n}{q} \to 0$. Furthermore, $P_{pd} \to 0$ as $\hat{q} \to \infty$ by the proof of Lemma~\ref{lem:parallel_edge_active_secure}. Therefore, for any $\epsilon > 0$, sufficiently large $q$ and $\hat{q}$ ensure $P_{\mathrm{err}} \le \epsilon$. Combining this achievable asymptotic rate with the connectivity bounds of Lemma~\ref{lem:parallel_edge_active_secure}, the network key-capacity under the additive jamming model satisfies:
\[
  C_{\mathrm{add}} \ge \begin{cases} \dfrac{d-\ell_o-\ell_e-2\ell_{oe}-z+1}{d'(d-1)^{L-1}+1}, & \text{if } d>\ell_o+\ell_e+2\ell_{oe},\\[1.5ex] 0, & \text{otherwise,} \end{cases}
\]
and the network key-capacity under the overwrite jamming model satisfies:
\[
  C_{\mathrm{ow}} \ge \begin{cases} \dfrac{d-\ell_o-\ell_e-2\ell_{oe}-z+1}{d'(d-1)^{L-1}+1}, & \text{if } d>\ell_o+2\ell_e+2\ell_{oe},\\[1.5ex] 0, & \text{otherwise.} \end{cases}
\]

\subsection*{Security Proof:}
Let $\ell' = \ell_o + \ell_{oe}$ be the maximum number of nodes the adversary can observe. We aim to show that for any terminal set $\mathcal{T}_i \subseteq \mathcal{T}$ and any adversarial observation set $\beta_{\mathrm{obs}} \in \mathcal{B}$ where $\beta_{\mathrm{obs}} \cap \mathcal{T}_i = \emptyset$, the adversary gains no information about the key, i.e.,
\begin{equation}\small
\begin{aligned}
I\big(\boldsymbol{K}_{\mathcal{T}_i} &;\, \{X_v^{\text{main}}\}_{v\in\beta_{\mathrm{obs}}}, \{X_v^{\text{pd}}\}_{v\in\beta_{\mathrm{obs}}}\big) \\ &= I\big(\boldsymbol{K}_{\mathcal{T}_i};\, \{X_v^{\text{main}}\}_{v\in\beta_{\mathrm{obs}}}\big) \\ & \quad + I\big(\boldsymbol{K}_{\mathcal{T}_i};\, \{X_v^{\text{pd}}\}_{v\in\beta_{\mathrm{obs}}} \mid \{X_v^{\text{main}}\}_{v\in\beta_{\mathrm{obs}}}\big) \\ &= 0,
\end{aligned}
\end{equation}
where $X_v^{\text{main}}$ denotes the data stored at node $v$ during the main transmission phase (node shares, hashes, and $\alpha_v$), and $X_v^{\text{pd}}$ represents the data intercepted at node $v$ during the applications of Lemma~\ref{lem:parallel_edge_active_secure} (the physical transmissions routed through $v$ to deliver verification parameters to other nodes).

To evaluate the conditional term $I\big(\boldsymbol{K}_{\mathcal{T}_i};\, \{X_v^{\text{pd}}\}_{v\in\beta_{\mathrm{obs}}} \mid \{X_v^{\text{main}}\}_{v\in\beta_{\mathrm{obs}}}\big)$, we apply the exact reasoning established in the security proof of Theorem~\ref{T1} in Appendix~\ref{T1Proof}. For each application of Lemma~\ref{lem:parallel_edge_active_secure}, the transmitted payloads are masked by a fresh, independent random matrix that maintains full column rank, acting as a perfect one-time pad. Because this randomness is generated independently for each transmission and is completely independent of the secret key and the main transmission, the intercepted data $\{X_v^{\text{pd}}\}_{v\in\beta_{\mathrm{obs}}}$ reveals zero additional information. Consequently:
$$I\big(\boldsymbol{K}_{\mathcal{T}_i};\, \{X_v^{\text{pd}}\}_{v\in\beta_{\mathrm{obs}}} \mid \{X_v^{\text{main}}\}_{v\in\beta_{\mathrm{obs}}}\big) = 0.$$
This leaves the total information leakage bounded solely by the main transmission phase, $I\big(\boldsymbol{K}_{\mathcal{T}_i};\, \{X_v^{\text{main}}\}_{v\in\beta_{\mathrm{obs}}}\big)$. We now evaluate this term under two distinct topological scenarios based on the nature of the nodes observed by the adversary.

\paragraph*{Case 1: Eavesdropper Observes at Least One Partially-Connected Node}
Because partially-connected nodes may possess significant information regarding $\{\boldsymbol{R}^{(\tau)}\}_{\tau \in [n]}$, we bound the secrecy under the worst-case assumption where the eavesdropper has full knowledge of all matrices $\{\boldsymbol{R}^{(\tau)}\}_{\tau=1}^n$.
\begin{equation}
\small
\begin{aligned}
I(\boldsymbol{K}_{\mathcal{T}_i}; & \{X_v^{\text{main}}:v\in\beta_{\mathrm{obs}}\}) \\ &\le I\big(\boldsymbol{K}_{\mathcal{T}_i}; \{X_v^{\text{main}}:v\in\beta_{\mathrm{obs}}\}, \{\boldsymbol{R}^{(\tau)}\}_{\tau=1}^n \big) \\ &= I\!\big(\boldsymbol{K}_{\mathcal{T}_i}; \{\boldsymbol{M}^{(\tau)}\boldsymbol{V}_{\epsilon}^{(M)}\}_{\tau=1}^n, \{\boldsymbol{R}^{(\tau)}\}_{\tau=1}^n, \\ & \qquad \{\alpha_j\}_{j\in\beta_{\mathrm{obs}}}, \bigl\{\{h_{j_p \to j}^{(M)}\}_{j_p \in \mathcal{P}(j)}\bigr\}_{j \in \beta_{\mathrm{obs}}}, \\ & \qquad \bigl\{\{h_{j_p \to j}^{(R)}\}_{j_p \in \mathcal{P}(j)}\bigr\}_{j \in \beta_{\mathrm{obs}}} \big) \\ &\overset{(a)}{=} I\!\big(\boldsymbol{K}_{\mathcal{T}_i}; \{\boldsymbol{M}^{(\tau)}\boldsymbol{V}_{\epsilon}^{(M)}\}_{\tau=1}^n, \{\boldsymbol{R}^{(\tau)}\}_{\tau=1}^n, \{\alpha_j\}_{j\in\beta_{\mathrm{obs}}} \big) \\ &\overset{(b)}{=} I\!\big(\boldsymbol{K}_{\mathcal{T}_i}; \{\boldsymbol{M}^{(\tau)}\boldsymbol{V}_{\epsilon}^{(M)}\}_{\tau=1}^n, \{\boldsymbol{R}^{(\tau)}\}_{\tau=1}^n \big) \\ &\overset{(c)}{=} I\!\Big( [\boldsymbol{M}^{(1)} \boldsymbol{v}_{\mathcal{T}_i}^{(M)}, \dots, \boldsymbol{M}^{(n)} \boldsymbol{v}_{\mathcal{T}_i}^{(M)}]_{1:d'-\ell'-z+1} \\ & \qquad + [\boldsymbol{R}^{(1)} \boldsymbol{v}_{\mathcal{T}_i}^{(R)}, \dots, \boldsymbol{R}^{(n)} \boldsymbol{v}_{\mathcal{T}_i}^{(R)}]_{1:d'-\ell'-z+1} ; \\ & \qquad \{\boldsymbol{M}^{(\tau)}\boldsymbol{V}_{\epsilon}^{(M)}\}_{\tau=1}^n \;\Big|\; \{\boldsymbol{R}^{(\tau)}\}_{\tau=1}^n \Big) \\ &\overset{(d)}{=} I\!\Big( [\boldsymbol{M}^{(1)} \boldsymbol{v}_{\mathcal{T}_i}^{(M)}, \dots, \boldsymbol{M}^{(n)} \boldsymbol{v}_{\mathcal{T}_i}^{(M)}]_{1:d'-\ell'-z+1} ; \\ & \qquad \{\boldsymbol{M}^{(\tau)}\boldsymbol{V}_{\epsilon}^{(M)}\}_{\tau=1}^n \;\Big|\; \{\boldsymbol{R}^{(\tau)}\}_{\tau=1}^n \Big) \\ &\overset{(e)}{=} I\!\Big( [\boldsymbol{M}^{(1)} \boldsymbol{v}_{\mathcal{T}_i}^{(M)}, \dots, \boldsymbol{M}^{(n)} \boldsymbol{v}_{\mathcal{T}_i}^{(M)}]_{1:d'-\ell'-z+1} ; \\ & \qquad \{\boldsymbol{M}^{(\tau)}\boldsymbol{V}_{\epsilon}^{(M)}\}_{\tau=1}^n \Big) \\ &\overset{(f)}{=} \sum_{\tau=1}^{n} I\!\Big([\boldsymbol{M}^{(\tau)}\boldsymbol{v}_{\mathcal{T}_i}^{(M)}]_{1:d'-\ell'-z+1}; \boldsymbol{M}^{(\tau)}\boldsymbol{V}_{\epsilon}^{(M)} \Big) \\ &\overset{(g)}{=} 0,
\end{aligned}
\end{equation}
where $\boldsymbol{V}_\epsilon^{(M)} \in \mathbb{F}_q^{(d'-z) \times \ell'}$ is the Vandermonde matrix formed by the columns $\{\boldsymbol{v}_j^{(M)}\}_{j \in \beta_{\mathrm{obs}}}$. Step (a) holds because, for all eavesdropped nodes $j \in \beta_{\mathrm{obs}}$, the hashes $\bigl\{\{h_{j_p\to j}^{(M)}\}_{j_p\in\mathcal{P}(j)}\bigr\}_{j\in\beta_{\mathrm{obs}}}$ and $\bigl\{\{h_{j_p\to j}^{(R)}\}_{j_p\in\mathcal{P}(j)}\bigr\}_{j\in\beta_{\mathrm{obs}}}$ can be deterministically computed from $\{\alpha_j\}_{j\in\beta_{\mathrm{obs}}}$, $\{\boldsymbol{M}^{(\tau)}\boldsymbol{V}_{\epsilon}^{(M)}\}_{\tau \in [n]}$, and $\{\boldsymbol{R}^{(\tau)}\}_{\tau \in [n]}$. Step (b) follows because $\{\alpha_j\}_{j\in\beta_{\mathrm{obs}}}$ are independently generated uniform random symbols, independent of all matrices. Step (c) follows by substituting the definition of $\boldsymbol{K}_{\mathcal{T}_i}$ and applying the chain rule, using the fact that $I\!\bigl([\boldsymbol{M}^{(\tau)}\boldsymbol{v}_{\mathcal{T}_i}^{(M)}]_{1:d'-\ell'-z+1} + [\boldsymbol{R}^{(\tau)}\boldsymbol{v}_{\mathcal{T}_i}^{(R)}]_{1:d'-\ell'-z+1}; \{\boldsymbol{M}^{(\tau)}\}_{\tau=1}^n\bigr) = 0$ since $\{\boldsymbol{M}^{(\tau)}\}$ and $\{\boldsymbol{R}^{(\tau)}\}$ are independent. Step (d) follows because conditioning on $\{\boldsymbol{R}^{(\tau)}\}_{\tau=1}^n$ makes the $\boldsymbol{R}$-related key vectors deterministic, allowing them to be dropped. Step (e) follows since $\{\boldsymbol{M}^{(\tau)}\}$ is independent of $\{\boldsymbol{R}^{(\tau)}\}$, so the conditioning can be dropped. Step (f) follows since $\boldsymbol{M}^{(1)},\dots,\boldsymbol{M}^{(n)}$ are independent, making the pairs $\bigl([\boldsymbol{M}^{(\tau)}\boldsymbol{v}_{\mathcal{T}_i}^{(M)}]_{1:d'-\ell'-z+1}, \boldsymbol{M}^{(\tau)}\boldsymbol{V}_\epsilon^{(M)}\bigr)$ independent across $\tau$. Step (g) follows from \cite[Lemma~3.2]{eavsPaper} and the fact that $\boldsymbol{v}_{\mathcal{T}_i}^{(M)}$ is linearly independent of the columns of $\boldsymbol{V}_\epsilon^{(M)}$.

\paragraph*{Case 2: Eavesdropper Observes Only Fully-Connected Nodes}
Here, the eavesdropper observes no partially-connected nodes. In the worst-case scenario, the eavesdropper gains full knowledge of all matrices $\{\boldsymbol{M}^{(\tau)}\}_{\tau=1}^n$.
\begin{equation}
\small
\begin{aligned}
I(\boldsymbol{K}_{\mathcal{T}_i}; & \{X_v^{\text{main}}:v\in\beta_{\mathrm{obs}}\}) \\ &\le I\big(\boldsymbol{K}_{\mathcal{T}_i}; \{X_v^{\text{main}}:v\in\beta_{\mathrm{obs}}\}, \{\boldsymbol{M}^{(\tau)}\}_{\tau=1}^n \big) \\ &= I\!\big(\boldsymbol{K}_{\mathcal{T}_i}; \{\boldsymbol{R}^{(\tau)}\boldsymbol{V}_{\epsilon}^{(R)}\}_{\tau=1}^n, \{\boldsymbol{M}^{(\tau)}\}_{\tau=1}^n, \\ & \qquad \{\alpha_j\}_{j\in\beta_{\mathrm{obs}}}, \bigl\{\{h_{j_p \to j}^{(M)}\}_{j_p \in \mathcal{P}(j)}\bigr\}_{j \in \beta_{\mathrm{obs}}}, \\ & \qquad \bigl\{\{h_{j_p \to j}^{(R)}\}_{j_p \in \mathcal{P}(j)}\bigr\}_{j \in \beta_{\mathrm{obs}}} \big) \\ &\overset{(a)}{=} I\!\big(\boldsymbol{K}_{\mathcal{T}_i}; \{\boldsymbol{R}^{(\tau)}\boldsymbol{V}_{\epsilon}^{(R)}\}_{\tau=1}^n, \{\boldsymbol{M}^{(\tau)}\}_{\tau=1}^n, \{\alpha_j\}_{j\in\beta_{\mathrm{obs}}} \big) \\ &\overset{(b)}{=} I\!\big(\boldsymbol{K}_{\mathcal{T}_i}; \{\boldsymbol{R}^{(\tau)}\boldsymbol{V}_{\epsilon}^{(R)}\}_{\tau=1}^n, \{\boldsymbol{M}^{(\tau)}\}_{\tau=1}^n \big) \\ &\overset{(c)}{=} I\!\Big( [\boldsymbol{M}^{(1)} \boldsymbol{v}_{\mathcal{T}_i}^{(M)}, \dots, \boldsymbol{M}^{(n)} \boldsymbol{v}_{\mathcal{T}_i}^{(M)}]_{1:d'-\ell'-z+1} \\ & \qquad + [\boldsymbol{R}^{(1)} \boldsymbol{v}_{\mathcal{T}_i}^{(R)}, \dots, \boldsymbol{R}^{(n)} \boldsymbol{v}_{\mathcal{T}_i}^{(R)}]_{1:d'-\ell'-z+1} ; \\ & \qquad \{\boldsymbol{R}^{(\tau)}\boldsymbol{V}_{\epsilon}^{(R)}\}_{\tau=1}^n \;\Big|\; \{\boldsymbol{M}^{(\tau)}\}_{\tau=1}^n \Big) \\ &\overset{(d)}{=} I\!\Big( [\boldsymbol{R}^{(1)} \boldsymbol{v}_{\mathcal{T}_i}^{(R)}, \dots, \boldsymbol{R}^{(n)} \boldsymbol{v}_{\mathcal{T}_i}^{(R)}]_{1:d'-\ell'-z+1} ; \\ & \qquad \{\boldsymbol{R}^{(\tau)}\boldsymbol{V}_{\epsilon}^{(R)}\}_{\tau=1}^n \;\Big|\; \{\boldsymbol{M}^{(\tau)}\}_{\tau=1}^n \Big) \\ &\overset{(e)}{=} I\!\Big( [\boldsymbol{R}^{(1)} \boldsymbol{v}_{\mathcal{T}_i}^{(R)}, \dots, \boldsymbol{R}^{(n)} \boldsymbol{v}_{\mathcal{T}_i}^{(R)}]_{1:d'-\ell'-z+1} ; \\ & \qquad \{\boldsymbol{R}^{(\tau)}\boldsymbol{V}_{\epsilon}^{(R)}\}_{\tau=1}^n \Big) \\ &\overset{(f)}{=} \sum_{\tau=1}^{n} I\!\Big([\boldsymbol{R}^{(\tau)}\boldsymbol{v}_{\mathcal{T}_i}^{(R)}]_{1:d'-\ell'-z+1}; \boldsymbol{R}^{(\tau)}\boldsymbol{V}_{\epsilon}^{(R)} \Big) \\ &\overset{(g)}{=} 0,
\end{aligned}
\end{equation}
where $\boldsymbol{V}_\epsilon^{(R)} \in \mathbb{F}_q^{d' \times \ell'}$ is the Vandermonde matrix formed by the eavesdropped nodes. Step (a) holds because, for all eavesdropped nodes $j \in \beta_{\mathrm{obs}}$, the set of hashes $\bigl\{\{h_{j_p\to j}^{(M)}\}_{j_p\in\mathcal{P}(j)}\bigr\}_{j\in\beta_{\mathrm{obs}}}$ and $\bigl\{\{h_{j_p\to j}^{(R)}\}_{j_p\in\mathcal{P}(j)}\bigr\}_{j\in\beta_{\mathrm{obs}}}$ can be deterministically computed from $\{\alpha_j\}_{j\in\beta_{\mathrm{obs}}}$, $\{\{\boldsymbol{M}^{(\tau)}\boldsymbol{v}_j\}_{j\in \beta_{\mathrm{obs}}}\}_{\tau \in [n]}$ already present in $\{\boldsymbol{R}^{(\tau)}\boldsymbol{V}_{\epsilon}^{(R)}\}_{\tau \in [n]}$, and $\{\boldsymbol{M}^{(\tau)}\}_{\tau \in [n]}$. Step (b) follows because $\{\alpha_j\}_{j \in \beta_{\mathrm{obs}}}$ are independently generated uniform random symbols, sharing no statistical correlation with the key or matrices. Step (c) follows by substituting the definition of $\boldsymbol{K}_{\mathcal{T}_i}$ and applying the chain rule of mutual information. Note that since $\{\boldsymbol{M}^{(\tau)}\}_{\tau=1}^n$ and $\{\boldsymbol{R}^{(\tau)}\}_{\tau=1}^n$ are all generated independently, $I\!\Big( [\boldsymbol{M}^{(1)} \boldsymbol{v}_{\mathcal{T}_i}^{(M)}, \dots, \boldsymbol{M}^{(n)} \boldsymbol{v}_{\mathcal{T}_i}^{(M)}]_{1:d'-\ell'-z+1} + [\boldsymbol{R}^{(1)} \boldsymbol{v}_{\mathcal{T}_i}^{(R)}, \dots, \boldsymbol{R}^{(n)} \boldsymbol{v}_{\mathcal{T}_i}^{(R)}]_{1:d'-\ell'-z+1} ;\, \{\boldsymbol{M}^{(\tau)}\}_{\tau=1}^n \Big) = 0$. Step (d) follows because conditioning on $\{\boldsymbol{M}^{(\tau)}\}_{\tau=1}^n$ makes the $\boldsymbol{M}$-related key vectors deterministic, allowing them to be subtracted and removed without altering the mutual information. Step (e) follows since the symmetric random matrices $\{\boldsymbol{R}^{(\tau)}\}$ are independent of $\{\boldsymbol{M}^{(\tau)}\}$, allowing the conditioning to be dropped. Step (f) follows since $\boldsymbol{R}^{(1)},\dots,\boldsymbol{R}^{(n)}$ are independently and uniformly generated over $\mathbb{F}_{q}^{d' \times d'}$, and thus the pairs $\bigl([\boldsymbol{R}^{(\tau)}\boldsymbol{v}_{\mathcal{T}_i}^{(R)}]_{1:d'-\ell'-z+1}, \boldsymbol{R}^{(\tau)}\boldsymbol{V}_\epsilon^{(R)}\bigr)$ are independent across $\tau$. Step (g) follows from \cite[Lemma 3.2]{eavsPaper} and the fact that the vector $\boldsymbol{v}_{\mathcal{T}_i}^{(R)}$ is independent of the column space of $\boldsymbol{V}_\epsilon^{(R)}$. Consequently, regardless of the observed node types, the scheme guarantees perfect secrecy against the active adversary.
\end{proof}

\bibliographystyle{IEEEtran}
\bibliography{references}

\end{document}